\newcommand\Z{\mathbb{Z}}
\newcommand\N{\mathbb{N}}
\newcommand\restr[2]{{
  \left.\kern-\nulldelimiterspace 
  #1 
  \vphantom{\big|} 
  \right|_{#2} 
  }}
\title{Freezing, Bounded-Change and Convergent Cellular Automata\thanks{This work has been supported by ECOS Sud -- CONICYT project C12E05}}
\author{Nicolas Ollinger\affiliationmark{1} \and Guillaume Theyssier\affiliationmark{2}}
\affiliation{
Univ. Orl\'{e}ans, INSA Centre Val de Loire, LIFO EA 4022, Orl\'{e}ans, France\\
Institut de Mathématiques de Marseille (Université Aix Marseille, CNRS, Centrale Marseille), France
}
\keywords{freezing cellular automata, convergent cellular automata, complexity, computability}
\newtheorem{definition}{Definition}
\newtheorem{theorem}{Theorem}
\newtheorem{corollary}{Corollary}
\newtheorem{proposition}{Proposition}
\newtheorem{fact}{Fact}
\newtheorem{example}{Example}
\newtheorem{remark}{Remark}
\newtheorem{lemma}{Lemma}
\tikzset{>=latex}
\begin{document}
\publicationdetails{24}{2022}{1}{2}{5734}

\maketitle
\begin{abstract}
  This paper studies three classes of cellular automata from a computational point of view: freezing cellular automata where the state of a cell can only decrease according to some order on states, cellular automata where each cell only makes a bounded number of state changes in any orbit, and finally cellular automata where each orbit converges to some fixed point. Many examples studied in the literature fit into these definitions, in particular the works on cristal growth started by S. Ulam in the 60s. The central question addressed here is how the computational power and computational hardness of basic properties is affected by the constraints of convergence, bounded number of change, or local decreasing of states in each cell. By studying various benchmark problems (short-term prediction, long term reachability, limits) and considering various complexity measures and scales (LOGSPACE vs. PTIME, communication complexity, Turing computability and arithmetical hierarchy) we give a rich and nuanced answer: the overall computational complexity of such cellular automata depends on the class considered (among the three above), the dimension, and the precise problem studied. In particular, we show that all settings can achieve universality in the sense of Blondel-Delvenne-K\r{u}rka, although short term predictability varies from NLOGSPACE to P-complete. Besides, the computability of limit configurations starting from computable initial configurations separates bounded-change from convergent cellular automata in dimension~1, but also dimension~1 versus higher dimensions for freezing cellular automata. Another surprising dimension-sensitive result obtained is that nilpotency becomes decidable in dimension~ 1 for all the three classes, while it stays undecidable even for freezing cellular automata in higher dimension.
\end{abstract}

\section{Introduction and Formal Setting}

Cellular automata (CA for short) are a well-studied model appearing in different research areas under different points of view. It is widely used as a modeling tool of fundamental physical phenomena \cite{chopard} or high-level phenomena from other disciplines \cite{alekseyevskaya92,Nagel92,Fuentes}. It is a rich class of symbolic dynamical systems\cite{hedlund} extensively studied both from the topological \cite{kurkabook} and ergodic point of view \cite{Pivato09}. It is a computational model very close to Turing machines but with a massively parallel feature: this translates into a specific algorithmic complexity theory \cite{Terrier2012}, into universal computational power and existence of universal objects \cite{OllingerUnivhistory}, but also into the ubiquity of undecidability and computational hardness in most of the properties and problems concerning them \cite{kari94-2,delacourt11,KariOllinger08,ccca}. These points of view can have interactions. For instance, the computational power of several classes of CA defined by dynamical features inspired by physics has been studied \cite{MoritaHarao,MoreiraConserve,GajardoKM12}. 

Here we follow that line of research and focus on the overall computational complexity of a class of CA defined by a very simple dynamical property: point-wise convergence. More precisely, we study three classes which form a hierarchy by inclusion: \emph{freezing CA} (\textit{i.e.} CA which are locally decreasing according to some order on states), \emph{bounded-change CA} (\textit{i.e.} CA with a global bound on the number of state changes a cell can make in any orbit), and the general class of point-wise convergent CA. One of our inspiration is the profusion of examples in the literature, from the seminal works of S. Ulam \cite{ulam} to the recent and fast development of self-assembly models \cite{Winslow16}, which all share the convergence property: a larger and larger zone of the configuration gets frozen by the dynamics while changes continue outside the zone. On the other hand, some previous works explicitly studied the class of freezing CA \cite{GolOlThey15,BeckerMOT18} or bounded-change CA~\cite{vollmar81} and established that universal computation is possible in any dimension but ``slowed down'' by the bounded-change constraint in dimension 1. This was done by studying the short-term prediction problem on one hand, and by giving an explicit encoding of Minsky machines into such CA on the other hand. 

The present paper is an extension\footnote{In particular, we give complete detailed proofs of all results of~\cite{GolOlThey15}, some of which were only sketched in this preliminary paper.} of~\cite{GolOlThey15} with two new ingredients. First, we extend the benchmark problems used to evaluate computational complexity and include long-term topological reachability as in the notion of universality of~\cite{DelvenneKB06}, but also limit fixed points and their dependence on the initial configuration. Second, we consider the whole class of point-wise convergent CA through the same approach, which to our knowledge is completely new. Our main results show that the overall computational complexity varies both within the hierarchy of the three classes and with the dimension considered (see Section~\ref{sec:recap} for a synthetic view). In particular, convergent CA are generally more powerful than bounded-change CA, while freezing and bounded-change CA have the same overall computational complexity. However, we show that even the most constrained setting (freezing 1D CA) can achieve universality in the sense of~\cite{DelvenneKB06} (Theorem~\ref{thm:freezecyreach}). This is counterbalanced by the fact that such CA cannot produce uncomputable limit points starting from computable initial configurations (Theorem~\ref{thm:computablelimits}), while convergent CA in dimension 1 can (Theorem~\ref{thm:uncomputablelimit}). Concerning dimension sensitiveness, we show that various aspects are affected: the capacity to efficiently compute or the capacity to produce uncomputable limit points from computable initial configuration for bounded-change CA, and the decidability of nilpotency for all the three classes.

\paragraph{Organization of the paper.}
The classical definitions and formal setting for CA in general is recalled in the next paragraph. In section~\ref{sec:maindefs} we introduce the three classes of CA considered in this paper together with several examples. In section~\ref{sec:dynamics} we focus on dynamical properties of such CA including nilpotency. Section~\ref{sec:ccu} is devoted to the computational complexity of two canonical problems of CA theory: short-term prediction and long term reachability. In section~\ref{sec:limits} we study problems specific to convergent CA concerning limit fixed-point reached from a given configuration. Finally, section~\ref{sec:recap} gives a brief recap of our main results.

\paragraph{Formal setting.}
For any finite set $Q$ (the alphabet or set of states) and positive integer $d$ (the dimension), we consider the symbolic space ${Q^{\Z^d}}$, \textit{i.e.} the set of maps called \emph{configurations} giving a state from $Q$ to each position in $\Z^d$. The state of configuration ${c\in Q^{\Z^d}}$ at position ${z\in\Z^d}$ will be denoted either ${c(z)}$ or ${c_z}$.
\newcommand\zd{{\Z^d}}

\newcommand\ball[1]{\mathcal{B}({#1})}
For any ${n\in\N}$, we define $\ball{n}$ as the set of positions of ${\Z^d}$ of norm (maximum of absolute values of coordinates) at most $n$: 
\[\ball{n} = \{z\in\Z^d : \|z\|_\infty\leq n\}.\]
Then for any \emph{bounded configuration} ${u\in Q^{\ball{n}}}$ of \emph{radius $n$}, we define the \emph{cylinder set} $[u]$ centered on cell~0 by:
\[[u] = \{c\in Q^{\Z^d} : \forall z\in\ball{n}, c_z=u_z\}.\]

In particular, every state $h$ can be considered as a bounded configuration of radius $0$ inducing a cylinder $[h]$ of every configuration $c$ with $c_0=h$.

The cylinder sets can be chosen as a base of open sets of the space $Q^\zd$ endowing it with a compact (and totally disconnected) topology \cite{kurkabook}. Equivalently, it can be defined by the following metric: 
\[\delta(c,c') = 2^{-\min \{\|z\|_\infty : c_z\neq c'_z\}}\]
often called the Cantor metric.

A cellular automaton of dimension $d$ and state set $Q$ is a map $F$
acting on the set of configuration $Q^{\zd}$ in a continuous and
translation invariant way. Equivalently (Curtis-Lyndon-Hedlund theorem \cite{hedlund}), it can
be defined locally by a neighborhood $V$ (a finite subset of $\zd$) and a
local transition function ${f:Q^V\rightarrow Q}$ as follows:
\[\forall z\in\zd,\quad F(c)_z = f\bigl(\restr{c}{z+V}\bigr)\]
where ${\restr{c}{z+V}}$ is the map ${z'\in V\mapsto c_{z+z'}}$.
The \emph{radius} of $F$ is the smallest integer $r$ such that ${V\subseteq\ball{r}}$ where $V$ is some neighborhood for which there is a local map $f_V:Q^V\rightarrow Q$ defining $F$ as above. $F$ induces an action on finite patterns as follows. For any ${n\in\N}$ and any ${u\in Q^{\ball{n+r}}}$, ${F(u)}$ is the finite pattern ${v\in Q^{\ball{n}}}$ obtained by application of $f$ on $u$ at each position from $\ball{n}$, \emph{i.e.\/} such that 
\[\forall c\in[u], F(c)\in[v].\]
We will frequently use the so called von Neumann neighborhood which is the following subset of $\Z^2$: ${V=\{(x,y) : |x|+|y|\leq 1\}}$. 

\newcommand\PTIME{\mathrm{PTIME}}
\newcommand\NL{\mathrm{NLOGSPACE}}
\newcommand\LOG{\mathrm{LOGSPACE}}

Finally, we will use the following standard complexity classes:
\begin{itemize}
\item $\PTIME$ is the set of problems which can be solved by a deterministic Turing machine in polynomial time;
\item $\LOG$ is the set of problems which can be solved by a deterministic Turing machine in logarithmic space;
\item $\NL$ is the set of problems which can be solved by a non-deterministic Turing machine in logarithmic space.
\end{itemize}

Without explicit mention, and in particular when speaking about P-completeness, we consider LOGSPACE reductions.

\section{Main Definitions, with Examples}
\label{sec:maindefs}

\begin{definition}
\label{def:freezing}
A CA $F$ is a \emph{freezing CA} if, for some (partial) order $\leq$ on states, the state of any cell can only decrease, \textit{i.e.\/}
  \[F(c)_z\leq c_z\]
  for any configuration $c$ and any cell $z$.
\end{definition}

\newcommand\screl[1]{\rightarrow_{#1}}
For any $F$ and states ${q\neq q'}$, denote by ${q \screl{F} q'}$ the fact that in some context a cell can change from state $q$ to $q'$ in one step. If $F$ is a freezing CA, then the transitive and reflexive closure of ${\screl{F}}$ is a (partial) order satisfying the condition of Definition~\ref{def:freezing}. Conversely, if ${\screl{F}}$ is acyclic then $F$ is freezing for the order given by the reflexive and transitive closure of ${\screl{F}}$. This gives a polynomial time algorithm (in the size of the transition table, the list of every possible output of a local map defining $F$) to test whether a given CA is freezing: build $\screl{F}$ and test its acyclicity.

\begin{fact}
  \label{fact:freezingdecidable}
  There is an algorithm to decide whether a given CA is freezing that runs in time ${O(n^{|V|})}$ where $n$ is its number of states and $V$ is its neighborhood.
\end{fact}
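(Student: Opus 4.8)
The plan is to turn the two closure observations sketched just above the statement into a clean decision procedure plus a correctness proof, and then to account for the running time. First I would build the directed graph $G$ on vertex set $Q$ whose arcs are exactly the pairs with $q \screl{F} q'$. Concretely, assuming that the origin belongs to $V$ (the standard convention, which guarantees that the center state is a function of the neighborhood reading), I scan the transition table once and, for each local pattern $p\in Q^V$, read its center state $q = p(0)$ and its image $q' = f(p)$, adding the arc $q\to q'$whenever $q'\neq q$. The algorithm then runs a standard depth-first cycle detection (equivalently, attempts a topological sort) on $G$ and accepts if and only if $G$ is acyclic.

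For correctness I would establish that $F$ is freezing if and only if $G$ has no cycle. For the forward direction, suppose $F$ is freezing for some order $\leq$; any arc $q\to q'$ of $G$ arises from a configuration $c$ and a cell $z$ with $c_z = q$ and $F(c)_z = q'\neq q$, so Definition~\ref{def:freezing} forces $q' < q$ strictly. Every arc thus strictly decreases $\leq$, so no cycle can exist. For the converse, if $G$ is acyclic then its reflexive and transitive closure $\preceq$ is a genuine partial order, acyclicity being exactly what yields antisymmetry. For any $c$ and any $z$, either $F(c)_z = c_z$, whence $F(c)_z \preceq c_z$ by reflexivity, or $F(c)_z \neq c_z$, in which case $c_z\to F(c)_z$ is an arc of $G$ and again $F(c)_z \preceq c_z$; hence $F$ is freezing for $\preceq$.

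For the complexity bound I would observe that the table has exactly $n^{|V|}$ entries, each processed in constant time (or time $O(|V|)$, which is absorbed), so $G$ is built in time $O(n^{|V|})$. The resulting graph has at most $n$ vertices and, crucially, at most $n^{|V|}$ arcs, namely one per scanned pattern, so depth-first cycle detection runs in time $O(n + n^{|V|}) = O(n^{|V|})$ since $n^{|V|}\geq n$. Summing the two phases gives the claimed total of $O(n^{|V|})$.

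The step requiring the most care is the complexity accounting rather than the logic, which is essentially immediate from the two closure facts. One must check that cycle detection is not charged $O(n^2)$ from a naive adjacency-matrix viewpoint: this is handled by bounding the arc count by the number of table entries, which matters precisely when $|V|$ is small. One must also justify the assumption $0\in V$ so that the center state $p(0)$ is well defined; this is the standard modeling convention and the only genuinely delicate point, since inflating the neighborhood to force it would worsen the exponent.
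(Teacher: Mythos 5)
Your proposal is correct and follows essentially the same route as the paper: build the relation $\screl{F}$ from the transition table and test it for acyclicity, with the two closure observations giving the equivalence between freezing and acyclicity. The extra care you take with the convention $0\in V$ and with bounding the arc count is welcome but does not change the argument.
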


Our purpose is to study freezing cellular automata as a class and give general results for it, as started in \cite{GolOlThey15}. However, many particular freezing cellular automata have already been considered in the literature. We give below several examples showing the variety of possible behaviors within this class (see Figure~\ref{fig:examples_freezing}). 

\begin{figure}[tp]
  \centering
  \subfigure[][]{%
    \label{fig:ex3-a}%
    \includegraphics[width=3cm]{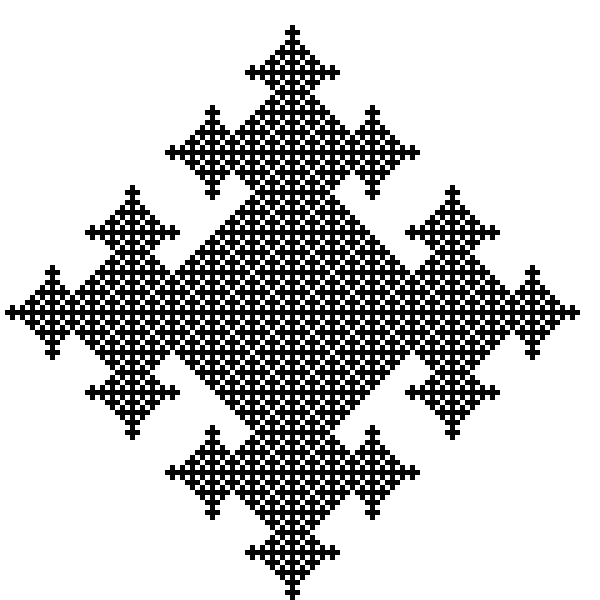}}%
  \hspace{8pt}%
  \subfigure[][]{%
    \label{fig:ex3-b}%
    \includegraphics[width=3cm]{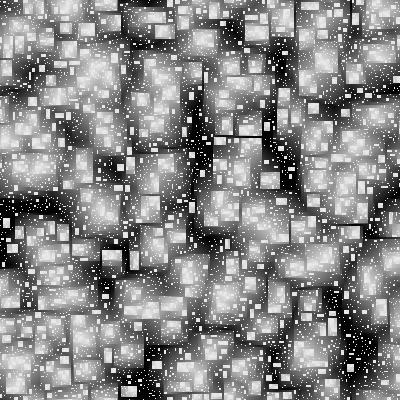}}%
  \hspace{8pt}%
  \subfigure[][]{%
    \label{fig:ex3-d}%
    \includegraphics[width=3cm]{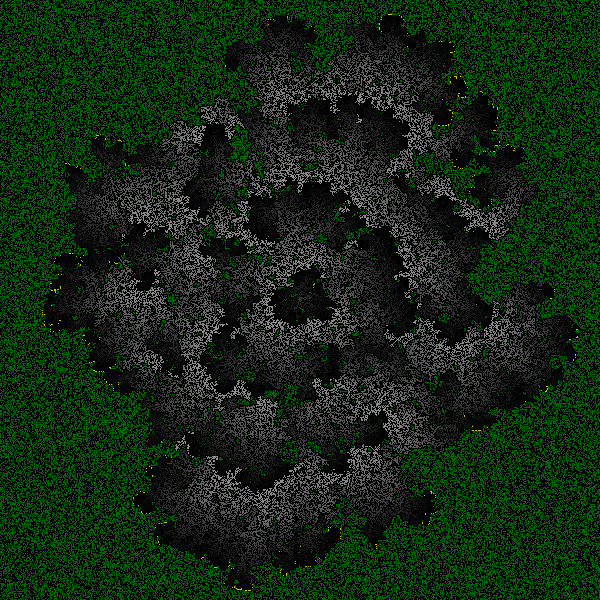}}%
  \hspace{8pt}%
  \subfigure[][]{%
    \label{fig:ex3-c}%
    \includegraphics[width=3cm]{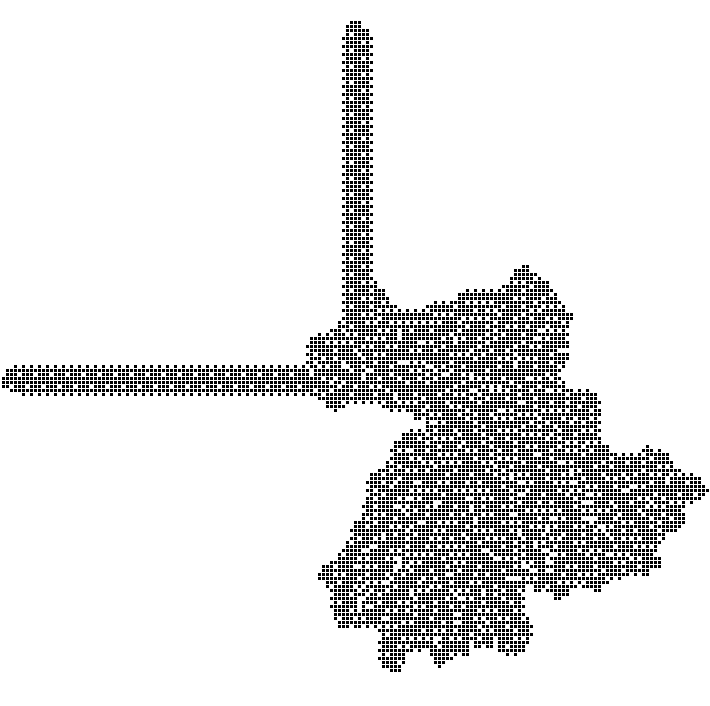}}%
  \caption[A set of four subfigures.]{Some examples of freezing CA studied in the literature:
    \subref{fig:ex3-a} configuration of Ulam's rule from Example~\ref{ex:ulam} after some steps starting from a single 1 in a sea of 0s;
    \subref{fig:ex3-b} spatio-temporal representation of the standard bootstrap percolation model (Example~\ref{ex:bootstrap}) where shades of gray represent time at which a cell turned to 1 (the whiter the sooner);
    \subref{fig:ex3-d} a configuration of a model of forest fire as in Example~\ref{ex:sir} (fire is represented in yellow, trees in green, and empty spaces in black and hashes in shades of gray indictaing the time modulo some larger integer at which the fire occupied that position for last time);
    \subref{fig:ex3-c} typical configuration of rule 'life without death' from Example~\ref{ex:lifewithoutdeath}}
  \label{fig:examples_freezing}
\end{figure}

\begin{example}
  \label{ex:ulam}
  It turns out that one of the very first defined CA is a freezing CA. Indeed in \cite{ulam}, as an attempt to study models of cristal growth, the following CA over alphabet ${Q=\{0,1\}}$ is defined: 
  \[F(c)_z =
  \begin{cases}
    0 &\text{ if $c_z=0$ and ${\#_1(c_{|z+V})\neq 1}$}\\
    1 &\text{ else,}
  \end{cases}
  \]
  where $V$ is the von Neumann neighborhood and $\#_1(P)$ denotes the number of occurrences of $1$ in the finite pattern $P$.
\end{example}

\begin{example}
  \label{ex:bootstrap}
  The threshold growth models in 2D \cite{Gravner98} are CA with
  states ${0,1}$ where $0$ becomes $1$ if the number of $1$s in the
  neighborhood is above some threshold, and $1$s stay unchanged
  forever. They were in particular considered as theoretical models of
  bootstrap percolation and a lot of work was dedicated to the
  experimental and rigorous analysis of the phase transitions they
  exhibit (see for instance~\cite{Holroyd}). The fact that these
  examples are monotone (with respect to the order extended to
  configurations) is to be taken into account when studying their computational complexity \cite{GolesMMO17,BeckerMOT18}.
\end{example}

\begin{example}
  \label{ex:lifewithoutdeath}
  By taking any CA on alphabet $Q$ and endowing $Q$ with some order $\leq$, one can define a freezing CA as follows:
  \[F_\leq(c)_z = \min_\leq(c_z, F(c)_z).\]
  There is a priori no relation to expect between $F$ and $F_\leq$,
  but this construction is a source of examples. For instance a freezing
  2D CA with 2 states called ``life without death'' received a lot of
  attention and it was in particular shown that the problem of predicting
  such a CA for a finite amount of  time is $P$-complete \cite{GriMoo96}:
  it is $F_\leq$ where $F$ is the ``game of life'' CA.
\end{example}

\begin{example}
  \label{ex:sir}
  SIR epidemic propagation models \cite{Fuentes}, in their simplest form, are $3$ states freezing CA: a
  \textbf{S}usceptible person can become \textbf{I}nfected, and
  then \textbf{R}ecover and acquire immunity so that it neither
  goes back to susceptible or infected states. A similar CA was also considered to model\footnote{Technically, the CA contains a probabilistic part describing a rate of tree growth. Setting this parameter to $0$ gives a deterministic freezing CA.} forest fire propagation \cite{forestfire}. These examples are in fact multi-state versions of the threshold rules from Example~\ref{ex:bootstrap}.
\end{example}

\begin{example}
  \label{ex:atam}
  Self-assembly tilings \cite{Patitz14,Winslow16}, and more precisely the so called abstract tile assembly model (aTAM), consists in a set of Wang tiles where each edge has a color and a bond strength. They are classically studied as asynchronous non-deterministic systems where tiles are added one by one to the current assembly, starting from a single tile (the seed) and according to the following rule: a tile can be added if the sum of strengths values of edges with a color matching the corresponding neighbor is above some threshold. An important property of this model (no longer true for some of its generalizations like mismatch-free \cite{BeckerM15} or negative glue \cite{DotyPS11,SingleNegative}), is that if a tile can be added at some position in some context, then it can also be added in the context where new tiles are added at empty neighboring positions. Fomally, an aTAM system can be seen as follows:
  \begin{itemize}
  \item $Q = T\cup \{\epsilon\}$ is the alphabet, endowed with the order $\preceq$ such that ${t\preceq\epsilon}$ for all ${t\in T}$;
  \item $s\in T$ is the seed tile;
  \item $U=V\setminus\{(0,0)\}$ where $V$ is the von Neumann neighborhood $V$;
  \item ${R \subseteq Q^U\times Q}$ is the local compatibility rule\footnote{Here we forget about the details of the aTAM model involving bond strength and threshold, we simply observe that it can be represented as a local compatibility relation. Of course not all such relations correspond to an aTAM model.} which is monotone with respect to $\preceq$ (extended to $Q^U$), \textit{i.e.} if ${(m,t)\in R}$ and ${m'\preceq m}$ then ${(m',t)\in R}$, and verifies ${(m,\epsilon)\in R}$ for all ${m\in Q^U}$ (not adding a tile is always possible).
  \end{itemize}
  For two configurations ${c,c'\in Q^{\Z^2}}$ we write ${c\rightarrow c'}$ if $c$ and $c'$ differ at position $z$, ${c_z=\epsilon}$ and ${(c_{|z+U},c'_z)\in R}$. We denote by ${\rightarrow_R^\ast}$ the reflexive and transitive closure of $\rightarrow_R$. An \emph{assembly} is a configuration $c$ with ${c_0\rightarrow_R^\ast c}$ where $c_0$ is the configuration everywhere equal to $\epsilon$ except in position $(0,0)$ where it is $s$. A \emph{terminal assembly} is a configuration $c$ such that ${c_0\rightarrow_R^\ast c}$ and for any ${c'}$ with ${c\rightarrow_R^\ast c'}$ we have ${c=c'}$.
  From such a system we can define a freezing cellular automaton $F_R$ with ordered alphabet ${(Q,\preceq)}$ and neighborhood $V$ as follows: 
  \[F_R(c)_z =
    \begin{cases}
      c_z &\text{ if $c_z\neq\epsilon$},\\
      \min \{t : (c_{|z+U},t)\in R\} &\text{ else.}\\
    \end{cases}
  \]
  In particular, for any finite configuration $c$, ${c\rightarrow_R^\ast F_R(c)}$. We will come back to this simple translation of aTAM systems to freezing CAs in subsection~\ref{sec:freezatam}.
\end{example}

In a freezing CA, a cell can change at most a finite number of time
during its evolutions, precisely at most $n-1$ times for a CA with $n$
states. This property is our second main definition.

\begin{definition}
  A CA $F$ is \emph{k-change} if the number of state change of any cell in any orbit is at most $k$, formally: 
  \[\forall c\in Q^{\zd},\forall z\in \zd : |\{t\in\N : F^{t+1}(c)_z \not= F^t(c)_z\}|\leq k.\]
  A CA $F$ is \emph{bounded-change} if it is $k$-change for some $k$.
\end{definition}

Bounded-change CA  have been studied previously \cite{vollmar81} as
language recognizers. CA with bounded communications have also been considered
 \cite{KutribM10a} (again from the language recognition point of view) and are very close: after a bounded communication a cell do no longer depends on its neighbors and must enter a temporal cycle bounded by the number of states, which means that some power $F^m$ of the CA $F$ is bounded-change --- simply ensure that transitions without communication in $F^m$ are the identity.

A CA $F$ is \emph{nilpotent} if there is ${t>0}$ such that ${F^t}$ is a constant map. Any nilpotent CA is bounded-change.

\begin{example}
  A bounded-change CA is not necessarily freezing. For instance, the
  following one-dimensional nilpotent CA over $\{0,1,2\}$ with neighborhood ${\{0,1\}}$ is not freezing:
  \[f(a,b) =
  \begin{cases}
    1 &\text{ if }b=0\\
    2 &\text{ else.}
  \end{cases}
  \]
  because both ${2\rightarrow 1}$ and ${1\rightarrow 2}$ are possible
  state changes in one cell (but after 2 steps all cells are in state
  $2$).
\end{example}

\newcommand\fzt[1]{\tau(#1)}
If $F$ is bounded-change and $c$ is any configuration then for any cell ${z\in\zd}$ there is some time $\fzt{c,z}$ such that the state of cell $z$ does no longer change after time $\fzt{c,z}$ when starting from $c$. Then for any finite set of cells ${E\subseteq\zd}$, the states of cells in $E$ does no longer change after time ${\max_{z\in E}\fzt{c,z}}$. Said differently, the sequence ${(F^t(c))_{t\in\N}}$ is convergent. This property is our third main definition.

\begin{definition}
  A CA $F$ is \emph{convergent} if, for any configuration $c$, the sequence ${(F^t(c))_{t\in\N}}$ is convergent. In this case, the limit configuration reached is denoted by: 
  \[F^\omega(c) = \lim_{t\rightarrow\infty} F^t(c).\]
  Moreover, given an initial configuration $c$ and a cell ${z\in\zd}$, we call \emph{freezing time} the first time ${\fzt{c,z}}$ after which cell $z$ no longer changes its state, formally: 
  \[\fzt{c,z} = \min \{t\in\N : \forall t'\geq 0, F^{t+t'}(c)_z = F^t(c)_z\}.\]
\end{definition}

The next example gives a construction technique to produce convergent CA that are not bounded-change. The idea is to divide the configuration into working zones and ensure the presence of at most one working head per zone. Each head permanently bounces between the two extremities of its zone and shrinks the zone by one cell at each bounce. With this behavior, each finite working zone gets completely shrinked in finite time and converges to a fixed point. There is no way to check locally that a working zone is finite, but there is no problem of convergence with infinite zones because the head will escape towards infinity after at most one bounce and leaves a fixed point behind. Additionally, the head can do some computation at each pass without breaking the convergence property. In the following example, working zones contain two layers of states from a given CA $F$ interpreted as old state and new state respectively. The head sequentially updates these states to simulate a parallel synchronous application of $F$ at each pass.

\newcommand\convzz[1]{\mathcal{Z}_{#1}}
\begin{example}[Shrinking zone construction]
  \label{ex:szone}
  Let $F$ be any 1D CA on alphabet $Q$ with radius $1$ and local map ${\delta : Q^3\rightarrow Q}$. We define ${\convzz{F}}$ on alphabet ${R = \{b,b_+,e\}\cup Q'}$ with ${Q'=Q\times Q\times\{\leftarrow,\rightarrow,l,r\}}$ and radius $1$ as follows:
  \begin{itemize}
  \item $e$, the \emph{error state}, is a spreading state: any cell with $e$ in its neighborhood turns into state $e$; a configuration $c$ is \emph{valid} if $e$ never appears in its orbit;
  \item  $b$, the blank state, never changes except in presence of the error state; $b_+$ becomes $b$ except in presence of the error state; 
  \item a maximal connected component of cells in state $Q'$ is a \emph{working zone};
  \item in a working zone, patterns of the form ${(x,y,r)(x',y',l)}$, or ${(x',y',l)(x,y,r)}$, or ${(x,y,z)(x',y',z')}$ with ${\{z,z'\}\subseteq\{\leftarrow,\rightarrow\}}$, or ${(x,y,r)(x',y',z)}$ or ${(x,y,z)(x',y',l)}$ with ${z\in \{\leftarrow,\rightarrow\}}$, are forbidden and generate an $e$ state when detected; therefore in a valid configuration and in each working zone there is at most one occurrence of a state of the form ${(x,y,\{\leftarrow,\rightarrow\})}$ called the \emph{head};
  \item a cell without forbidden pattern (from previous item) and without head in its neighborhood doesn't change its state;
  \item the movements and actions of the heads are as follows:
    \begin{itemize}
    \item inside a working zone, the head in state $\leftarrow$ moves left, the head in state $\rightarrow$ moves right; the local map $\delta$ is only applied when the head moves left to right; precisely we have the following transitions:
      \begin{align*}
        (x,y,l) ,\ (x',y',\leftarrow) ,\ (x'',y'',r) &\mapsto (x',y',r)\\
        (x,y,l) ,\ (x',y',l) ,\ (x'',y'',\leftarrow) &\mapsto (x',y',\leftarrow)\\
        (x,y,l) ,\ (x',y',\rightarrow) ,\ (x'',y'',r) &\mapsto (x',y',l)\\
        (x,y,\rightarrow) ,\ (x',y',r) ,\ (x'',y'',r) &\mapsto (\delta(y,x',x''),x',\rightarrow)\\
      \end{align*}
    \item when a boundary of the working zone is reached, the head bounces, changes direction and the working zone get shrinked by one cell; precisely we have the following transitions:
      \begin{align*}
        b,\ (x,y,l),\ (x',y',\leftarrow) &\mapsto (x,y,\leftarrow)\\
        b,\ (x,y,\leftarrow),\ (x',y',r) &\mapsto (x,y,\rightarrow)\\
        b,\ (x,y,\rightarrow),\ (x',y',r) &\mapsto (y,x,l)\\
        b,\ (x,y,l),\ (x',y',\rightarrow) &\mapsto b_+\\
        (x,y,\rightarrow),\ (x',y',r),\ b &\mapsto (x',y',\rightarrow)\\
        (x,y,l),\ (x',y',\rightarrow),\ b &\mapsto (x',y',\leftarrow)\\
        (x,y,l),\ (x',y',\leftarrow),\ b &\mapsto (x',y',r)\\
        (x,y,\leftarrow),\ (x',y',r),\ b &\mapsto b_+\\
      \end{align*}
      \textit{(note the swap between $x$ and $y$ in the third transition above to initialize the sequential application of $\delta$)}
    \item finally the head disappears in a working zone of size $1$, precisely: 
      \[b',(x,y,z),b'' \mapsto (x,y,r)\]
      for any ${b', b''\in \{b,b_+\}}$.
    \end{itemize}
  \end{itemize}
\end{example}

Given two configurations ${c,c'\in Q^\Z}$ and any ${n>0}$ we define the valid configuration ${\lambda_{n,c,c'}\in R^\Z}$ as follows: 
\[\lambda_{n,c,c'}(z) =
\begin{cases}
  b&\text{ if $z<-n$,}\\
  (c_z,c'_z,\rightarrow)&\text{ if $z=-n$},\\
  (c_z,c'_z,r)&\text{ if $-n<z\leq n$},\\
  b&\text{ if $z>n$}.\\
\end{cases}
\]

\begin{lemma}
  For any $F$, the CA ${\convzz{F}}$ is convergent but not bounded-change. Moreover for any configuration ${c}$ of $F$ any ${t>0}$ and any ${n\geq t}$ and any ${z}$ with ${|z|\leq n-t}$ it holds: 
  \[\convzz{F}^{t_n}(\lambda_{n,c,c})_z = \lambda_{n-t,F^t(c),F^{t-1}(c)}\]
  where ${\displaystyle t_n = \sum_{i=n-t+1}^{n}4i+1}$. Moreover, for any ${0\leq t'\leq t_n}$ and for any ${|z|<n}$, ${\convzz{F}^{t'}(\lambda_{n,c,c})_z}$ is either $b$ or $b_+$, or there is some ${0\leq t_1,t_2\leq t}$ such that it is of the form ${(F^{t_1}(c)_z,F^{t_2}(c)_z,x)}$ with ${|t_1-t_2|=1}$.
  \label{lem:szone}
\end{lemma}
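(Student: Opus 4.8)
The plan is to prove the explicit simulation formula and the intermediate invariant together by induction on $t$, after isolating the behaviour of one full head cycle, and then to read off convergence and the failure of bounded-change as consequences. The engine is a single-cycle claim: if a configuration equals $\lambda_{m,d,d'}$ on $[-m,m]$ and is blank outside (all $b$, with at most one stray $b_+$ just beyond the right end of the zone), then after exactly $4m+1$ steps $\convzz{F}$ reaches the configuration built on $\lambda_{m-1,F(d),d}$, again blank outside. I would follow the head through a rightward phase and a leftward phase. During the rightward phase the head in state $\rightarrow$ walks from $-m$ to $m$; the transition $(x,y,\rightarrow),(x',y',r),(x'',y'',r)\mapsto(\delta(y,x',x''),x',\rightarrow)$ shows that the head stores in its second slot the old first-component of the cell it has just left, and on entering a cell it writes there $\delta$ of that stored value together with the old first-components of the cell itself and of its right neighbour, namely $\delta(d_{z-1},d_z,d_{z+1})=F(d)_z$, while saving the old $d_z$ in the second slot; hence a swept cell becomes $(F(d)_z,d_z,l)$. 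The leftward phase, governed by $(x,y,l),(x',y',\leftarrow),(x'',y'',r)\mapsto(x',y',r)$, merely turns each passed cell from $l$ back to $r$ without touching its two components.

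Chaining these cycles gives the induction, and the key conceptual ingredient is the light cone. After $t$ cycles I only assert equality with $\lambda_{n-t,F^t(c),F^{t-1}(c)}$ on $|z|\le n-t$, exactly the positions where $F^t(c)_z$ depends solely on the original data inside $[-n,n]$; the surviving zone and this dependency cone shrink by one in lockstep, so the (a priori uncontrolled) way the simulation treats its moving boundaries never reaches the region where correctness is claimed. Summing the per-cycle cost $4m+1$ over $m=n,\dots,n-t+1$ yields $t_n=\sum_{i=n-t+1}^{n}(4i+1)$. The same cycle analysis gives the invariant: throughout the cycle producing $F^j$, the cells already swept carry $(F^{j}(c)_z,F^{j-1}(c)_z,\cdot)$ and those not yet swept carry $(F^{j-1}(c)_z,F^{j-2}(c)_z,\cdot)$, so the two stored iterates are always consecutive, the sole degenerate case being the very start $\lambda_{n,c,c}$, where both slots hold $c=F^0(c)$.

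The step I expect to be the main obstacle is the boundary bookkeeping. Two kinds of boundary effects must be controlled within the claimed window $|z|\le n-t$: first, how $\delta$ is fed the correct neighbour values at the two ends of a sweep, where the head is initialised by the swap transition $b,(x,y,\rightarrow),(x',y',r)\mapsto(y,x,l)$ on the left and merely crosses without applying $\delta$ on the right; second, the timing of the shrinks, where the auxiliary state $b_+$ acts as a one-step buffer so that a cell just vacated by $b,(x,y,l),(x',y',\rightarrow)\mapsto b_+$ is not re-read as a genuine $b$ too early, and the right turn-around passes a cell through $r$ via $(x,y,l),(x',y',\leftarrow),b\mapsto(x',y',r)$ before shrinking it with $(x,y,\leftarrow),(x',y',r),b\mapsto b_+$. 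For the chaining I must additionally check that the single stray $b_+$ left at the far end of a completed cycle relaxes to $b$ before it can matter: the head restarts at the opposite end, at distance at least $2m$ and moving at unit speed, so it cannot interact with that cell during the one step it needs. All of this is a finite but lengthy case check over the transition table, including the size-one case where $b',(x,y,z),b''\mapsto(x,y,r)$ removes the last head.

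Finally, convergence and the failure of bounded-change follow. For convergence I split on validity: if $c$ is not valid then $e$ appears and, being spreading and absorbing, eventually occupies every cell, so the orbit converges to the uniform configuration $e$; if $c$ is valid then each working zone keeps at most one head forever and can only shrink, since $b$ never becomes a $Q'$ state without $e$ present, so a finite zone shrinks to size one, loses its head and freezes, while an infinite zone lets its head escape to infinity after at most one bounce, whence every cell is visited only finitely often and stabilises. Thus $(\convzz{F}^t(c))_{t}$ converges in every case. For the failure of bounded-change I read from the formula that, in the orbit of $\lambda_{n,c,c}$, the cell $z=0$ is a head at the $n$ distinct times $\sum_{i=n-t+1}^{n}(4i+1)$, $t=1,\dots,n$, and is not a head in between (the head is then near a boundary), so it changes state at least $2n-1$ times; since $n$ is arbitrary, no uniform bound $k$ exists.
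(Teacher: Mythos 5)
Your proposal follows essentially the same route as the paper's proof: a single-cycle claim that one bounce of the head costs $4m+1$ steps and turns $\lambda_{m,d,d'}$ into $\lambda_{m-1,F(d),d}$, chained by induction to get the displayed formula and the two-consecutive-iterates invariant, with convergence obtained by the same case split (invalid configuration, blank cell, infinite zone, finite zone with or without head) and non-bounded-change read off from the orbits of the $\lambda_{n,c,c}$. One small inaccuracy in your last paragraph: at the times $t_n$ cell $0$ is not the head (the head then sits at the left end of the shrunken zone, at position $-(n-t)$); what actually yields the $\Omega(n)$ state changes is that the head crosses cell $0$ twice during each of the $n$ passes, forcing its third component through $r$, $\rightarrow$, $l$, $\leftarrow$, $r$ -- this is precisely the check the paper leaves implicit, and it repairs your count without changing anything else.
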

\begin{proof}
  First we show that ${\convzz{F}}$ is convergent. To see this let's consider any configuration $c$ and show that ${\bigl(F^t(c)_0\bigr)_{t\in\N}}$ is convergent. We have the following cases:
  \begin{itemize}
  \item if $c$ is not valid then ${F^t(c)_0=e}$ for any large enough $t$;
  \item if $c$ is valid and ${c_0\in\{b,b_+\}}$ then ${F^t(c)_0=b}$ for any $t$;
  \item finally, if $c$ is valid and ${c_0\in Q'}$ then:
    \begin{itemize}
    \item either cell $0$ belongs to an infinite working zone and the head will no longer be in its neighborhood after some finite time its state will no longer change;
    \item or cell $0$ belongs to some finite zone with no head and its state will never change;
    \item or cell $0$ belongs to some finite zone containing a head and after some finite time the zone has shrinked to size $1$ and doesn't change anymore, so that cell $0$ stays in state $b$ or some state ${q\in Q'}$ forever.
    \end{itemize}
  \end{itemize}

  To see that ${\convzz{F}}$ is not bounded-change it is sufficient to check that, for any ${n\in\N}$, cell $0$ changes more than $n$ times in the orbit of ${\lambda_{n,c,c'}}$ whatever ${c,c'\in Q^\Z}$.

  Finally, it is straightforward to check that ${\convzz{F}^{2n+1}(\lambda_{n,c,c'})=d}$ where $d$ is defined by:
  \[d(z) =
  \begin{cases}
    b&\text{ if $z\leq-n$,}\\
    (F(c)_z,c_z,l)&\text{ if $-n<z< n$},\\
    (c_z,c'_z,\leftarrow)&\text{ if $z=n$},\\
    b&\text{ if $z>n$}.\\
  \end{cases}
  \]
  and then ${\convzz{F}^{2n+1+2n}(\lambda_{n,c,c'})=\lambda_{n-1,F(c),c}}$. The formula on ${\convzz{F}^{t_n}(\lambda_{n,c,c})_z}$ follows by induction, and the last point of the lemma follows by definition of $\convzz{F}$ in working zones.
\end{proof}

From the discussion above we have the following strict hierarchy.

\begin{fact}
  $F$ freezing $\begin{matrix}{\Rightarrow}\\{\not\Leftarrow}\end{matrix}$ $F$ bounded-change $\begin{matrix}{\Rightarrow}\\{\not\Leftarrow}\end{matrix}$ $F$ convergent.
\end{fact}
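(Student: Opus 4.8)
The plan is to split the statement into its four constituent claims---the two forward implications together with a witness for each strictness---most of which are already prepared in the preceding discussion. First I would prove \emph{freezing $\Rightarrow$ bounded-change}. Fix a freezing $F$ with $n$ states and witnessing order $\leq$ as in Definition~\ref{def:freezing}. Along any orbit the successive states of a single cell form a non-increasing sequence for $\leq$, so the distinct values it visits make up a strictly decreasing chain in the poset $(Q,\leq)$; since such a chain has at most $n$ elements, the cell changes at most $n-1$ times and $F$ is $(n-1)$-change, in particular bounded-change.

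Next I would establish \emph{bounded-change $\Rightarrow$ convergent}, which is exactly the observation made just before the definition of convergent CA. Given $c$, each cell changes only finitely often and hence stabilises at some time $\fzt{c,z}$; on any finite ball $\ball{m}$ every cell is frozen after $\max_{z\in\ball{m}}\fzt{c,z}$, so the orbit $(F^t(c))_t$ is eventually constant on every finite window. By definition of the Cantor metric $\delta$ this makes the sequence Cauchy, hence convergent.

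It then remains to record the two strictness witnesses and to note where the real work sits. For \emph{bounded-change $\not\Rightarrow$ freezing} I would invoke the nilpotent $3$-state Example given above: being nilpotent it is bounded-change, yet both $1\screl{F}2$ and $2\screl{F}1$ hold, giving a cycle in $\screl{F}$; since $F$ freezing forces the reflexive--transitive closure of $\screl{F}$ to be a (partial) order, hence antisymmetric, this cycle rules out freezingness, by the characterisation following Definition~\ref{def:freezing}. For \emph{convergent $\not\Rightarrow$ bounded-change} I would simply cite Lemma~\ref{lem:szone}, which builds, for every $F$, a CA $\convzz{F}$ that is convergent but not bounded-change. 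Assembling these pieces gives the strict hierarchy. The only genuinely nontrivial ingredient is the shrinking-zone construction behind this last non-implication; the remaining three facts reduce to the poset-height bound, the Cauchy criterion in the Cantor topology, and the acyclicity characterisation of freezing, so the main obstacle is entirely absorbed by Lemma~\ref{lem:szone} and the Fact itself is a short synthesis.
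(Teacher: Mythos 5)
Your proposal is correct and follows essentially the same route as the paper, which derives this Fact directly ``from the discussion above'': the $n-1$ bound on state changes in a freezing CA, the stabilisation-on-finite-windows argument for bounded-change implying convergence, the nilpotent three-state example for non-freezingness, and Lemma~\ref{lem:szone} for the convergent-but-not-bounded-change witness. All four pieces are assembled exactly as the paper intends.
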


Contrary to the freezing property which is easy to check (Fact~\ref{fact:freezingdecidable}), being bounded-change or convergent are undecidable properties.

\begin{theorem}
  \label{thm:basicundecidable}
  The following holds in any dimension:
  \begin{itemize}
  \item nilpotent CA are recursively inseparable from non-convergent CA;
  \item given $F$ convergent, it is undecidable whether it is bounded-change.
  \end{itemize}
\end{theorem}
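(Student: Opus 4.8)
The plan is to handle the two items separately, each by reduction from a classical undecidable problem. For the first item --- where \emph{recursively inseparable} means that no recursive set can contain every nilpotent CA while avoiding every non-convergent CA --- I would reduce from the domino problem, using the sharper result of Gurevich and Koryakov that the tilesets admitting a doubly periodic tiling are recursively inseparable from those admitting no tiling at all. Given a tileset $\tau$, put in NW-deterministic form by the standard transformation, I would build a $1$D CA $F_\tau$ exactly as in the reduction underlying the undecidability of nilpotency \cite{kari94-2}: its alphabet is the set of tiles plus a spreading \emph{error} state $0$, and its local rule emits $0$ wherever the NW-compatibility of adjacent tiles fails and lets $0$ invade its neighborhood. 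If $\tau$ admits no tiling, every configuration must eventually produce an error, and a compactness argument (this is where the spreading state is crucial) yields a uniform time after which every configuration has been overrun by $0$; hence $F_\tau$ is nilpotent. If $\tau$ admits a doubly periodic tiling, the corresponding spatially periodic configuration has a temporally periodic orbit, which --- after forcing the vertical period to be at least $2$ by a standard counter gadget added to $\tau$ --- is not a fixed point, so some cell changes infinitely often and $F_\tau$ is non-convergent.

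Since $\tau\mapsto F_\tau$ is computable, any recursive set separating nilpotent from non-convergent CA would, by taking preimages, separate the two tileset classes, contradicting the Gurevich--Koryakov inseparability; the statement then lifts to arbitrary dimension by letting the extra coordinates act as the identity. The main obstacle is the second implication: non-nilpotency alone is not enough, because an immortal orbit with a head drifting to infinity leaves a frozen tape behind and converges to a non-trivial fixed point. This is precisely why I reduce from \emph{periodic} tilings and insert the counter gadget, so that the surviving orbit is genuinely cyclic; establishing true nilpotency (rather than mere mortality of the origin) in the no-tiling case through the spreading state is the other point requiring care.

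For the second item I would reuse the shrinking-zone construction of Example~\ref{ex:szone} and Lemma~\ref{lem:szone}, modified so that at each left-to-right pass the head simulates one step of a fixed Turing machine $M$ run on the empty input, and so that the head \emph{freezes} as soon as $M$ enters a halting state. Call $G_M$ the resulting CA. As in Lemma~\ref{lem:szone}, $G_M$ is convergent for every $M$: an infinite working zone lets its head escape towards infinity leaving a fixed point behind, while a finite zone either shrinks to a single cell or freezes after the simulated halt --- so every orbit converges, regardless of whether $M$ halts.

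It then remains to tie the bounded-change property to the halting of $M$. If $M$ halts on the empty input in $T$ steps, then in any working zone the head makes at most $O(T)$ passes before freezing (or before the zone disappears, when it is smaller), so every cell of every configuration changes at most $O(T)$ times and $G_M$ is bounded-change. If $M$ never halts, then on the configurations $\lambda_{n,c,c}$ of Lemma~\ref{lem:szone} the head keeps bouncing until the zone of radius $n$ vanishes and cell $0$ changes more than $n$ times; as $n$ is arbitrary, $G_M$ is not bounded-change. Thus $G_M$ is always convergent and is bounded-change if and only if $M$ halts on the empty input, an undecidable condition, and the statement again lifts to every dimension. The care needed here is to verify that the bound $O(T)$ is uniform over \emph{all} configurations and not only over the family $\lambda_{n,c,c}$, and that grafting the halt-and-freeze mechanism onto the shrinking-zone transitions introduces no new source of non-convergence.
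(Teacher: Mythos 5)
Your second reduction has the more clear-cut gap, and you half-identify it yourself: the direction ``$M$ halts on the empty input $\Rightarrow G_M$ is bounded-change'' fails. The initial content of a working zone is arbitrary, so a zone can be seeded with garbage, or with a fake ``intermediate'' tape of $M$ from which the simulated run never reaches a halting state even though $M$ halts on the empty input; the head then bounces forever in that zone and $G_M$ is not bounded-change. There is no local way to force every zone to encode a run started from the empty input, and an initialization pass does not help because the head itself can be initialized in a state claiming the initialization is already done. The paper avoids this by plugging into the shrinking-zone construction a CA $F$ with a spreading state $s$ (nilpotency of such $F$ being undecidable by Kari) and killing any zone in which $s$ appears: nilpotency is a property quantified over \emph{all} configurations and yields, by compactness, a uniform time $t_0$ by which $s$ appears at every cell, so every working zone, whatever its content, dies within $t_0$ passes and the CA is $(2t_0+2)$-change; non-nilpotency yields a configuration whose orbit avoids $s$ forever, hence unboundedly many bounces on the $\lambda_{n,c,c}$. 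You need to reduce from such an all-configurations property, not from halting on one fixed input.

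For the first item, the bit-flip idea (force temporal period at least $2$ on surviving orbits) is the right one, but the route through tilings rests on unestablished steps. There is no ``standard transformation'' putting an arbitrary tile set in NW-deterministic form: the undecidability of the tiling problem for NW-deterministic tile sets is itself a substantial theorem of Kari proved by a dedicated construction, and even granting such a determinization you would further need it to preserve the existence of a doubly periodic tiling (which it cannot if it superimposes an aperiodic backbone, as the known constructions do) and need the Gurevich--Koryakov inseparability to hold inside the NW-deterministic class. None of this is needed: since $\{\text{nilpotent}\}$ and $\{\text{non-nilpotent}\}$ partition the instances of the (undecidable) nilpotency problem for 1D spreading-state CA, it suffices to exhibit a computable map sending nilpotent to nilpotent and non-nilpotent to non-convergent; the paper does this by running $F$ in product with a bit that flips at every step unless $s$ is seen in the neighborhood, so that a recursive separator of nilpotent from non-convergent CA would decide nilpotency outright.
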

\begin{proof}
 In the following we do the proof for dimension 1, but the result follows straightforwardly for any dimension.  In both cases we prove a reduction from the nilpotency problem for 1D CA of radius $1$ with a spreading state. This problem was shown undecidable in \cite{kari92}. Consider any $F$ of radius $1$ on alphabet $Q$ with spreading state ${s\in Q}$ and local map ${\delta: Q^3\rightarrow Q}$, and construct the following CAs:
 \begin{itemize}
 \item $F_1$ on alphabet ${Q\times\{0,1\}}$ with transition map: 
   \[\bigl((q_1,b_1),(q_2,b_2),(q_3,b_3)\bigr)\mapsto
   \begin{cases}
     (s,0)&\text{ if ${s\in\{q_1,q_2,q_3\}}$},\\
     (\delta(q_1,q_2,q_3),1-b_2)&\text{ otherwise}.
   \end{cases}
   \]
   It is straightforward to check that $F_1$ is bounded-change (and more precisely nilpotent) if $F$ is nilpotent and non-convergent else. The first item of the theorem follows;
 \item $F_2$ is ${\convzz{F}}$ with the following modification (using notation of Example~\ref{ex:szone}): whenever a state ${(q,q',x)}$ with ${s\in\{q,q'\}}$ appears in the neighborhood of cell it turns into state $e$. $\convzz{F}$ is always convergent by Lemma~\ref{lem:szone}, and so is $F_2$ because for any initial configuration $c$, either $F_2$ and $\convzz{F}$ produce the same orbit, or $c$ is invalid for $F_2$ (\textit{i.e.}, the orbit converges to ${{}^\omega e^\omega}$). Moreover a straightforward adaptation of Lemma~\ref{lem:szone} shows that: for any configuration ${c}$ of $F$ any ${t>0}$ and any ${n>t}$ it holds: 
  \[\convzz{F}^{t_n}(\lambda_{n,c,c})_0 =
  \begin{cases}
    e &\text{ if ${F^t(c)_0=s}$}\\
    (F^t(c)_0,F^{t-1}(c)_0,h) &\text{ else, with $h\in\{\rightarrow,r\}$}
  \end{cases}
\]
  where $t_n$ is the time constant from Lemma~\ref{lem:szone}. First if $F$ is not nilpotent, then there is a configuration $c$ whose orbit does not contain state $s$. Therefore, for any $n$, ${\lambda_{n,c,c}}$ has an orbit under $F_2$ with at least $n$ changes for cell $0$. This shows that $F_2$ is not bounded-change in this case. If on the contrary $F$ is nilpotent, then there is $t_0$ such that ${F^t(c)_0=s}$ for any ${t\geq t_0}$ and any $c$. Therefore any cell in any working zone of $F_2$ becomes $e$ after $t_0$ passes of the working head. This implies that a cell's state can change at most ${2t_0}$ times while staying in $Q'$ ($t_0$ times head entering/leaving the cell). Since the other possible state changes are ${Q'\rightarrow \{b,b_+\}}$, ${Q'\rightarrow e}$, ${b_+\rightarrow b}$ and ${\{b,b_+\}\rightarrow e}$, we deduce that $F_2$ is ${(2t_0+2)}$-change and the second item of the theorem follows.
 \end{itemize}
\end{proof}

\section{Dynamical Properties}
\label{sec:dynamics}

The convergence property has obviously strong consequences on the dynamics. This can be seen through the trace or column factor which is a well-studied object \cite{kurkabook,guillonphd}.

\newcommand\trace[1]{T_{#1}}
For any finite ${W\subseteq\Z^d}$ and any CA $F$, the trace of $F$ of base $W$ starting from $c$ is the following sequence:
\[\trace{W}(c) = (\restr{F^t(c)}{W})_{t\in\N}.\]

\begin{lemma}
  \label{lem:trace}
  Let ${W\subseteq\zd}$ be any non-empty finite set. A CA $F$ is convergent if and only if $\trace{W}(c)$ is eventually constant for any configuration $c$. Moreover, in that case, there is always a fixed point $c'$ (\textit{i.e.} ${F(c')=c'}$) such that ${\trace{W}(c)}$ and ${\trace{W}(c')}$ are eventually equal.
\end{lemma}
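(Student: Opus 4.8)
The plan is to treat the stated equivalence as two implications and the ``moreover'' claim as a short addendum, with the common tool being the observation that convergence of $(F^t(c))_{t\in\N}$ in the Cantor metric $\delta$ is, by the very definition of $\delta$, equivalent to the eventual stabilisation of the restriction $\restr{F^t(c)}{E}$ for every finite set $E\subseteq\zd$, equivalently to the eventual stabilisation of $F^t(c)_z$ at every single cell $z$. With this reformulation in hand, the direction from convergence to eventually constant traces is immediate: if $F$ is convergent then $(F^t(c))_{t\in\N}$ converges, so applying the reformulation to the particular finite set $E=W$ shows that $\restr{F^t(c)}{W}$, which is exactly $\trace{W}(c)$, is eventually constant.

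For the converse I would exploit translation invariance of $F$, and this is the step I expect to carry the real content: the hypothesis only provides stabilisation on the single fixed base $W$, whereas convergence demands stabilisation at \emph{every} cell of $\zd$. Fixing any $w_0\in W$ (possible since $W\neq\emptyset$) and any cell $z$, I would consider the configuration $\sigma(c)$ obtained by translating $c$ so that the content originally sitting at cell $z$ is moved to cell $w_0$. Since $F$ commutes with translations, the $w_0$-coordinate of $\trace{W}(\sigma(c))$ equals $F^t(c)_z$ for every $t$. Applying the hypothesis to the configuration $\sigma(c)$ shows that $\trace{W}(\sigma(c))$, and in particular its $w_0$-coordinate, is eventually constant; hence $F^t(c)_z$ stabilises. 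As $z\in\zd$ was arbitrary, $(F^t(c))_{t\in\N}$ converges at every cell, so $F$ is convergent.

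For the ``moreover'' part I would simply take $c'=F^\omega(c)$. By continuity of $F$ together with $F^t(c)\to c'$ one obtains $F(c')=F(\lim_t F^t(c))=\lim_t F^{t+1}(c)=c'$, so $c'$ is a fixed point and $\trace{W}(c')$ is the constant sequence with value $\restr{c'}{W}$. On the other hand, since $F^t(c)\to c'$, the reformulation applied to the finite set $W$ gives that $\restr{F^t(c)}{W}$ is eventually equal to $\restr{c'}{W}$; thus $\trace{W}(c)$ coincides from some time onwards with the constant sequence $\trace{W}(c')$, as required. The only genuinely delicate point in the whole argument is the shift reduction in the converse implication; everything else is a direct unwinding of the definition of the Cantor topology and of $F^\omega$.
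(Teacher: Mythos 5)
Your proof is correct and complete: the equivalence between convergence in the Cantor metric and cell-wise eventual stabilisation, the shift argument (using that $F$ commutes with translations) to transport stabilisation from the fixed base $W$ to an arbitrary cell $z$, and the identification $c'=F^\omega(c)$ as a fixed point via continuity of $F$ are all sound. The paper states this lemma without giving a proof, and your argument is precisely the standard one it implicitly relies on, so there is nothing to correct.
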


This does not imply that the limit set is made of fixed
points, however we can prove the following proposition which is not
true for CA in general.

\begin{proposition}
  \label{prop:twofix}
  Let $F$ be a convergent CA which is not nilpotent. Then it possesses two
  distinct fixed points. 
\end{proposition}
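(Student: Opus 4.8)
The plan is to argue by contradiction: assuming $F$ has at most one fixed point, I will show it must be nilpotent, contradicting the hypothesis. First the preliminary reductions. Since $F$ is continuous and convergent, for any $c$ the limit $F^\omega(c)$ is a fixed point: indeed $F^{t+1}(c)\to F^\omega(c)$ and, by continuity, $F^{t+1}(c)=F\bigl(F^t(c)\bigr)\to F\bigl(F^\omega(c)\bigr)$, so $F\bigl(F^\omega(c)\bigr)=F^\omega(c)$. Hence every orbit converges to a fixed point and $F$ has at least one. Assume for contradiction it has exactly one, say $p$. The set $\mathrm{Fix}(F)$ is closed and, since $F$ commutes with every translation $\sigma$, shift-invariant; a singleton shift-invariant set must be a uniform configuration, so $p={}^\omega q^\omega$ for some state $q$. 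As $F^\omega(c)$ is always a fixed point, $F^\omega(c)={}^\omega q^\omega$ for every $c$.

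Next I reduce nilpotency to a statement about the limit set $\Omega_F=\bigcap_{t}F^t\bigl(Q^{\zd}\bigr)$, through the implication ``$\Omega_F=\{{}^\omega q^\omega\}$ $\Rightarrow$ $F$ nilpotent''. This follows by a compactness argument: if for every $t$ there were $c^{(t)}$ with $F^t(c^{(t)})_0\neq q$, then $d^{(t)}=F^t(c^{(t)})\in F^t\bigl(Q^{\zd}\bigr)$ would, by compactness of $Q^{\zd}$ and finiteness of $Q$, admit a subsequence converging to some $d$ with $d_0\neq q$; since $d^{(t)}\in F^m\bigl(Q^{\zd}\bigr)$ for all $t\geq m$ and each $F^m\bigl(Q^{\zd}\bigr)$ is closed, we get $d\in\Omega_F$, contradicting $\Omega_F=\{{}^\omega q^\omega\}$. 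Thus some $t$ satisfies $F^t(c)_0=q$ for all $c$, and applying this to every shift of $c$ gives $F^t\equiv{}^\omega q^\omega$, i.e.\ $F$ is nilpotent.

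It therefore remains to prove the key lemma: under our assumptions $\Omega_F=\{{}^\omega q^\omega\}$. I would approach this via the restriction $G=F|_{\Omega_F}$, using two facts: that $G$ maps $\Omega_F$ \emph{onto} $\Omega_F$ (so $F^t(\Omega_F)=\Omega_F$ for all $t$), which is standard for limit sets, and that $G$ is still convergent with all orbits converging to its unique fixed point ${}^\omega q^\omega$. The goal is to upgrade this pointwise convergence to convergence that is \emph{uniform} over the compact set $\Omega_F$: for every finite window $W$ there should be a time $T$ with $F^t(e)\restriction_W={}^\omega q^\omega\restriction_W$ for all $e\in\Omega_F$ and all $t\geq T$. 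Granting uniformity, surjectivity closes the argument at once: every $d\in\Omega_F$ is of the form $F^T(e)$ with $e\in\Omega_F$, whence $d\restriction_W={}^\omega q^\omega\restriction_W$; as $W$ is arbitrary this forces $d={}^\omega q^\omega$, so $\Omega_F=\{{}^\omega q^\omega\}$.

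The hard part is exactly this uniformity. Pointwise convergence to a single state on a compact set does not imply a uniform stabilization time in general (the stabilization time is only lower semicontinuous), so the argument must genuinely use the dynamics of $F$. The scenario to be excluded is the ``moving defect''/homoclinic one: a nontrivial element of $\Omega_F$ whose forward orbit converges to ${}^\omega q^\omega$ while a region in a state $\neq q$ merely escapes to spatial infinity --- precisely the mechanism that makes the shrinking-zone automaton of Example~\ref{ex:szone} convergent yet not bounded-change. To rule it out I would exploit that $F$ is convergent on the \emph{whole} space together with the surjectivity of $G$: from a failure of uniformity, and using the backward orbits in $\Omega_F$ furnished by surjectivity of $G$, the plan is to extract a single configuration whose orbit keeps one fixed cell in a state $\neq q$ at infinitely many times, contradicting convergence. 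Carrying out this extraction --- so that it yields a genuinely non-converging orbit rather than defects that simply run off to infinity --- is the crux and the step I expect to demand the most care, since naive space-time superposition fails: the light-cones of the events forcing a fixed cell to differ from $q$ at larger and larger times necessarily overlap near the origin, so the required orbit cannot be assembled by gluing and must be produced from the limit-set dynamics themselves.
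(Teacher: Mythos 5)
Your preliminary reductions are all correct and match the paper's framing: the unique fixed point must be uniform, say ${}^\omega q^\omega$, every orbit then converges pointwise to it, and nilpotency is equivalent (by the compactness argument you give) to the limit set $\Omega_F$ being the singleton $\{{}^\omega q^\omega\}$. But the proof stops exactly at the statement that carries all the difficulty. What remains to be shown is precisely: if every orbit of a CA converges pointwise to ${}^\omega q^\omega$ (i.e.\ $F$ is \emph{asymptotically nilpotent}), then $F$ is nilpotent. This is not a routine uniformity-plus-surjectivity argument; it is a genuinely hard theorem (due to Guillon--Richard in dimension 1 and extended to all dimensions in the reference the paper uses), and the paper's proof consists of reducing to it via Lemma~\ref{lem:trace} and then invoking Theorem~2 of \cite{villeasnil12}. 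Your final paragraph correctly diagnoses why the naive approach fails --- the ``moving defect'' scenario, where non-$q$ regions escape to spatial infinity while every cell individually stabilizes, is exactly what must be excluded and exactly what makes the shrinking-zone construction convergent --- but diagnosing the obstacle is not the same as overcoming it. The plan of extracting, from a failure of uniform convergence on $\Omega_F$, a single configuration whose orbit keeps one fixed cell in a state $\neq q$ infinitely often is not carried out, and there is no reason to expect it to be easier than the original theorem: the known proofs require delicate combinatorial control of how defects propagate, not just compactness and surjectivity of the limit-set dynamics.

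So the proposal has a genuine gap: everything you actually prove is correct, but the key lemma you isolate (uniform stabilization over $\Omega_F$, equivalently $\Omega_F=\{{}^\omega q^\omega\}$) is left as an honest admission of difficulty rather than a proof. To close it you should either cite the asymptotic-nilpotency theorem as the paper does, or supply a proof of it, which would be a substantial piece of work in its own right.
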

\begin{proof}
  First, any convergent CA must have a fixed-point $c$ which is uniform: ${c_z =q_0}$ for some ${q_0\in Q}$ and any ${z\in\zd}$. Now suppose that $c$ is the unique fixed-point, then by Lemma~\ref{lem:trace} $F$ is asymptotically-$q_0$-nilpotent, meaning that for any ${z\in\zd}$ and any ${c'\in Q^\zd}$, the trace ${\trace{\{z\}}(c')}$ is eventually constant and equal to $q_0$. By Theorem~2 of \cite{villeasnil12} this implies that $F$ is nilpotent. The proposition follows.
\end{proof}

As a corollary, we get the following dimension-sensitive decidability result.

\begin{theorem}
  The nilpotency problem is:
  \begin{itemize}
  \item decidable (in time polynomial in the size of the transition table) for 1D convergent CA;
  \item undecidable for freezing CA in higher dimension.
  \end{itemize}
\end{theorem}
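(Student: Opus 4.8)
The two items demand opposite techniques, so I would treat them separately. For the first item the plan is to use Proposition~\ref{prop:twofix} to turn nilpotency into a statement about fixed points: a convergent CA is nilpotent \emph{if and only if} it has a unique fixed point. One direction is immediate, since a nilpotent CA reaches a single constant configuration which is then its only fixed point; the converse is exactly the contrapositive of Proposition~\ref{prop:twofix}, together with the fact (established in its proof) that a convergent CA always has at least one uniform fixed point. Thus the task reduces to deciding, in time polynomial in the transition table, whether a $1$D CA that is promised convergent has exactly one fixed point.

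To do this I would exploit that $\mathrm{Fix}(F)$ is shift-invariant (as $F$ commutes with $\sigma$): if it were a singleton $\{c\}$, then $\sigma(c)=c$, so $c$ is spatially uniform, $c\equiv q_0$ for some quiescent $q_0$. The procedure is then: first enumerate the uniform fixed points by one evaluation of the local rule per state; if there are two or more, answer \emph{not nilpotent}, and under convergence there is always at least one, so the only remaining case is a single uniform fixed point $q_0$. Second, decide whether $F$ has a \emph{non}-uniform fixed point. Since $\mathrm{Fix}(F)$ is the one-dimensional SFT described by the local constraint $F(c)_z=c_z$, I would build the window graph $G$ whose vertices are length-$2r$ blocks and whose edges are the allowed length-$2r{+}1$ blocks with fixed centre; its bi-infinite paths are precisely the fixed points. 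A non-uniform fixed point exists iff the bi-recurrent part of $G$ (vertices on a cycle and connected on both sides to cycles) contains a vertex other than $q_0^{2r}$, which is a reachability/SCC computation. The one delicate point is the complexity accounting: $G$ has at most $|Q|^{2r}$ vertices and $|Q|^{2r+1}$ edges, i.e. size linear in the transition table (itself of size $\Theta(|Q|^{2r+1})$), so the analysis runs in polynomial time as claimed.

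For the second item I would reduce from the domino problem, undecidable in dimension $\ge2$. Given a Wang tile set $\tau$, define a freezing CA $F_\tau$ on alphabet $T\cup\{\bot\}$, with the tiles $T$ pairwise incomparable and maximal and $\bot$ a spreading minimum state: a cell turns to $\bot$ if it sees $\bot$ in its von Neumann neighborhood, or if it is a tile mismatching an adjacent tile, and otherwise is unchanged. This is freezing, as states only ever drop to $\bot$. I claim $F_\tau$ is nilpotent iff $\tau$ does \emph{not} tile the plane. If $\tau$ tiles the plane, any valid tiling is a fixed point distinct from the uniform $\bot$ configuration, so $F_\tau$ is not nilpotent.

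Conversely, if $\tau$ does not tile the plane, a compactness argument yields a radius $N$ with no valid tiling of $\ball{N}$; hence in every configuration and around every cell $z$, the window $z+\ball{N}$ either already contains a $\bot$ or, being all tiles, contains a mismatch that produces a $\bot$ at time $1$ within distance $N$ (for $\|\cdot\|_\infty$) of $z$. As $\bot$ spreads at speed one, every cell is $\bot$ by time $N+1$, \emph{uniformly} in the initial configuration, so $F_\tau$ is nilpotent. This uniform time bound is the main obstacle of the whole argument: convergence of a freezing CA is automatic, but nilpotency requires a single $t$ valid for all configurations, and it is precisely the compactness-derived bound $N$ that supplies it. Finally, to reach any dimension $d>2$ I would lift the construction by letting the extra coordinates index independent parallel copies of the $2$D dynamics, so that $F_\tau$ is nilpotent in dimension $d$ exactly when it is in dimension $2$; undecidability of the domino problem then gives undecidability of nilpotency for freezing CA in every dimension $\ge 2$.
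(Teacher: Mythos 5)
Your proposal is correct and follows essentially the same route as the paper: it combines Proposition~\ref{prop:twofix} (nilpotency of a convergent CA is equivalent to uniqueness of the fixed point) with a polynomial-time De Bruijn-graph analysis of the fixed-point SFT for dimension~1, and the classical Wang-tile reduction with a spreading error state for dimension~$\geq 2$. The only (harmless) divergence is the graph-theoretic test: the paper looks for a circuit containing an edge not labeled by the quiescent state, while you test whether the bi-recurrent part of the graph contains a non-uniform vertex --- the two criteria are equivalent, and yours has the small advantage of not implicitly relying on the fact that a 1D SFT with a non-uniform point also has a non-uniform periodic point.
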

\begin{proof}
  First, in dimension $2$ and more, the classical proof of
  undecidability of nilpotency (see \cite{kari92}) works without any
  modification for freezing CA: given a Wang tile set, we build a
  freezing CA with a spreading error state, that checks locally if
  the configuration is a valid tiling and produces the error state
  in case of local error detection. This CA is nilpotent if and only
  if the tile set does not tile the plane. The undecidability follows from \cite{berger}.
  
  In dimension $1$, from Proposition~\ref{prop:twofix}, we know that
  nilpotency is equivalent to the property of having a single
  fixed-point. By the classical construction of De Bruijn graph, the existence of at least two fixed-points is easy to
  decide in polynomial time. Precisely, and for the sake of
  completeness, suppose $F$ has radius $r$ and local transition map
  ${f:Q^{2r+1}\rightarrow Q}$ and let ${G=(V,E,\lambda)}$ be the
  edge-labeled digraph with ${V=Q^{2r}}$ and
  \begin{itemize}
  \item ${E = \{\bigl((q_{-r},\ldots,q_{r-1}),(q_{-r+1},\ldots,q_{r})\bigr) : f(q_{-r},\ldots,q_r)=q_0\}}$,
  \item ${\lambda\bigl((q_{-r},\ldots,q_{r-1}),(q_{-r+1},\ldots,q_{r})\bigr) = q_0.}$
  \end{itemize}
  Since $F$ necessarily possesses a uniform fixed point
  ${{}^\omega q^\omega}$, testing if $F$ possesses at least two
  amounts to check whether there is a circuit in $G$ with at least one
  edge not labeled by $q$. This can be done in time polynomial in ${|Q|^r}$.
\end{proof}

Convergence also implies some kind of topological regularity as shown in the following proposition.

\begin{proposition}
  No convergent CA is sensitive to initial conditions, whatever the dimension. 
\end{proposition}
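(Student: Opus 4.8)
The plan is to argue by contradiction: assuming a convergent CA $F$ were also sensitive, I would build a single configuration whose orbit does not converge, contradicting Lemma~\ref{lem:trace}. First I would fix vocabulary. Let $r$ be the radius of $F$, so that (finite speed of propagation) $F^\tau(c)|_{\ball{n}}$ depends only on $c|_{\ball{n+r\tau}}$. Sensitivity with constant $\epsilon$ yields an integer $K$ with $2^{-K}<\epsilon$, and then $\delta(a,b)>\epsilon$ forces $a|_{\ball{K}}\neq b|_{\ball{K}}$. Thus sensitivity reads: there is a \emph{window} $W=\ball{K}$ such that for every configuration $c$ and every radius $m$ there exist a configuration $c'$ agreeing with $c$ on $\ball{m}$ and a time $t$ with $F^t(c')|_{W}\neq F^t(c)|_{W}$. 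For a convergent $F$, Lemma~\ref{lem:trace} says every trace $\trace{W}(c)$ is eventually constant; equivalently the \emph{changes}, i.e. the times $t\geq 1$ with $F^{t}(c)|_W\neq F^{t-1}(c)|_W$, form a finite set. My target is a configuration with infinitely many such changes.

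I would build inductively configurations $d_i$, finite times $T_i$ and radii $R_i\uparrow\infty$ so that $\trace{W}(d_i)$ has at least $i$ changes, all at times $\leq T_i$ (so $T_i$ is the last change time of $d_i$, finite by convergence), and $d_{i+1}$ agrees with $d_i$ on $\ball{R_i}$ with $R_i>K+rT_i$ and $R_i>R_{i-1}$. Start from any $d_0$ with $T_0$ its last change time; the invariant holds trivially at $i=0$. Given $d_i$, apply sensitivity at $c=d_i$ with an agreement radius $M>\max(K+rT_i,R_{i-1})$: this produces $d_{i+1}:=c'$ coinciding with $d_i$ on $\ball{M}$ together with a \emph{first} time $t$ at which $F^t(d_{i+1})|_W\neq F^t(d_i)|_W$. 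Set $R_i=M$ and let $T_{i+1}$ be the (finite) last change time of $d_{i+1}$.

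The crux is that this guaranteed difference is a genuine \emph{new} change located strictly after $T_i$. Since $d_{i+1}$ and $d_i$ coincide on $\ball{M}$ with $M>K+rT_i$, finite speed gives $F^\tau(d_{i+1})|_W=F^\tau(d_i)|_W$ for all $\tau\leq T_i$, so the first differing time satisfies $t>T_i$; moreover minimality of $t$ gives $F^{t-1}(d_{i+1})|_W=F^{t-1}(d_i)|_W$. As all changes of $d_i$ occur at times $\leq T_i$, its window is constant from time $T_i$ on, whence $F^{t-1}(d_i)|_W=F^{t}(d_i)|_W\neq F^{t}(d_{i+1})|_W$. Chaining these gives $F^{t-1}(d_{i+1})|_W\neq F^{t}(d_{i+1})|_W$, a change of $d_{i+1}$ at time $t>T_i$. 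Together with the $i$ earlier changes of $d_i$, which are preserved because the two windows agree up to time $T_i$, this yields at least $i+1$ changes for $d_{i+1}$, completing the induction. Finally, since $R_i\uparrow\infty$ and $d_{j+1}$ agrees with $d_j$ on $\ball{R_j}\supseteq\ball{R_i}$ for $j\geq i$, the $d_i$ converge to some $d$ agreeing with each $d_i$ on $\ball{R_i}$; as $R_i>K+rT_i$, the window of $d$ reproduces that of $d_i$ up to time $T_i$ and so has at least $i$ changes, for every $i$. Hence $\trace{W}(d)$ is not eventually constant, contradicting Lemma~\ref{lem:trace}, so $F$ cannot be both convergent and sensitive.

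The delicate point is exactly the conversion of sensitivity into \emph{new} changes that accumulate in one configuration: sensitivity only supplies, at each configuration, a single far-away perturbation producing a difference at one (necessarily late) instant, and naively superposing such perturbations fails because their influence regions overlap and interfere. The induction sidesteps this by choosing each agreement radius large enough to freeze the reference orbit past its last change, so every new difference is forced into the frozen regime and thereby becomes a bona fide change, while the same agreement simultaneously protects all previously created changes; the fast-growing radii then make the sequence Cauchy. I expect the only real bookkeeping to be the two inequalities on $R_i$ (enough to preserve the past, enough to push the new difference beyond $T_i$), and I note that nothing in the argument is special to dimension one, so it applies in every dimension as claimed.
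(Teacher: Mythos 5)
Your proof is correct and follows essentially the same strategy as the paper's: iterate sensitivity with agreement radii growing like $K+rT_i$ so each perturbation forces a fresh change in the window's trace while preserving all earlier ones, then pass to the limit configuration and contradict Lemma~\ref{lem:trace}. The only (harmless) difference is bookkeeping: you always keep the perturbed configuration and use the frozen tail of $d_i$ to locate the new change, whereas the paper keeps whichever of the two configurations has a non-constant trace on the new time interval.
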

\begin{proof}
  Suppose by contradiction that $F$ on $Q^\zd$ is convergent and sensitive to initial conditions: there is ${N\in\N}$ such that for any ${c\in Q^\zd}$ and any ${p\in\N}$ there exists ${c'\in Q^\zd}$ such that ${\restr{c}{\ball{p}}=\restr{c'}{\ball{p}}}$ and ${\restr{F^t(c)}{\ball{N}}\neq\restr{F^t(c')}{\ball{N}}}$ for some ${t\in\N}$. Consider any configuration ${c_0\in Q^\zd}$ and ${p_1\geq N}$. By sensitivity there is ${c'\in Q^\zd}$ and ${t_1\in\N}$ such that either ${\trace{\ball{N}}(c_0)}$ or ${\trace{\ball{N}}(c')}$ is non-constant on time interval ${[0,t_1]}$. Denote by $c_1$ the one among ${c_0}$ and $c'$ that corresponds to the non-constant trace. Let ${p_2 = p_1 + N + r(t_1+1)}$ where $r$ is the radius of $F$. Applying sensitivity again on $c_1$ and $p_2$ we know there exist $c'$ and $t_2$ such that:
  \begin{itemize}
  \item ${c_1}$ and $c'$ are identical on ${\ball{p_2}}$;
  \item therefore, by choice of $p_2$, ${\trace{\ball{N}}(c_1)}$ and ${\trace{\ball{N}}(c')}$ coincide on interval ${[0,t_1]}$;
  \item ${\trace{\ball{N}}(c_1)}$ and ${\trace{\ball{N}}(c')}$ differ at time ${t_2}$.
  \end{itemize}
  So one of $c_1$ or $c'$, denoted $c_2$, is such that ${\trace{\ball{N}}(c_2)}$ is not constant on interval ${[t_1,t_2]}$. Going on with the same reasoning we construct a converging sequence ${(c_n)_{n\in\N}}$ of configurations such that ${\trace{\ball{N}}(c_n)}$ is not constant on each interval ${[t_i,t_{i+1}]}$ for ${0\leq i<n}$. Taking ${c=\lim_n c_n}$ we get a trace ${\trace{\ball{N}}(c)}$ which is not eventually constant contradicting Lemma~\ref{lem:trace}.
\end{proof}

The convergence behavior is intrinsically irreversible except in the
trivial case of the identity, more precisely there is no surjective convergent CA appart from the identity. It was shown in \cite{GolOlThey15} for freezing cellular automata by an elementary proof. In fact, as pointed to us by V. Salo, it is an immediate consequence of the Poincaré recurrence theorem for all convergent CA: if a CA is not the identity, then their is some local context $u$ in which a cell changes its state, but the Poincaré recurrence theorem implies that in a surjective CA there must be an orbit where $u$ is recurrent which yields infinitely many states change and contradicts convergence. Without going into details about measures and cellular automata which are out of the scope of the present paper (see \cite{Pivato09}), we give a rather self-contained proof of this fact in the following proposition.

\begin{proposition}
  If a convergent CA is surjective, then it is the identity map.
\end{proposition}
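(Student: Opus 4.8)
The plan is to argue by contradiction, following the measure-theoretic idea attributed to Salo above, but keeping the measure input to a strict minimum. Suppose $F$ is convergent and surjective but not the identity. Since $F\neq\mathrm{id}$, there are $c$ and $z$ with $F(c)_z\neq c_z$, and translation invariance lets us move the discrepancy to the origin; as $F(\cdot)_0$ depends only on the restriction to $\ball{r}$, this yields a bounded configuration $u\in Q^{\ball{r}}$ such that $F(c')_0\neq c'_0$ for \emph{every} $c'\in[u]$. Consequently the clopen set $A=\{c\in Q^{\zd} : F(c)_0\neq c_0\}$ contains $[u]$ and is non-empty, so $\mu(A)\geq\mu([u])>0$, where $\mu$ denotes the uniform Bernoulli measure on $Q^{\zd}$.

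The single external fact I would invoke is that a surjective CA preserves $\mu$ (see \cite{Pivato09}); in dimension one this is exactly the balance, or uniform-multiplicity, characterization of surjectivity, and the higher-dimensional case is analogous. In particular $\mu(F^{-t}(A))=\mu(A)$ for every $t\in\N$. I would then prove, inline and for self-containedness, the only recurrence statement needed, which is an elementary instance of Poincaré recurrence. Let $B=\{c\in A : F^t(c)\notin A\text{ for all }t\geq 1\}$ be the set of points of $A$ that never return to $A$. The preimages $B,F^{-1}(B),F^{-2}(B),\dots$ are pairwise disjoint: if some point lay in $F^{-i}(B)\cap F^{-j}(B)$ with $i<j$, then its image under $F^i$ would belong to $B\subseteq A$ and return to $A$ at the positive time $j-i$, contradicting the definition of $B$. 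Since $\mu$ is $F$-invariant each of these disjoint sets has measure $\mu(B)$, and $\sum_t\mu(F^{-t}(B))\leq 1$ forces $\mu(B)=0$. Applying the same reasoning to the iterates of $F$ (the standard bootstrapping) upgrades this to: $\mu$-almost every $c\in A$ satisfies $F^t(c)\in A$ for infinitely many $t$.

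Finally I would read off the contradiction with convergence. By definition $F^t(c)\in A$ means $F^{t+1}(c)_0\neq F^t(c)_0$, i.e. cell $0$ changes its state at time $t$. Hence any $c\in A$ that is recurrent in the above sense has cell $0$ changing infinitely often along its orbit, so $\bigl(F^t(c)\bigr)_{t\in\N}$ fails to converge, contradicting the assumed convergence of $F$. Since $\mu(A)>0$, such a recurrent $c$ exists, and the contradiction proves $F=\mathrm{id}$.

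The only genuinely non-elementary ingredient, and hence the main obstacle, is the invariance of $\mu$ under surjective CA; everything else — extracting a changing context from $F\neq\mathrm{id}$, the pairwise-disjointness argument for recurrence, and the final translation into an orbit with infinitely many changes — is elementary. The point to keep in mind is that convergence forbids even a \emph{single} orbit in which a fixed cell changes infinitely often, which is precisely what recurrence delivers; this is why the measure-zero exceptional set in Poincaré recurrence is harmless, as we only need one good configuration in $A$.
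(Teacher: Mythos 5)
Your proof is correct and follows essentially the same route as the paper: invoke the balance theorem to get invariance of the uniform measure under a surjective CA, use a Poincaré-recurrence argument to produce a configuration whose orbit visits the set of state-changing contexts infinitely often, and contradict convergence. The only difference is cosmetic: the paper establishes recurrence by showing directly that the set $R=\cap_t\cup_{t'\geq t}F^{-t'}([u])$ has positive measure, while you use the classical disjoint-preimages argument on the never-return set; both are standard variants of the same lemma.
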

\begin{proof}
  Let $F$ be surjective and convergent and let $\mu$ denote the uniform product measure on configurations of $F$. By the balance theorem \cite{maki}, $\mu$ is preserved under $F$: ${\mu(X)=\mu(F^{-1}(X))}$ for any measurable set $X$ (see \cite{CapobiancoGK13} for a more general result). If we suppose that $F$ is not the identity map, then there is a word ${u\in Q^{\ball{n}}}$ such that for all ${x\in[u]}$ it holds ${F(x)_0\neq x_0}$. We claim that there is a configuration $x$ such that ${F^t(x)\in[u]}$ for infinitely many $t$. From this claim we deduce that $F$ is not convergent because the orbit of $x$ is not convergent. The proposition follows. To prove the claim, let us denote ${R_t = \cup_{t'\geq t}F^{-t'}([u])}$ and ${R=\cap_tR_t}$.  By definition $R$ is the set of configurations whose orbit visits $[u]$ infinitely many times and we want to show that ${R\neq\emptyset}$. We actually show that ${\mu(R)>0}$. Since ${R_{t+1}=F^{-1}(R_t)}$ for all ${t\geq 0}$ we have ${\mu(R_t)=\mu(R_0)}$. Moreover ${[u]\subseteq R_0}$ so we deduce that ${\mu([u]\setminus R_t)\leq \mu(R_0\setminus R_t)= 0}$ since ${R_t\subseteq R_0}$. Finally, by Boole's inequality we have ${\mu([u]\setminus R)=0}$ so ${\mu(R)>0}$ since ${\mu([u])>0}$.
\end{proof}

\section{Computational Complexity and Universality}
\label{sec:ccu}

Computational complexity and universality in cellular automata is a very well-studied topic and the general ability for cellular automata to do any Turing computation is now considered as an obvious fact. In this section, we shall show how the three definitions of section~\ref{sec:maindefs} affect this computational power and how it also depends on the dimension.

\subsection{Canonical Problems}

We start by defining two classical problems associated to any CA which will serve as a canonical measurement of the computational complexity of the considered CA. They are very different and complementary.

\newcommand\logred{\leq_{log}}
\newcommand\PRED[1]{\mathrm{PRED}_{#1}}
\newcommand\LIMIT[1]{\mathrm{LIMIT}_{#1}}
\newcommand\TIME[1]{\mathrm{TIME}_{#1}}
\newcommand\CYREACH[1]{\mathrm{CYREACH}_{#1}}
\newcommand\CYREACHOMEGA[1]{\mathrm{CYREACH}^\omega_{#1}}

The first one is about short-term predictability and provides a fine-grained complexity measurement within the class $\PTIME$.

\begin{definition}
  Let $F$ be any CA of radius $r$ and alphabet $Q$.
  The prediction problem $\PRED{F}$ is defined as follows:
  \begin{itemize}
  \item input: ${t>0}$ and ${u\in Q^{\ball{rt}}}$ and ${q\in Q}$
  \item output: decide whether ${F^t(u)=q}$.
  \end{itemize}
\end{definition}

The second one is about long-term reachability and provides a coarse-grained complexity measure that can go (and is expected to go) beyond the decidable. It is inspired by the notion of universality for dynamical symbolic systems from \cite{DelvenneKB06}.

\begin{definition}
  Let $F$ be any CA of radius $r$ and dimension $d$.
  The cylinder reachability  problem $\CYREACH{F}$ is defined as follows:
  \begin{itemize}
  \item input: two bounded configurations $u$ and $v$.
  \item output: decide whether there is ${c\in [u]}$ and ${t\in\N}$ such that ${F^t(c)\in[v]}$.
  \end{itemize}
\end{definition}

For CA in general, hard examples of both problems are well-known. Let us fix a 1D CA $G$ of radius $1$ such that ${\PRED{G}}$ is P-complete, for instance rule 110 \cite{woodsneary06}, and a 1D CA $H$ of radius $1$ such that ${\CYREACH{H}}$ is undecidable. More precisely, we suppose that there is a fixed state ${h\in Q_H}$ of $H$ such that the following sub-problem of ${\CYREACH{H}}$, that we will denote ${\CYREACH{H}^\ast}$, is also undecidable: given a bounded configuration ${u}$, decide whether there is ${t\in\N}$ and ${c\in[u]}$ such that ${F^t(c)\in[h]}$. Such an $H$ can be obtained by adapting the example of \cite[section 6.1]{DelvenneKB06}. When restricting to freezing CA, and provided the dimension is at least 2, one can find hard examples for both problems. Note that ``life without death'' is known to have a P-complete ${\PRED{}}$ problem \cite{GriMoo96} but the hardness of ${\CYREACH{}}$ is not clear.

\begin{proposition}
  \label{prop:2Dhardfreezing}
  There exists a 2D freezing CA $F$ such that ${\PRED{F}}$ is P-complete and ${\CYREACH{F}}$ is undecidable.
\end{proposition}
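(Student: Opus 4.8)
The plan is to build a single 2D freezing CA that simultaneously embeds a P-complete prediction instance and an undecidable reachability instance, using two independent layers that do not interfere. For the $\PRED{}$-hardness I would borrow the construction of Example~\ref{ex:lifewithoutdeath}: take a 1D CA $G$ of radius $1$ with $\PRED{G}$ P-complete (rule~110 as fixed above) and simulate it inside a 2D freezing CA by using the vertical axis as a time axis. Concretely, I would work on an alphabet that carries at each cell a ``computed value'' together with a ``already-written'' flag; a cell at height $y+1$ freezes into the value $G$ would assign to it once the three cells below it at height $y$ have frozen. Because each cell writes its final value exactly once and then never changes, the resulting CA is freezing. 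A $\PRED{G}$ instance of horizontal size $O(t)$ and time $t$ is encoded as an initial row, and the answer is read off at height $t$; since the simulation is computed by a LOGSPACE-uniform layout, this gives a LOGSPACE reduction from $\PRED{G}$ to $\PRED{F}$, establishing P-completeness of the latter.

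For the $\CYREACH{}$-hardness I would use the extra dimension to run a genuinely non-freezing 1D computation ``in space.'' Starting from the fixed 1D CA $H$ of radius~1 with $\CYREACH{H}^\ast$ undecidable, I would encode the spacetime diagram of $H$ into a static 2D pattern: the horizontal axis is $H$'s space and the vertical axis is $H$'s time, and a local freezing rule checks, row by row, that consecutive rows are consistent with $H$'s local map, using a spreading error/frozen marker to propagate a ``correct so far'' signal upward. The key point is that freezing the spacetime diagram of an arbitrary CA is unproblematic because each cell of the diagram is written once and then fixed forever; what makes this faithful to $\CYREACH{H}^\ast$ is that existence of $c\in[u]$ and $t$ with $H^t(c)\in[h]$ translates into the reachability of a cylinder $[v]$ recording that the target state $h$ has appeared somewhere in a consistent diagram. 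I would set $u$ to pin down the bottom row (the $H$-cylinder) and $v$ to require the marker state witnessing an occurrence of $h$, so that $\CYREACH{F}$ on $(u,v)$ returns true exactly when the $\CYREACH{H}^\ast$ instance does.

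To combine the two into one CA I would take the disjoint (product-with-a-mode-bit) alphabet: each configuration is in either ``predict mode'' or ``reach mode'', selected by a static flag that never changes, and the transition function applies the corresponding layer's rule. Since both layers are individually freezing and the mode bit is inert, the combined CA is freezing; $\PRED{F}$ restricted to predict-mode inputs inherits P-hardness (and remains in P because any CA's prediction is in P), while $\CYREACH{F}$ restricted to reach-mode cylinders inherits undecidability. The main obstacle I expect is not the logical structure but the bookkeeping of the reachability layer: one must ensure that the local consistency check is strong enough that \emph{every} configuration reaching the target cylinder genuinely encodes a valid partial spacetime diagram of $H$ over some $c\in[u]$, rather than a spurious pattern that locally looks consistent but does not correspond to any real $H$-orbit. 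Handling the synchronization of the row-by-row check and the boundary behavior of the spreading marker, while keeping the whole rule decreasing with respect to a fixed order, is where the real care is needed; this is exactly the kind of detail that the paper's later dedicated constructions (the freezing simulations referenced around Theorem~\ref{thm:freezecyreach}) are designed to supply.
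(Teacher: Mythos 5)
Your $\PRED{}$ half is essentially the paper's construction (a waiting cell freezes to the $G$-value of the row below it once that row is determined, and the answer is read at height $t$), and it is fine. The combination step is also workable, though the paper simply takes the product $G'\times H'$ and uses the quiescent states of each factor to pad the other component, which avoids having to argue about configurations that mix your two ``modes'' outside the given cylinders.

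The genuine gap is in the $\CYREACH{}$ half, and it is not the soundness issue you flag at the end but one you do not address: an instance of $\CYREACH{F}$ consists of a \emph{single} bounded target configuration $v$, i.e.\ a fixed finite pattern centered at cell $0$, whereas the event you want to detect --- $h$ appearing in the simulated orbit of $H$ --- occurs at position $(0,t)$ for an unbounded, existentially quantified $t$. A marker ``witnessing an occurrence of $h$ somewhere'', or a correctness signal propagating \emph{upward}, never produces anything observable in a fixed window around the origin, so as written your reduction does not define a valid pair $(u,v)$. The paper resolves this by adding a rule that propagates the state $h$ back \emph{downward} along its column, so that $[h]$ is reached at the origin at time about $2t$; you need this (or an equivalent downward channel), and its converse direction then requires ruling out spuriously planted $h$'s and malformed data above the input row, which is what the paper's spreading state $b$ is for. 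Two further remarks: your guess-and-check reading (the diagram is statically present and merely verified) can be made to work, but the ``every configuration reaching the target encodes a genuine partial diagram'' obligation is then real and is \emph{not} supplied by the constructions around Theorem~\ref{thm:freezecyreach}, which are one-dimensional Minsky-machine simulations; the paper instead computes the diagram dynamically from the pinned bottom row, so consistency of the cone below $(0,t)$ is automatic and only well-formedness of the initial data has to be policed.
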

\begin{proof}
  Any 1D CA $F$ with states $Q$ and neighborhood $V$ can be simulated by a 2D freezing CA $F'$ with
  states ${Q\cup\{\ast\}}$ as follow. Let ${V'=\{(v,-1):v\in V\}}$. A cell in a state from $Q$ never
  changes. A cell in state $\ast$ looks at cells in its $V'$
  neighborhood: if they are all in a state from $Q$ then it updates to
  the state given by applying $F$ on them, otherwise it stays
  in $\ast$. Starting from a all-$\ast$ configuration except on one
  horizontal line where it is in a $Q$-configuration $c_0$, this 2D
  freezing CA will compute step by step the space-time diagram of $F$
  on configuration $c_0$. Applying this construction to $G$, we obtain a 2D freezing CA $G'$ such that ${\PRED{G}}$ reduces to ${\PRED{G'}}$ as follows: given ${t>0}$ and ${u\in Q_G^{\ball{t}}}$ (of dimension 1), we compute ${u'\in Q_{G'}^{\ball{t}}}$ (of dimension 2) defined by 
  \[u'_z =
  \begin{cases}
    u_x &\text{ when ${z=(x,-t)}$,}\\
      \ast&\text{ else.}
  \end{cases}
  \]
  By definition of $G'$ it holds: ${(G')^t(u') = G^t(u)}$.

  We now apply essentially the same construction to $H$ to obtain $H'$, but with two additional tweaks that allow a reduction from ${\CYREACH{H}^\ast}$ to ${\CYREACH{H'}}$: when $H'$ simulates $H$ by progressively building successive configuration of the orbit of $H$ as successive rows in direction ${(0,1)}$, it also propagates back any state $h$ appearing in the simulation in direction ${(0,-1)}$; moreover, a special state $b$ is added whose role is to appear and spread in any ill-formed configuration. With these tweaks, to any cylinder $[u]$ for $H$ we can associate a cylinder $[u']$ for $H'$ such that $[h]$ is reachable from $[u]$ in $H$ if and only if $[h]$ is reachable from $[u']$ is $H'$.
  To do so, the construction above is applied with the following modifications:
  \begin{itemize}
  \item $H'$ has neighborhood $\ball{1}$ which is large enough to have ${V'\subseteq \ball{1}}$ since $H$ has radius $1$;
  \item $H'$ as an additional state $b$ that never changes whatever the context, and that propagates downwards and to the left and to the right;
  \item if a cell in a state from $Q$ has a left or right neighbor not in $Q$, it becomes $b$;
  \item if a cell $z$ in state $\ast$ has a neighbor which is not in $z+ V'$ and not in state $\ast$ then it becomes $b$;
  \item if none of the above cases holds and cell $z$ is in a state from $Q$ and cell ${z+(0,1)}$ is in state $h$ then cell $z$ turns into state $h$.
  \end{itemize}
  Then ${\CYREACH{H}^\ast}$ reduces to ${\CYREACH{H'}}$ as follows: given a bounded configuration $u$ of radius $n$, we compute ${u'\in Q_{H'}^{\ball{n}}}$ defined by
  \[u'_z =
  \begin{cases}
    u_x &\text{ when ${z=(x,0)}$,}\\
    b &\text{ if ${z=(x,y)}$ with ${y<0}$,}\\
    \ast&\text{ else.}
  \end{cases}
  \]
  Then by definition of $H'$ we have that ${H^t(c)\in[h]}$ for some ${t\in\N}$ and ${c\in[u]}$ if and only ${(H')^{t'}(c')\in[h]}$ for ${t'\in\N}$ and ${c'\in[u']}$. Indeed, if ${H^t(c)_0=h}$ then ${(H')^t(c')_{(0,t)}=h}$ and therefore ${(H')^{2t}(c')_{(0,0)}=h}$ where $c'$ is obtained from $c$ as $u'$ is from $u$. For the other direction, ${(H')^{t'}(c')_{(0,0)}=h}$ implies that there is some $t\leq t'$ with ${(H')^{t}(c')_{(0,t)}=h}$ and it can only be possible if ${c'_{(x,0)}\in Q}$ for all ${|x|\leq t}$ and ${c'_{(x,y)}=\ast}$ for all $(x,y)\in\ball{t}$ and ${y>0}$ (because no $b$ can be produced close to $(0,0)$ at first step). We deduce that there is a configuration ${c\in[u]}$ with ${c_x = c'_{(x,0)}}$ for all ${|x|\leq t}$ for which it holds ${H^t(c)_0=h}$. The reduction from ${\CYREACH{H}^\ast}$ to ${\CYREACH{H'}}$ follows.

  To conclude, the product cellular automaton ${G'\times H'}$ that acts independently as $G'$ and $H'$ on both components of the product configuration set with set of states $Q_{G'}\times Q_{H'}$, fulfills the conditions of the proposition because since both $G'$ and $H'$ have a quiescent state, then ${\PRED{G'}}$ reduces to ${\PRED{G'\times H'}}$ and ${\CYREACH{H'}}$ reduces to ${\CYREACH{G'\times H'}}$.
\end{proof}

\begin{proposition}
  \label{prop:hard1Dconvergent}
  There exists a 1D convergent CA $F$ such that ${\PRED{F}}$ is P-complete and ${\CYREACH{F}}$ is undecidable.
\end{proposition}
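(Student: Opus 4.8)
The plan is to reuse the shrinking zone construction of Example~\ref{ex:szone} as a single convergent device carrying both a hard short-term prediction and a hard long-term reachability. Since $\PRED{F}$ belongs to $\PTIME$ for every CA $F$ (iterate the local rule on the $O(t)$ cells of the input), it suffices to produce a convergent $F$ whose $\PRED{F}$ is $\PTIME$-hard and whose $\CYREACH{F}$ is undecidable. I would take as base the product $K=G\times H$ of the two benchmark $1$D radius-$1$ automata fixed above, which is again $1$D of radius $1$, and set $F$ to be $\convzz{K}$ enriched with: an absorbing read-out state $\langle a\rangle$ for each $a\in Q_K$, obtained by making a singleton working zone with newest state $a$ collapse to $\langle a\rangle$ instead of to $(a,\cdot,r)$; and an absorbing detection state $\hat h$, produced by the extra rule turning any working cell whose newest $H$-component equals $h$ into $\hat h$ (the detection taking precedence over the read-out). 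As $\convzz{K}$ is convergent by Lemma~\ref{lem:szone} and both enrichments only replace future changes of a cell by an absorbing state, $F$ is still convergent.

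For $\PTIME$-hardness I would reduce $\PRED{G}$ to $\PRED{F}$ via Lemma~\ref{lem:szone}, running the $G$-layer and freezing the $H$-layer at a quiescent state $q_H\neq h$ so that the detection rule never fires. On input $(t,w,q)$ with $w\in Q_G^{\ball{t}}$, let $\tilde w\in Q_K^{\ball{t}}$ carry $w$ on the $G$-layer and $q_H$ on the $H$-layer, and output the $\PRED{F}$ instance $(t^\star,\,\restr{\lambda_{t,\tilde w,\tilde w}}{\ball{t^\star}},\,\langle(q,q_H)\rangle)$ with $t^\star=t_n+1$ for $n=t$. By Lemma~\ref{lem:szone} with $n=t$, after $t_n$ steps the working zone has shrunk to the singleton $\{0\}$ carrying newest state $K^t(\tilde w)_0=(G^t(w)_0,q_H)$, which collapses to the read-out state $\langle(G^t(w)_0,q_H)\rangle$; hence the chosen target $\langle(q,q_H)\rangle$, a single state depending only on $q$, is reached iff $G^t(w)_0=q$. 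Because $w$ encodes $t$ in unary we have $t_n=2t^2+3t$, the output has polynomial size, and the whole map is computable in $\LOG$.

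For undecidability I would reduce $\CYREACH{H}^\ast$ to $\CYREACH{F}$. To a bounded $H$-configuration $u$ of radius $m$ I associate the cylinder $[u']$ pinning the block $\ball{m}$ to the valid working-zone content $((g_0,u_z),(g_0,u_z),r)$ --- so the $H$-layer extends $u$, the $G$-layer is a quiescent $g_0$, and the block contains neither a head nor a forbidden pattern --- together with the target $[\hat h]$; crucially $[u']$ leaves the zone size and the outer content free. The completeness direction is immediate: if $H^t(c)_0=h$ for some $c\in[u]$, then for $n>\max(m,t)$ the genuine finite configuration $\lambda_{n,d,d}$, with $d$ carrying $c$ on the $H$-layer and $g_0$ on the $G$-layer, lies in $[u']$, is valid, simulates $H$ on its $H$-layer by Lemma~\ref{lem:szone}, and its origin turns into $\hat h$ at the simulated step $t$, so $[\hat h]$ is reached.

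The delicate direction, and the expected main obstacle, is soundness: I must rule out that some corrupted $c\in[u']$ reaches $[\hat h]$ without a genuine $H$-orbit from $[u]$ producing $h$. Here I would argue with the spreading error state $e$ and the validity characterization of Lemma~\ref{lem:szone}, exactly as in Proposition~\ref{prop:2Dhardfreezing} and in the proof of Theorem~\ref{thm:basicundecidable}. Let $\tau$ be the first time the origin becomes $\hat h$; at the update that first wrote newest $H$-component $h$ there, the origin's neighbourhood contained a single head and no forbidden pattern, for otherwise the origin would have turned into $e\neq\hat h$. Tracing these updates back through the origin's past light-cone, any forbidden pattern present strictly inside that cone emits $e$ which reaches and overrides the origin by time $\tau$, contradicting its value $\hat h$; hence the origin's past is a valid single-head simulation, so by Lemma~\ref{lem:szone} the $H$-layer of the origin equals $H^{t_1}(d)_0$ for the genuine initial content $d$ of its zone, which extends $u$. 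Thus $H^{t_1}(d)_0=h$ with $d\in[u]$, establishing the reduction and, with the previous paragraphs, that $\CYREACH{F}$ is undecidable while $\PRED{F}$ is $\PTIME$-complete.
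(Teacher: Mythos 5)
Your architecture is genuinely different from the paper's: the paper takes the product $\convzz{G}\times\convzz{H}$ of two \emph{separate} shrinking-zone automata and reduces ${\CYREACH{H}^\ast}$ by a disjunctive truth-table reduction whose targets are the finitely many radius-$0$ cylinders $[(h,x,y)]$, without modifying the CA at all; you instead build $\convzz{G\times H}$ and graft on absorbing read-out and detection states so as to obtain many-one reductions to singleton cylinders. The $\PRED{}$ half of your argument is essentially sound (modulo specifying how the head interacts with the new absorbing states, and granting $H$ a quiescent state ${q_H\neq h}$, both arrangeable), and your choice ${n=t}$ in Lemma~\ref{lem:szone} works just as well as the paper's ${n=t+1}$.

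The ${\CYREACH{}}$ half, however, has a genuine gap in the completeness direction. Your detection rule fires at \emph{every} working cell whose newest $H$-component equals $h$, and $\hat h$ is absorbing, i.e., it leaves the working alphabet $Q'$. Hence if ${H^{s}(c)_z=h}$ for some ${z\neq 0}$ at a sweep $s$ that the head performs before it writes ${H^{t}(c)_0}$ at the origin (e.g.\ ${s<t}$, or ${s=t}$ with $z$ swept earlier), cell $z$ freezes into $\hat h$ in mid-sweep: the working zone is severed there and the head, which was about to move onto $z$, vanishes. The simulation then halts and $\hat h$ is never produced at the origin, even though $u$ is a yes-instance of ${\CYREACH{H}^\ast}$. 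Nothing in the hypothesis ``there exist ${c\in[u]}$ and $t$ with ${H^t(c)_0=h}$'' prevents $h$ from occurring earlier at other positions --- for the fixed undecidable $H$ you have no control over how $h$ propagates --- so your reduction maps some yes-instances to no-instances. The fix is exactly what the paper does: do not intercept $h$ inside the dynamics at all; keep the shrinking-zone automaton untouched and ask, as a finite disjunction over target states, whether a state of the form $(h,x,y)$ is reachable at the origin. (A secondary, fixable imprecision: ``the enrichments only replace future changes of a cell by an absorbing state, so $F$ is still convergent'' is too quick, since an absorbing cell also alters its neighbours' dynamics; you must say how a head bounces off $\hat h$ and $\langle a\rangle$ and recheck the case analysis behind Lemma~\ref{lem:szone}.)
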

\begin{proof}
  This proof relies on construction from Example~\ref{ex:szone} which we apply on automata $G$ and $H$ mentioned above in this section. We claim that ${\convzz{G}\times\convzz{H}}$ has the desired property. Indeed, on one hand ${\PRED{G}}$ reduces to ${\PRED{\convzz{G}}}$ as follows: given ${t>0}$ and ${u\in Q_G^{\ball{t}}}$, we compute ${t_n}$ from Lemma~\ref{lem:szone} with ${n=t+1}$ and ${u'\in Q_{\convzz{G}}^{\ball{t_n}}}$ such that ${u' = \restr{\lambda_{t_n,c,c}}{\ball{t_n}}}$ where ${\restr{c}{\ball{t}}=u}$ and $c$ is constant outside $\ball{t}$. Lemma~\ref{lem:szone} shows that the first component of state ${\convzz{G}^{t_n}(u')}$ is precisely ${G^t(u)}$. The reduction follows. On the other hand ${\CYREACH{H}^\ast}$ truth-table reduces to ${\CYREACH{\convzz{H}}}$ as follows: there is $t$ and ${c\in[u]}$ with ${H^t(c)\in[h]}$ if and only if there is $t$ and ${c'\in[u']}$ with ${\convzz{H}^t(c')\in[v']}$ where 
  \begin{itemize}
  \item ${u'_z = \bigl(u_z,u_z,r\bigr)}$ and $u$ and $u'$ have same
    domain;
  \item ${v'}$ is one of the (finite number of) bounded configurations of the form ${v'_z = \bigl(h,x,y\bigr)}$.
  \end{itemize}
  Indeed, if ${H^t(c)\in[h]}$ for some ${c\in[u]}$ then Lemma~\ref{lem:szone} shows that ${\convzz{H}^{t_n}(\lambda_{n,c,c})\in[v']}$ for ${n>t}$ and some $v'$ as above and we have ${\lambda_{n,c,c}\in[u']}$. Conversely if ${\convzz{H}^t(c')\in[v']}$ for some $v'$ as above and ${c'\in[u']}$, then necessarily $c'$ contains a working zone that (possibly) extends $u'$ and survives at least $t$ steps. Then, either this zone contains no head and then states don't change inside the zone which implies ${c'\in[v']}$ and therefore ${[u]\cap[h]\neq\emptyset}$, or the orbit of $c'$ coincides during $t$ steps with that of some ${\lambda_{n,c,c}}$ with ${c\in[u]}$ and Lemma~\ref{lem:szone} implies that ${H^{t'}(c)\in[h]}$. To conclude that ${\convzz{G}\times\convzz{H}}$ has the desired property, it is enough to remark that both $\convzz{G}$ and ${\convzz{H}}$ have a quiescent state so that ${\PRED{\convzz{G}}}$ reduces to ${\PRED{\convzz{G}\times\convzz{H}}}$ and ${\CYREACH{\convzz{H}}}$ reduces to ${\CYREACH{\convzz{G}\times\convzz{H}}}$.
\end{proof}

The following result was essentially present in~\cite{vollmar81} using the formalism of bounded-change CA seen as language recognizers and therefore focusing on the complexity of recognizable languages rather than prediction problems.

\begin{proposition}
  \label{prop:bclogspace}
  Let $F$ be any 1D bounded-change CA. Then $\PRED{F}\in\NL$. Moreover, if in addition the neighborhood $V$ of $F$ is ``one-way'', \textit{i.e.\/} if ${V\subseteq \N}$ or ${V\subseteq -\N}$ then $\PRED{F}\in\LOG$.
\end{proposition}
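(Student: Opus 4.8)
The plan is to exploit the bounded-change hypothesis to compress, for each cell, its entire time-history into $O(\log t)$ bits, and then to recover the state $F^t(u)$ by a space-efficient sweep over the columns of the space-time diagram. Fix the constants $k$ (the change bound), $r$ (the radius) and $|Q|$ attached to the fixed CA $F$. For a cell $j$ and horizon $t$, call the \emph{history} of $j$ the word $(F^\tau(u)_j)_{0\le\tau\le t}$; since $F$ is $k$-change this word takes at most $k+1$ distinct consecutive values, so it is entirely described by the list of its at most $k$ change-times together with the corresponding new states, i.e.\ by $O(k\log t)=O(\log t)$ bits. The state $F^t(u)$ depends only on the triangular region $\{(j,\tau):|j|\le r(t-\tau)\}$, whose base $\tau=0$ is read off from the input $u$, and inside which the histories $H_j$ satisfy the local constraints $H_j(0)=u_j$ and the rule that $H_j(\tau+1)$ is the image under $f$ of the states at time $\tau$ of the cells in $j+V\subseteq\{j-r,\dots,j+r\}$. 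These constraints admit a \emph{unique} solution, namely the true histories, which is the crucial point for the nondeterministic argument below.

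For the one-way case, say $V\subseteq\{0,\dots,r\}$, so that the history of cell $j$ depends only on $u_j$ and the histories of cells $j+1,\dots,j+r$, I would sweep the columns from right to left while maintaining only the last $r$ computed histories in a buffer of size $O(r\log t)=O(\log t)$. At each step the new history $H_j$ is computed deterministically by simulating the local rule forward in time, reading the values of the buffered neighbour histories at each time $\tau$ from their compressed change-lists; after reaching $j=0$ the machine outputs the comparison of $H_0(t)$ with the target state $q$. No nondeterminism and only logarithmic space are used, so $\PRED{F}\in\LOG$. The symmetric case $V\subseteq-\N$ is handled by sweeping left to right.

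For the general two-way case the single-direction sweep fails, because the history of a cell depends on both sides, so I would replace deterministic computation by guess-and-verify. The machine sweeps the columns left to right, guessing each history $H_j$ (and checking on the fly that it is a valid change-list with at most $k$ changes) while keeping a sliding window of the last $2r+1$ histories. Whenever the window covers columns $j-r,\dots,j+r$, it deterministically verifies, by looping over all relevant times $\tau$, both the base condition $H_j(0)=u_j$ and the update constraint for the central column $j$; a branch is accepted only if every such local check passes and the guessed $H_0$ satisfies $H_0(t)=q$. Since the local constraints admit a unique global solution, any branch that passes all checks must have guessed exactly the true histories, so an accepting branch exists if and only if $F^t(u)=q$; the whole computation uses a window of $O((2r+1)k\log t)=O(\log t)$ cells plus $O(\log t)$ counters, hence $\PRED{F}\in\NL$.

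The only genuinely delicate point is the correctness of the nondeterministic procedure: I must argue that checking each update constraint once, while its $2r+1$-cell window is present in the buffer, already catches every inconsistency. This follows because every constraint is local and is verified exactly when its central column is the middle of the window, together with the fact that the base condition plus the update rule pin the histories down uniquely on the triangle. Beyond that, the remaining work is routine bookkeeping: handling the shrinking of the relevant triangle near the boundary of the light cone (where histories are short), and checking that reading a value $H_{j'}(\tau)$ from a compressed change-list and evaluating the local rule both fit in logarithmic space.
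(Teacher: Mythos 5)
Your proposal is correct and follows essentially the same route as the paper: compress each column of the space--time light cone into $O(k\log t)$ bits via its change-list, then in the one-way case compute columns deterministically by a single sweep keeping an $O(r\log t)$ buffer, and in the two-way case guess each column and verify the local update constraints over a sliding window of $2r+1$ columns, with correctness resting on the fact that the base row plus the local rule determine the histories uniquely. The paper's proof is the same algorithm with the same space accounting, so there is nothing substantive to add.
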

\begin{proof}
  Consider a $k$-change CA $F$ of
  radius $r$ and alphabet $Q$. Given some ${u\in Q^{\ball{rt}}}$, the following non-deterministic algorithm allows to compute ${F^t(u)}$:
  \begin{enumerate}
  \item for each $i$ with ${-rt\leq i\leq rt}$ do:
    \begin{enumerate}
    \item guess some column vector of states of height ${rt-|i|+1}$ whose first state is ${u_i}$: $C_i\in Q^{rt-|i|+1}$ with ${C_i(0)=u_i}$;
    \item if ${-rt+r\leq i\leq rt-r}$ then check that $C_i$ is compatible with ${C_{i-r},\ldots,C_{i+r}}$ \textit{i.e.} 
      \begin{itemize}
      \item for each $\tau$ with ${0< \tau\leq |i|/r}$ check that \[C_i(\tau) = f(C_{i-r}(\tau-1),\ldots,C_{i+r}(\tau-1))\] where $f$ is the local map of $F$.
      \end{itemize}
    \end{enumerate}
  \item return $C_0(t)$
  \end{enumerate}
  
  The key observation is that in a $k$-change CA a column of states of height at most $t$ appearing in a valid space-time diagram can be represented in space ${O(\log(t))}$ as a list of at most $k$ pairs (state,duration) describing the sequence of state changes and the duration of the intervals between them. Therefore, the above algorithm can be implemented in non-deterministic logspace by keeping in memory only columns $C_{i-r}$ to $C_{i+r}$. The compatibility test (loop on $\tau$) can be done directly on the compact representation of columns since $C_i(\tau)$ can be retrieved by a bounded number of comparisons and sums of small integers. This shows ${\PRED{F}\in\NL}$. 

  If we suppose that ${V\subseteq -\N}$, then the compatibility test only involves column ${C_{i-r}}$ to $C_i$. Moreover, there is a unique possible column ${C_i}$ compatible with the given columns $C_{i-r},\ldots, C_{i-1}$ which can be computed by ${C_i(\tau) = f(C_{i-r}(\tau-1),\ldots,C_{i}(\tau-1))}$. Again this can be realized using compact representation using ${O(\log(t))}$ space. We deduce in this case that ${\PRED{F}\in\LOG}$.
\end{proof}

Under the hypothesis ${\PTIME\neq \NL}$, bounded-change CA have simpler prediction problems than convergent CA. The goal of the next two subsections is to show that the same happens when considering the communication complexity of prediction problems, but that freezing CA can have undecidable cylinder reachability. Hence the difference between bounded-change and convergent CA is in the short-term prediction complexity rather than in the long-term reachability complexity.

\subsection{Communication complexity}

Communication complexity was introduced by Yao~\cite{yao79} to study
distributed computing.
We first briefly recall the classical definition of communication
complexity for any function (see~\cite{kushilevitz97} for general
reference), and then apply it to the prediction problem of cellular
automata as it was done in \cite{ccca}. The goal of this section is
to show that bounded-change CA have a lower communication complexity than CA
in general.

Consider a function ${f:A\times B\rightarrow C}$ where $A$, $B$ and
$C$ are finite sets. The communication complexity of $f$ is the amount
of information (number of bits) that need to be exchanged in the worst
case between two parties (say Alice and Bob) in order to decide the
value $f(x,y)$ when one of the parties knows only $x$ and the other
only $y$ (both know $f$ and they have unlimited computing power).

\newcommand\CC[1]{\textsc{cc}({#1})}

More precisely, a \emph{communication protocol} $\pi$ is a finite rooted binary tree where each leaf holds a value from $C$ and each internal node $n$ holds either a map ${\alpha_n:A\rightarrow \{0,1\}}$ (Alice speaks) or a map ${\beta_n:B\rightarrow \{0,1\}}$ (Bob speaks). The run of $\pi$ on input ${(a,b)}$ is the path ${\rho(a,b)}$ starting from the root and defined by the following descent in the tree until reaching a leaf: when at node $n$, go to left child if ${\alpha_n(a)=0}$ (resp. ${\beta_n(b)=0}$) and to right child otherwise. The output ${\pi(a,b)\in C}$ of the protocol $\pi$ on input ${(a,b)}$ is the value held by the leaf reached by ${\rho(a,b)}$. The cost $c_\pi(a,b)$ of $\pi$ on input ${(a,b)}$ is the length of ${\rho(a,b)}$.  A protocol \emph{solves $f$} if ${f(a,b)=\pi(a,b)}$ on all inputs ${(a,b)\in A\times B}$. Finally, the \emph{communication complexity} of $f$ is 
\[\CC{f}=\min_{\pi\text{ solves $f$}}\max_{a,b}c_\pi(a,b).\]


For any CA $F$ we consider the communication complexity of the
prediction problem by dividing the space into two roughly equal parts through an
hyperplane (we extend the definition of~\cite{ccca} to any dimension). Precisely, given any ${n\in\N}$ we define $A_n$ and $B_n$ as follows:
\begin{align*}
  A_n &= \ball{n}\cap\{z\in\Z^d : \pi_1(z) \leq 0\},\\
  B_n &= \ball{n}\cap\{z\in\Z^d : \pi_1(z) > 0\},
\end{align*}
where ${\pi_1:\Z^d\rightarrow\Z}$ denotes the projection on the first coordinate. The proof idea in Theorem~\ref{thm:cc} below works for other choices of hyperplane, but we prefer to keep notations simple.

For any alphabet $Q$ let ${(u,v)\in Q^{A_n}\times Q^{B_n}\mapsto u|v \in Q^{\ball{n}}}$ denote the concatenation map defined as the inverse map of ${w\in Q^{\ball{n}}\mapsto \bigl(\restr{w}{A_n},\restr{w}{B_n}\bigr)}$. Given a CA $F$ of radius $r$, we then define for each $n$ the communication problem ${\mathcal{P}_{F,n}}$ associated to $\PRED{F}$ as follows: 
\[(u,v)\in Q^{A_{rn}}\times Q^{B_{rn}}\mapsto F^n(u|v) \in Q\]
and we consider
\newcommand\COM[1]{CC_{#1}}
the communication complexity of prediction of $F$ through the following map: 
\[\COM{F}: n\in\N\mapsto\CC{\mathcal{P}_{F,n}}.\]

Clearly ${\COM{F}\in O(n^d)}$ for any $F$ of dimension $d$ since it is always possible to apply the trivial protocol where Alice sends all its input to Bob and Bob then knows all the information to answer. Let us first recall that general CAs can reach maximal communication complexity (this is a straightforward extension of \cite{ccca} which is written in the one-dimensional setting). The intuition is that a $d$-dimension CA in time $n$ can test equality between ${\Omega(n^d)}$ pairs of bits, one in the Alice region, one in the Bob region, and have the conjunction of all these tests readable at a given position. 

\begin{proposition}
  \label{prop:highcc}
  For any $d$, there is a CA $F$ of dimension $d$ with ${\COM{F}\in\Omega(n^d)}$ and it can be chosen of radius 1.
\end{proposition}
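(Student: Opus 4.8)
The plan is to reduce from the EQUALITY communication problem and to realize an instance of it inside the space--time diagram of a suitable radius-$1$ CA. Recall that the function $\mathrm{EQ}_m\colon\{0,1\}^m\times\{0,1\}^m\to\{0,1\}$ testing $a=b$ has communication complexity at least $m$: the diagonal $\{(a,a):a\in\{0,1\}^m\}$ is a fooling set of size $2^m$, so any protocol needs at least $m$ bits. I would therefore design $F$ over a bounded alphabet (independent of $n$) together with an encoding of $\{0,1\}^m\times\{0,1\}^m$ into inputs $(u,v)\in Q^{A_{n}}\times Q^{B_{n}}$ (here $r=1$, so $\ball{rn}=\ball{n}$) with $m=\Theta(n^d)$, such that $F^n(u|v)_0$ equals $\mathrm{EQ}_m(a,b)$ on this family of inputs. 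Since restricting the inputs cannot increase communication complexity, this yields $\CC{\mathcal{P}_{F,n}}\geq \CC{\mathrm{EQ}_m}\geq m=\Omega(n^d)$, matching the trivial upper bound already noted.

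\emph{Geometry and pairing.} I split space by the first coordinate exactly as in the definition of $A_n,B_n$, and pair each Alice cell $(-k,\mathbf{y})$, with $1\le k\le n/2$, to its mirror Bob cell $(k,\mathbf{y})$, where $\mathbf{y}\in\{-n/2,\dots,n/2\}^{d-1}$. All these cells lie in $\ball{n/2}$, hence are within reach of the origin in $n/2$ steps for a radius-$1$ rule, and there are $\Theta(n)\cdot\Theta(n^{d-1})=\Theta(n^d)$ of them, which fixes $m$. The encoding places, at time $0$, the bit $a_{-k,\mathbf{y}}$ on a dedicated right-moving layer at cell $(-k,\mathbf{y})$ and the bit $b_{k,\mathbf{y}}$ on a left-moving layer at cell $(k,\mathbf{y})$, all remaining cells carrying default states; note that $a$ is entirely supported in $A_n$ and $b$ entirely in $B_n$.

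\emph{Dynamics, in two overlapping phases.} In the first phase the two signal layers translate the data bits at speed $1$ along the first axis, passing through one another transparently since they occupy distinct layers. On the line through $\mathbf{y}$, the right-mover carrying $a_{-k,\mathbf{y}}$ and the left-mover carrying $b_{k,\mathbf{y}}$ both reach the hyperplane cell $(0,\mathbf{y})$ exactly at time $k$; there a comparison is performed and its result is accumulated by conjunction into a flag stored at $(0,\mathbf{y})$. Thus by time $n/2$ each cell $(0,\mathbf{y})$ holds $\mathrm{flag}(\mathbf{y})=\bigwedge_{k}[a_{-k,\mathbf{y}}=b_{k,\mathbf{y}}]$. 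In the second phase a front contracting in the $d-1$ transverse directions sweeps the hyperplane $\{(0,\mathbf{y})\}$ from radius $n/2$ inward to the origin, each cell conjoining its own flag with the already-combined values received from its outer neighbours; since only a single accumulated bit flows along each edge, the potential information bottleneck at the origin is harmless. Started once all flags have settled (time $n/2$) and contracting by one cell per step, this wave deposits $\bigwedge_{\mathbf{y}}\mathrm{flag}(\mathbf{y})=\mathrm{EQ}_m(a,b)$ at the origin precisely at step $n$.

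The conceptual content (reduction from EQUALITY, mirror pairing giving $\Theta(n^d)$ comparisons, and a convergecast of a single AND-bit) is clean, so the real work, and the step I expect to be the main obstacle, is the synchronisation: exhibiting an explicit radius-$1$ local rule that simultaneously (i) keeps the static input bits in place, (ii) runs the two transparent streams together with the per-line comparison and accumulation, and (iii) launches and propagates the transverse AND-wave so that each flag is collected only after it is ready and the global conjunction reaches the origin at exactly step $n$. Confining the comparisons to $\ball{n/2}$ leaves a budget of $n/2$ steps for the wave and lets the light-cone geometry itself act as the clock (the wave can be triggered by the ``done'' tokens emitted when the last pair crosses each $(0,\mathbf{y})$), so no external counter is needed. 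Checking that all these components cohabit with a bounded alphabet is routine but tedious, and is where most of the effort will go.
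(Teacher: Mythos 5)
Your proposal is correct and follows essentially the same route as the paper: a fooling-set/EQUALITY lower bound realized by pairing each cell $z$ with its mirror $\overline z$ across the hyperplane $\pi_1=0$, counter-streaming the two bit families at speed one so that each pair meets on that hyperplane, restricting attention to inputs that agree outside $\ball{n/2}$, and reading the conjunction of the $\Theta(n^d)$ comparisons at the origin after $n$ steps. The only divergence is the step you single out as the main obstacle: you propose a synchronized contracting AND-wave over the hyperplane, triggered once all flags are ``ready,'' whereas the paper makes this issue disappear entirely by taking ``equal'' as the default value $(T,1)$ and letting any detected mismatch produce an absorbing state $(T,0)$ that simply spreads; since a mismatch for a pair in $\ball{n/2}$ is detected within $n/2$ steps at transverse distance at most $n/2$ from the origin, the light cone alone guarantees arrival by time $n$, with no clock, no trigger tokens, and no phase separation between accumulation and collection. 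Adopting that trick reduces your ``routine but tedious'' part to a four-line local rule.
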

\begin{proof}
  Let $e_1=(1,0,\ldots,0)$ and define $F$ of neighborhood ${B(1)}$ on alphabet ${Q=\{\leftarrow,\rightarrow,T\}\times\{0,1\}}$ as follows:
  \[F(c)_z =
  \begin{cases}
    c_{z-e_1}&\text{ if ${c_z\in\{\rightarrow\}\times\{0,1\}}$},\\
    c_{z+e_1}&\text{ if ${c_z\in\{\leftarrow\}\times\{0,1\}}$},\\
    (T,0)&\text{ else if ${\exists z'\in B(1)}$ with ${c_{z+z'}=(T,0)}$ or ${\pi_2(c_{z-e_1})\neq\pi_2(c_{z+e_1})}$},\\
    (T,1)&\text{ in any other case.}
  \end{cases}\]
  Consider configurations $c$ satisfying the following property :
  \begin{itemize}
  \item ${c_z \in (\rightarrow,\{0,1\})}$ if ${\pi_1(z)<0}$,
  \item ${c_z \in (\leftarrow,\{0,1\})}$ if ${\pi_1(z)>0}$,
  \item ${c_z = (T,1)}$ else.
  \end{itemize}
  We call such configurations \emph{correct}. For any
  ${z=(i_1,\ldots,i_d)\in\Z^d}$, consider
  ${\overline{z}=(-i_1,i_2,\ldots,i_d)}$. Let $X_n$ be the set of correct configurations such that ${\pi_2(c_z)=\pi_2(c_{\overline{z}})}$ for all ${z\not\in\ball{n}}$. It is straightforward to show that for ${c\in X_{n/2}}$ it holds that ${\pi_2(F^{n}(c)_0)=1}$ if and only if ${\pi_2(c_z)=\pi_2(c_{\overline{z}})}$ for all ${z\in\ball{n/2}}$. If we now consider the problem ${\mathcal{P}_{F,n}}$, we deduce a fooling set of size ${2^{n^d/2}}$, \textit{i.e.} a set of input ${(u_i,v_i)}$ such that for all ${i\neq j}$ it holds ${F^n(u_i,v_j)\neq F^n(u_i,v_i)}$. This implies ${\COM{F}\in\Omega(n^d)}$ (see \cite{kushilevitz97}).
\end{proof}

Bounded-change CA are intrinsically limited in communication complexity compared to general CA as shown by the following theorem.

\begin{theorem}
  \label{thm:cc}
  For any bound-change CA $F$ of dimension $d$, we have ${\COM{F}\in O\bigl(n^{d-1}\log(n)\bigr)}$.
\end{theorem}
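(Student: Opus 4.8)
The plan is to show that all interaction between the two halves is funneled through the cells lying within distance $r$ of the separating hyperplane, and then to exploit the $k$-change property to compress the history of those cells. First I would fix a $k$-change CA $F$ of radius $r$ and alphabet $Q$, and introduce the two interface slabs $\Gamma_A = \{z\in\ball{rn} : 1-r\leq \pi_1(z)\leq 0\}$ and $\Gamma_B = \{z\in\ball{rn} : 1\leq \pi_1(z)\leq r\}$, namely the cells of Alice's region that Bob needs to advance his half and vice versa. Each slab has $r$ admissible values for the first coordinate and $(2rn+1)^{d-1}$ for the remaining ones, so $|\Gamma_A|,|\Gamma_B| = O(n^{d-1})$. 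The crucial structural observation is that, writing $a(t)$ and $b(t)$ for the restrictions of $F^t(u|v)$ to $\Gamma_A$ and $\Gamma_B$, the value $a(t+1)$ is computable by Alice from her input together with $b(0),\dots,b(t)$, and $b(t+1)$ by Bob from his input together with $a(0),\dots,a(t)$. By the $k$-change hypothesis each interface cell changes at most $k$ times along the orbit, so the total number $M$ of interface change events satisfies $M\leq k(|\Gamma_A|+|\Gamma_B|)=O(n^{d-1})$.

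Next I would design an \emph{epoch-based} protocol that exchanges information only at change events, thereby avoiding the $O(n^d)$ cost of naively communicating the interface at every one of the $n$ steps. The invariant maintained at the start of an epoch, at some time $t_0$, is that both players know the true states $a(t_0),b(t_0)$ and each has correctly simulated its own half up to time $t_0$. Inside the epoch, each player simulates forward under the assumption that the \emph{other} interface stays frozen at its time-$t_0$ value: Alice advances her whole left region using $b\equiv b(t_0)$ and computes the first time $\tau_A\in\{t_0+1,\dots,n\}$ (or $\infty$) at which some cell of $\Gamma_A$ changes, and Bob symmetrically computes $\tau_B$. The players exchange $\tau_A$ and $\tau_B$ at cost $O(\log n)$, set $\tau=\min(\tau_A,\tau_B)$, and the side(s) owning a cell that changes at $\tau$ transmit the identities and new states of those cells; both then update to the true configuration at time $\tau$ and open a new epoch from $\tau$. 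The run stops when $\tau=n$ or when $\tau_A=\tau_B=\infty$, at which point Alice, having $0\in A_n$ in her region together with the correct interface history, outputs $F^n(u|v)_0$.

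For correctness I would argue that the two local simulations are exact up to time $\min(\tau_A,\tau_B)$: on the interval $[t_0,\min(\tau_A,\tau_B)]$ both $a$ and $b$ are genuinely constant, so both frozen-interface assumptions hold and both computed trajectories coincide with the true orbit; consequently $\tau$ is exactly the next true interface change time and the invariant is restored at $\tau$. For the cost, each epoch is closed by at least one interface change, so there are at most $M+1=O(n^{d-1})$ epochs; each contributes $O(\log n)$ bits for the exchange of $\tau_A,\tau_B$, and the transmissions of changed cells contribute $O(\log n)$ bits per change event (a cell identifier among $O(n^{d-1})$ plus a symbol of $Q$), i.e.\ $O(M\log n)$ bits over the whole run. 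Summing, the protocol has worst-case cost $O(n^{d-1}\log n)$, giving $\COM{F}\in O(n^{d-1}\log n)$. As a sanity check, for $d=1$ the interface reduces to $O(1)$ cells with $O(1)$ change events, and the bound specializes to the expected $O(\log n)$.

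The main obstacle is precisely the point that legitimizes the epoch trick: showing that each player can locally and correctly predict the next change of its own interface without continuous communication. This hinges on the causal, one-step dependence of $a(t+1),b(t+1)$ on time-$t$ data, and on the fact that, as long as neither interface has changed, each frozen-interface simulation stays faithful — so the two independently computed first-change times are consistent and their minimum is the true event time. Everything else (the $O(n^{d-1})$ interface count, the $O(\log n)$ per-change encoding) is routine once this synchronization-free simulation is established.
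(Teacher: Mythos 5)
Your protocol is essentially the paper's own: both use width-$r$ interface slabs on each side of the hyperplane, have each party simulate under a frozen-other-interface assumption until its own slab changes, exchange the change times and a diff report for the minimum, and charge the $O(\log n)$ per-event cost to the $O(n^{d-1})$ total interface change events guaranteed by the $k$-change property. The argument is correct (your epoch count is even slightly cleaner than the paper's $O(n\log n)$ bound for the time-counter traffic); just remember to account explicitly for the one-time $O(n^{d-1})$ initialization needed so that both players know $a(0)$ and $b(0)$.
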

\begin{proof}
  Let $r$ be the radius of $F$ and $k$ be such that $F$ is $k$-change.
  Fix some ${n\in\N}$. We distinguish two zones of the space that play
  an important role: 
  \begin{align*}
    Z_A &= A_{rn} \cap \{z\in\Z^d : -r+1\leq\pi_1(z)\leq 0\},\\
    Z_B &= B_{rn} \cap \{z\in\Z^d : 1\leq\pi_1(z)\leq r\}.\\
  \end{align*}
  These zones are such that Alice (resp. Bob) only needs to know the
  configuration on ${A_{rn}\cup Z_B}$ (resp. ${B_{rn}\cup Z_A}$) to know the configuration after one step of $F$ on $A_{r(n-1)}$ (resp. $B_{r(n-1)}$).
  The protocol is as follows: first, as an initialization step, Alice
  and Bob exchange the content of the $Z_A$ and $Z_B$ and initialize
  their time counter at $0$. Then, they repeat as much as necessary
  the following loop:
  \begin{enumerate}
  \item Alice (resp. Bob) separately compute the evolution step by
    step on $A_{rn}$ (resp. $B_{rn}$) \textbf{supposing that nothing changes}
    in $Z_B$ (resp. $Z_A$) and \textbf{until a state change is
      observed} in $Z_A$ (resp. $Z_B$);
  \item Alice (resp. Bob) send the time counter at which her change was observed; let $t_m$ be the minimum of the time value between Alice and Bob;
  \item if Alice (resp. Bob) has a time counter equal to $t_m$ then she
    sends a ``diff report'' which contains the list of cell changes
    that occurred inside $Z_A$ (resp. $Z_B$) between the previous time counter value and $t_m$;
  \item Alice (and symmetrically for Bob) possibly receives the other ``diff report'' and
    updates its knowledge as follows:
    \begin{itemize}
    \item updates her knowledge of $Z_B$ according to the diff
      report (if received);
    \item if its time counter value is strictly larger than $t_m$,
      then revert its knowledge of her half-space to what it was at
      time $t_m$ (which she can do without any further information
      from Bob).
    \item sets its time counter as $t_m$;
    \end{itemize}
  \end{enumerate}
  This protocol allows Alice and Bob to correctly compute the
  evolution of $F$ during $n$ steps (without even supposing that
  it is bounded-change) and thus solves ${\mathcal{P}_{F,n}}$.
  Indeed, the fact that they retrospectively jump back in time to
  the step of the first change in ${Z_A\cup Z_B}$ zones ensures that the
  hypothesis of no change in ${Z_A\cup Z_B}$ is correct for all the
  retained computation steps.
  
  Let us now upper bound the cost of this protocol:
  \begin{itemize}
  \item the initialization costs $O(n^{d-1})$ (the sizes of $Z_A$ and $Z_B$ are $rn^{d-1}$);
  \item the total communication for time counters is ${O(n\log(n)}$ since the main loop of the protocol is executed at most $n$ times;
  \item each diff report with $i$ cell changes costs at most ${i(\log(|Q|) + (d+1)\log(n))}$:
    $\log(n)$ for the time step, $d\log(n)$ for
    communicating the position of each cell, $\log(|Q|)$ to describe the
    new state;
  \end{itemize}
  The protocol is such that each cell change of each diff report sent corresponds to a real state change in ${Z_A\cup Z_B}$ during the first $n$ steps of the evolution of $F$. Since $F$ is $k$-change, the total number of changes in
  ${Z_A\cup Z_B}$ is less than ${2rkn^{d-1}}$. Therefore the total cost of the protocol is at most ${O\bigl(n^{d-1}\log(n)\bigr)}$.
\end{proof}

Convergent CA in general require higher communication complexity as shown in the following proposition.

\begin{proposition}
  \label{prop:highccconvergent}
  There exists a 1D convergent CA $F$ with ${\COM{F}\in\Omega(\sqrt{n})}$.
\end{proposition}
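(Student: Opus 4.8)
The plan is to instantiate the shrinking-zone construction $\convzz{G}$ of Example~\ref{ex:szone} on a 1D radius-$1$ CA $G$ whose prediction problem has maximal communication complexity, namely one with $\COM{G}\in\Omega(n)$ as provided by Proposition~\ref{prop:highcc} in dimension $d=1$. By Lemma~\ref{lem:szone} the CA $\convzz{G}$ is convergent (and not bounded-change, consistently with Theorem~\ref{thm:cc}), so it only remains to lower bound $\COM{\convzz{G}}$. The crucial quantitative feature I will exploit is the quadratic slowdown of the construction: simulating $t$ steps of $G$ from a working zone of radius $n=t$ costs $N=t_n=\sum_{i=1}^{t}(4i+1)=2t^2+3t$ real steps, after which the zone has shrunk to the single cell $0$ which carries $G^t$ in its first component. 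The goal is to convert this slowdown, together with the linear communication complexity of $G$, into a $\sqrt{\,\cdot\,}$ lower bound for $\convzz{G}$.

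The heart of the argument is a communication-free reduction establishing $\CC{\mathcal{P}_{G,t}}\leq \CC{\mathcal{P}_{\convzz{G},N}}$ for $N=2t^2+3t$. Given an input $(u,v)\in Q_G^{A_t}\times Q_G^{B_t}$ of $\mathcal{P}_{G,t}$, Alice and Bob individually build their halves of the shrinking-zone configuration $\lambda_{t,c,c}$, where $c$ extends $u|v$ by the blank padding dictated by the definition of $\lambda$. The key point is that $\lambda_{t,c,c}$ is defined cell by cell: the state at position $z$ depends only on $c_z$ and on $z$ itself. Hence Alice, from $u$ alone, computes $\restr{\lambda_{t,c,c}}{A_N}$ (the zone cells $-t,\dots,0$ together with the head placed at $-t$, and state $b$ further left), and Bob, from $v$ alone, computes $\restr{\lambda_{t,c,c}}{B_N}$, using no communication for this set-up. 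They then run any protocol solving $\mathcal{P}_{\convzz{G},N}$ on this input; both learn the value $\convzz{G}^N(\lambda_{t,c,c})_0$, whose first component equals $G^t(u|v)$ by Lemma~\ref{lem:szone} applied with $n=t$. Reading off this coordinate yields the answer to $\mathcal{P}_{G,t}$. Since cell $0$ lies in Alice's half for both problems and both the input encoding and the output decoding are local (no bits are exchanged for them), relabelling the leaves of the protocol tree produces a protocol for $\mathcal{P}_{G,t}$ of the same cost, which proves the inequality.

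Combining the reduction with $\COM{G}\in\Omega(n)$ gives $\COM{\convzz{G}}(N)=\CC{\mathcal{P}_{\convzz{G},N}}\geq\CC{\mathcal{P}_{G,t}}=\Omega(t)$ for $N=2t^2+3t$, i.e. $t=\Theta(\sqrt{N})$, so $\COM{\convzz{G}}(N)\in\Omega(\sqrt{N})$ along the infinite sequence $N\in\{2t^2+3t:t\in\N\}$, which suffices to conclude $\COM{\convzz{G}}\in\Omega(\sqrt{n})$. The main obstacle is precisely the reduction of the previous paragraph: one must verify that encoding the $G$-instance into $\convzz{G}$ requires no communication (this is exactly the cell-wise locality of $\lambda_{n,c,c}$) and that the single readout cell lies on a fixed side of the Alice/Bob cut while still encoding $G^t$ at the precise time $N=t_n$ (this is the content of Lemma~\ref{lem:szone} specialized to $n=t$, where the zone collapses onto cell $0$). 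Once these two points are secured, the quadratic slowdown $N=\Theta(t^2)$ mechanically turns the linear lower bound for $G$ into the desired $\Omega(\sqrt{n})$ bound.
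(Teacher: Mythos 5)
Your proposal is correct and follows essentially the same route as the paper: instantiate $\convzz{G}$ on a $G$ with $\COM{G}\in\Omega(n)$ from Proposition~\ref{prop:highcc}, reduce $\mathcal{P}_{G,t}$ to $\mathcal{P}_{\convzz{G},t_n}$ via the cell-wise (hence communication-free) encoding $\lambda_{t,c,c}$, read off $G^t$ at cell $0$ via Lemma~\ref{lem:szone}, and let the quadratic slowdown $t_n=\Theta(t^2)$ convert the linear bound into $\Omega(\sqrt{n})$. Your write-up is in fact slightly more careful than the paper's on the locality of the encoding and on the fact that the bound is obtained along the subsequence $\{t_n\}$.
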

\begin{proof}
  Let $G$ be any 1D CA of radius 1 such that ${\COM{G}\in\Omega(n)}$ (it exists by Proposition~\ref{prop:highcc}). We claim that ${F=\convzz{G}}$ is such that ${\COM{F}\in\Omega(\sqrt{n})}$. To see this consider for any $n$ the communication problem ${\mathcal{P}_{G,n}}$ associated to $G$ on some input ${(u,v)\in Q_G^{A_{n}}\times Q_G^{B_{n}}}$ and let ${(u',v')=(\restr{c'}{A_{\tau_n}},\restr{c'}{B_{\tau_n}})}$ where ${c'=\lambda_{n,c,c}}$ and ${\tau_n}$ is the constant from Lemma~\ref{lem:szone} that guarantees \[\convzz{G}^{\tau_n}(\lambda_{n,c,c})_0 = (G^n(c)_0,G^{n-1}(c)_0,r).\]
This transformation from ${(u,v)}$ to ${(u',v')}$ is a reduction of the communication problem ${\mathcal{P}_{G,n}}$ to ${\mathcal{P}_{\convzz{G},\tau_n}}$ since $u'$ (resp. $v'$) can be computed from $u$ only (resp. $v$ only), so we have ${\CC{\mathcal{P}_{G,n}}\leq\CC{\mathcal{P}_{\convzz{G},\tau_n}}}$. Since ${\tau_n\leq kn^2}$ for some ${k>0}$ we deduce that ${\COM{\convzz{G}}(n) \geq \COM{G}\bigl(\sqrt{n/k}\bigr)}$. The proposition follows from the hypothesis on $\COM{G}$.
\end{proof}

\subsection{Minsky machines simulation by 1D freezing CA}
\label{sec:minsky}

In this section we show that any $k$-counter Minsky
machine~\cite{Minsky} can be simulated by a 1D freezing CA where the
evolution of both the state and counter values are encoded by signals
moving through the space-time diagram, one transition per space unit. The instant configuration (state + counters) of the counter machine at time $t$ is represented by the temporal column (trace) of cell $t$ of the cellular automaton, where the value of counters is encoded in  unary as the (temporal) distance between occurrences of two special states of the CA.

More precisely, the control of the machine is represented as a signal
of equation $0\leqslant y-2x\leqslant 1$ where both cells in column
$x$ carry the state of the counter machine and an emptiness flag for
each counter at time $x$. A counter can easily be tested for equality
to zero. Incrementation corresponds to a local deceleration of the
signal and decrementation to a local acceleration. Incrementation,
decrementation or no-operations decisions are carried from the control
to the counter as a simple vertical signal.

This encoding scheme, already presented in \cite{GolOlThey15} and detailed more recently in \cite{CartonGR18}, can convince the reader that 1D freezing CA are 'Turing-universal' devices. However, our goal is to prove undecidability results about problem $\CYREACH{}$ and the maximal number of state change per cell in a freezing CA, so we present a construction with additional technical details useful for our goals.

\begin{definition}
  A (deterministic) $k$-\emph{counter Minsky machine} is a $4$-tuple ${M=(Q_M,q_0,h,\tau)}$ where ${q_0,h\in Q_M}$ are the initial and halting states and ${\tau : Q_M\times\{0,1\}^k\rightarrow Q_M\times\{-1,0,1\}^k}$ is its transition map, which verifies ${\tau(h,\cdot)=(h,(0,\ldots,0))}$. A \emph{configuration} of $M$ is an element of ${Q_M\times\N^k}$. $M$ changes any configuration ${c=(q,(\chi_i)_{1\leq i\leq k})}$ in one step into configuration ${M(c)=(q',(\max(0,\chi_i+\delta_i))_{1\leq i\leq k})}$ where ${(q',(\delta_i)_{1\leq i\leq k}) = \tau (q,(\min(1,\chi_i))_{1\leq i\leq k})}$. $M$ halts on input ${(\chi_i)_{1\leq i\leq k}\in\N^k}$ if there is a time $t$ such that ${M^t(q_0,(\chi_i)_{1\leq i\leq k})\in (h,\N^k)}$.
\end{definition}

Minsky machines might be slow, but they have the features of an acceptable computational model and in particular have an undecidable halting problem.

\begin{theorem}
  \label{theo:minsky}
  \cite{Minsky}
  There is a $2$-counter machine $M_u$ such that it is undecidable to know whether $M_u$ halts on a given input. It is also undecidable whether some given $2$-counter machine halts on the empty input (all counters equal to zero). 
\end{theorem}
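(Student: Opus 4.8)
The statement is the classical theorem of Minsky, so the plan is to recall the standard reduction from the halting problem of Turing machines, following the two-step route of first simulating a Turing machine by a counter machine with several counters, and then compressing the counters down to two. First I would fix a universal Turing machine $U$, whose halting problem on arbitrary inputs is undecidable, and describe how a register machine with a handful of counters simulates an arbitrary Turing machine $T$: the two halves of the tape around the head are encoded as two integers in base $|\Sigma|$, where $\Sigma$ is the tape alphabet, and stored in two counters, while one or two auxiliary counters perform the base-$|\Sigma|$ shifts that implement head moves and symbol rewrites. Reading the scanned symbol amounts to a remainder computation, and the finite control of $T$ is folded into the finite control (state set) of the counter machine. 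Each instruction of $T$ is thus simulated by a fixed finite block of increment, decrement and test-for-zero instructions, so halting of the counter machine is equivalent to halting of $T$.

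The core technical step, and the main obstacle, is reducing the number of counters to exactly two. Here I would use G\"odel's prime encoding: the values $c_1,\dots,c_k$ of the $k$ counters of the intermediate machine are packed into a single number ${N = p_1^{c_1}p_2^{c_2}\cdots p_k^{c_k}}$ held in the first counter $A$, where $p_1<p_2<\cdots<p_k$ are distinct primes, while the second counter $B$ serves only as scratch space and equals $0$ outside each simulated step. Incrementing $c_i$ becomes multiplying $A$ by $p_i$; decrementing $c_i$ becomes dividing $A$ by $p_i$; and testing $c_i=0$ becomes testing whether $p_i$ divides $N$. Each of these arithmetic operations on $A$ is realized by a small gadget that repeatedly transfers units between $A$ and $B$ (subtracting from one while adding a suitable multiple to the other, using the test-for-zero to detect when the source is exhausted, then moving $B$ back into $A$). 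Verifying that these gadgets correctly simulate multiplication, division and divisibility-by-$p_i$ using only the primitive operations of a two-counter machine, and that a failed division with nonzero remainder is detected and interpreted as the nonzero case of the test, is the delicate bookkeeping; everything else is routine.

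Granting this, the first assertion follows by applying the construction to the universal machine $U$: the resulting two-counter machine $M_u$ halts on the suitably encoded input exactly when $U$ halts on the corresponding input, and since the latter is undecidable so is the former. For the second assertion I would reduce halting of $M_u$ on input $x$, already shown undecidable, to halting on the empty input. Given $x$, I prepend to $M_u$ a fixed initialization block of increment instructions that, starting from both counters at $0$, produces the encoding of $x$ in the two counters and then hands control to $M_u$. Call the resulting machine $M_x$; by construction $M_x$ halts on the all-zero configuration if and only if $M_u$ halts on input $x$, and the map $x\mapsto M_x$ is plainly computable. Hence deciding halting on the empty input for an arbitrary given two-counter machine would decide halting of $M_u$ on $x$, which is impossible. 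This proves both claims.
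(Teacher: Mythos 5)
The paper does not prove this statement: it is quoted verbatim as a classical result and attributed to \cite{Minsky}, so there is no in-paper argument to compare against. Your sketch is the standard textbook proof (Turing machine $\to$ tape-halves-as-base-$|\Sigma|$ counters $\to$ two counters via the G\"odel encoding ${p_1^{c_1}\cdots p_k^{c_k}}$ with one scratch counter), and it is correct in outline; the reduction of the second claim to the first by prepending a finite block of increments that builds the encoding of $x$ from the all-zero configuration is also exactly right, since the map ${x\mapsto M_x}$ is computable even though the initialization block may be long. The one classical subtlety worth flagging is that a two-counter machine cannot compute the encoding map itself (e.g.\ it cannot turn $n$ into $2^n$), so the undecidable instance must be the \emph{already encoded} input, which is precisely how you phrased the first assertion --- the encoding is performed by the reduction, outside the machine. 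The only real work you defer is the verification of the multiplication/division/divisibility gadgets, which you correctly identify as routine but delicate bookkeeping; for the purposes of this paper, which only uses the theorem as a black box, your level of detail is appropriate.
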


We now detail our construction that transforms any $k$-counter machine ${M=(Q_M,q_0,h,\tau)}$ into a 1D freezing CA $F_M$ that simulates it. Its neighborhood is ${V=\{-1,0,1\}}$ and its state set is \[Q=\{b,w\}\cup Q_c^k\cup Q_s\cup I\] where:
\begin{itemize}
\item $b$ is a ``blank'' symbol used to fill a configuration and leave room for a computation to expand spatially,
\item $w$ is a ``wall'' state to protect computation zones and avoid back propagation of wrong halting information,
\item ${Q_c = \{1,\#_{-1},\#_0,\#_1\}\times \{-1,0,1\}}$ is the
  \emph{counter alphabet} used to encode the value of one counter and
  the elementary operation to be applied on it,
\item ${Q_s = Q_M\times\{-1,0,1\}^k\cup Q_M}$ is the \emph{control
    alphabet} used to encode the state of M (an element of ${Q_M}$)
  and (possibly) an elementary operation per counter,
\item ${I=\{i_0,\ldots,i_{K}\}}$ is a ``large enough'' ordered set serving as a bounded countdown before initializing a computation of $M$ on an empty input in such a way that it maximizes the number of state changes in a cell; in practice we fix ${K=3k+3}$.
\end{itemize}

Before giving the transition rule of $F$, let us give the (pre-)order $\preceq$ on $Q$ ensuring that it is a freezing CA together with some useful intuitions about the behavior of $F$:
\begin{itemize}
\item ${h\preceq q_c}$ for any ${q_c\in Q_c}$ (\emph{$h$ is used to propagate the halting information of an halting computation of $M$ back to the cell where it started}),
\item ${w\preceq q}$ for any ${q\in Q}$ (\emph{$w$ can be triggered from any state}),
\item ${(c_i,\delta_i)_{1\leq i\leq k}\preceq (c_i',\delta_i)_{1\leq i\leq k}}$ if ${c_i\prec c'_i}$ for all i where ${\#_1\prec\#_0\prec\#_{-1}\prec1}$ (\emph{a counter is represented as a sequence of 1s terminated by the sequence ${\#_{-1}\#_0\#_1}$, used to adjust the offset of the next counter value on the next column, and carrying an invariable operation}),
\item ${q_c\preceq q_s'\preceq q_s\preceq b}$ for any ${q_c\in Q_c,q_s'\in Q_M\times\{-1,0,1\}^k,q_s\in Q_M}$ (\emph{the ``normal sequence'' of state changes is: blank, then control state, then control state plus instruction, then counter}),
\item ${(\#_{-1},0)^k\preceq i_K\preceq\cdots\preceq i_0}$ and ${(q_0,0^k)\preceq i_K}$ (\emph{the special sequences with  $K$ state changes to initialize a computation on an empty input}).
\end{itemize}

\newcommand\cll[4]{\draw[fill=#3] (#1,#2) rectangle ++(1,1) node[pos=.5] {#4};}
\newcommand\ceh[2]{\cll{#1}{#2}{green!50!white}{$h$}}
\newcommand\cec[3]{\cll{#1}{#2}{blue!50!white}{#3}}
\newcommand\ces[3]{\cll{#1}{#2}{red!50!white}{#3}}
\newcommand\ceb[2]{\cll{#1}{#2}{white}{$b$}}
\newcommand\cew[2]{\cll{#1}{#2}{gray}{$w$}}
\newcommand\cei[3]{\cll{#1}{#2}{yellow!50!white}{#3}}
\newcommand\cedh[2]{\cll{#1}{#2}{white}{$\cdots$}}
\newcommand\cedv[2]{\cll{#1}{#2}{white}{$\vdots$}} 

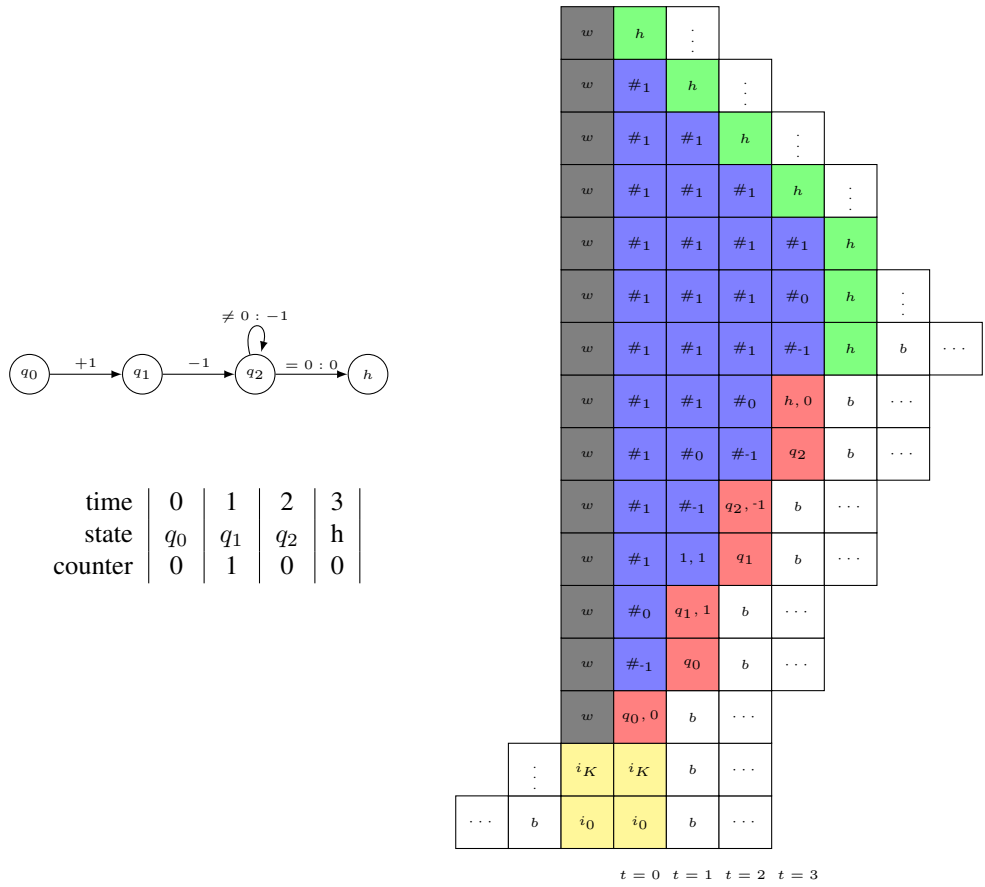
\begin{figure} 
  \label{fig:deffm}
  \begin{center}
    \begin{minipage}[c]{0.45\linewidth}
      \begin{center}
        \begin{tikzpicture}[scale=1]\tiny
          \tikzstyle{config} = [draw,circle,minimum size=15,fill=white]
          \node[config] (qz) at (0,0) {$q_0$};
          \node[config] (qu) at (1.5,0) {$q_1$};
          \node[config] (qd) at (3,0) {$q_2$};
          \node[config] (qh) at (4.5,0) {$h$};
          \draw[->] (qz) edge  node[above]{$+1$}  (qu);
          \draw[->] (qu) edge  node[above]{$-1$}  (qd);
          \draw[->] (qd) edge[loop above]  node[above]{$\neq 0 : -1$}  (qd);
          \draw[->] (qd) edge  node[above]{$=0 : 0$}  (qh);
        \end{tikzpicture}
        \end{center}\vskip 1cm
        \begin{center}
        \begin{tabular}{r|c|c|c|c|}
          time &0&1&2&3\\
          state & $q_0$&$q_1$&$q_2$ &h\\
          counter &0&1&0&0
        \end{tabular}
      \end{center}
    \end{minipage}
    \begin{minipage}[c]{0.5\linewidth}
      \begin{tikzpicture}[scale=.7]\tiny
        \def\-{\raisebox{.75pt}{-}}
        \draw (2.5,-0.5) node {$t=0$};
        \draw (3.5,-0.5) node {$t=1$};
        \draw (4.5,-0.5) node {$t=2$};
        \draw (5.5,-0.5) node {$t=3$};
        \cedh{-1}{0}\ceb{0}{0}\cei{1}{0}{$i_0$}\cei{2}{0}{$i_0$}\ceb{3}{0}\cedh{4}{0} 
        \cedv{0}{1}\cei{1}{1}{$i_K$}\cei{2}{1}{$i_K$}\ceb{3}{1}\cedh{4}{1} 
        \cew{1}{2}\ces{2}{2}{$q_0,0$}\ceb{3}{2}\cedh{4}{2} 
        \cew{1}{3}\cec{2}{3}{$\#_{\-1}$}\ces{3}{3}{$q_0$}\ceb{4}{3}\cedh{5}{3} 
        \cew{1}{4}\cec{2}{4}{$\#_{0}$}\ces{3}{4}{$q_1,1$}\ceb{4}{4}\cedh{5}{4} 
        \cew{1}{5}\cec{2}{5}{$\#_{1}$}\cec{3}{5}{$1,1$}\ces{4}{5}{$q_1$}\ceb{5}{5}\cedh{6}{5} 
        \cew{1}{6}\cec{2}{6}{$\#_{1}$}\cec{3}{6}{$\#_{\-1}$}\ces{4}{6}{$q_2,\-1$}\ceb{5}{6}\cedh{6}{6} 
        \cew{1}{7}\cec{2}{7}{$\#_{1}$}\cec{3}{7}{$\#_{0}$}\cec{4}{7}{$\#_{\-1}$}\ces{5}{7}{$q_2$}\ceb{6}{7}\cedh{7}{7} 
        \cew{1}{8}\cec{2}{8}{$\#_{1}$}\cec{3}{8}{$\#_{1}$}\cec{4}{8}{$\#_{0}$}\ces{5}{8}{$h,0$}\ceb{6}{8}\cedh{7}{8} 
        \cew{1}{9}\cec{2}{9}{$\#_{1}$}\cec{3}{9}{$\#_{1}$}\cec{4}{9}{$\#_{1}$}\cec{5}{9}{$\#_{\-1}$}\ceh{6}{9}\ceb{7}{9}\cedh{8}{9} 
        \cew{1}{10}\cec{2}{10}{$\#_{1}$}\cec{3}{10}{$\#_{1}$}\cec{4}{10}{$\#_{1}$}\cec{5}{10}{$\#_{0}$}\ceh{6}{10}\cedv{7}{10} 
        \cew{1}{11}\cec{2}{11}{$\#_{1}$}\cec{3}{11}{$\#_{1}$}\cec{4}{11}{$\#_{1}$}\cec{5}{11}{$\#_{1}$}\ceh{6}{11} 
        \cew{1}{12}\cec{2}{12}{$\#_{1}$}\cec{3}{12}{$\#_{1}$}\cec{4}{12}{$\#_{1}$}\ceh{5}{12}\cedv{6}{12} 
        \cew{1}{13}\cec{2}{13}{$\#_{1}$}\cec{3}{13}{$\#_{1}$}\ceh{4}{13}\cedv{5}{13} 
        \cew{1}{14}\cec{2}{14}{$\#_{1}$}\ceh{3}{14}\cedv{4}{14} 
        \cew{1}{15}\ceh{2}{15}\cedv{3}{15} 
      \end{tikzpicture}
    \end{minipage}
  \end{center}
  \caption{Example of space-time diagram of $F_M$ (on the right) simulating the Minsky machine $M$ started on an empty counter (on the left). To simplify, we supposed $K=1$ and any state $(\#_\alpha,\delta)$ is represented as $\#_\alpha$.}
\end{figure}

The transition rule ${f_M:Q^V\rightarrow Q}$ of $F_M$ is defined as follows (Figure~\ref{fig:deffm} illustrates almost all transitions below): 
\begin{numcases}{f_M(x,y,z) =}
  h &\text{ if ${y=h}$ or (${z=h}$ and ${y\in (\{\#_1\}\times\{-1,0,1\})^k}$},\label{lt:halt}\\
  q &\text{ else if ${y=b}$ and ${x = (q,\cdot) \in Q_M\times\{-1,0,1\}^k}$},\label{lt:start}\\
  b &\text{ else if ${y=b}$ and ${x\not\in Q_M\times\{-1,0,1\}^k}$},\label{lt:blank}\\
  w &\text{ else if ${y\neq b}$ and ${x = (q,\cdot) \in Q_M\times\{-1,0,1\}^k}$},\label{lt:protect}\\
  (\#_{\min(1,\alpha+1)},\delta_i)_{1\leq i\leq k} & \text{ else if $y=(\#_\alpha,\delta_i)_i$ and $x=w$},\label{lt:leftwall}\\
  \bigl(\alpha((c_i,\delta_i),d_i)\bigr)_{1\leq i\leq k} &\text{ else if ${y=(c_i,\delta_i)_{i}\in Q_c^k}$ and ${{x=(d_i,\delta'_i)_{i}\in Q_c^k}}$},\label{lt:count}\\
  \tau\bigl(q,(v(c_i))_{1\leq i\leq k}\bigr) &\text{ else if $y=q$ and $x=(c_i,\delta_i)_{1\leq i\leq k}$},\label{lt:trans}\\
  \bigl(\beta(c_i,\delta_i)\bigr)_{1\leq i\leq k} &\text{ else if $y=(q,(\delta_i)_{1\leq i\leq k})$ and $x=(c_i,\delta'_i)_{1\leq i\leq k}$},\label{lt:unique}\\
  i_{n+1} &\text{ else if $x=y=i_n$, $0\leq n < K$},\label{lt:forcei}\\
  (q_0,0^k) &\text{ else if $y=i_K$ and $z=b$},\label{lt:istart}\\
  (\#_{-1},0)^k &\text{ else if $x=w$ and $y\in Q_M\times\{-1,0,1\}^k$,}\label{lt:icounter}\\
  w &\text{ else, \label{lt:walldefault}}
\end{numcases}

where ${\alpha : Q_c\times\{1,\#_{-1},\#_0,\#_1\}\rightarrow Q_c}$ is the counter update map defined by:
\begin{align*}
  \alpha((\#_{-1},\delta),x) &= (\#_{0},\delta),\\
  \alpha((\#_{0},\delta),x) &= (\#_{1},\delta),\\
  \alpha((\#_{1},\delta),x) &= (\#_{1},\delta),\\
  \alpha((1,\delta),x) &=
                         \begin{cases}
                           (\#_{-1},\delta) &\text{ if $x=\#_{\delta}$},\\
                           (1,\delta) &\text{ else}
                         \end{cases}
\end{align*}
and ${v:\{1,\#_{-1},\#_0,\#_1\}\rightarrow\{0,1\}}$ is a counter positivity test map defined by: 
\[v(x) =
  \begin{cases}
    1&\text{ if $x=1$},\\
    0&\text{ else}
  \end{cases}
\]
and ${\beta: Q_c\rightarrow Q_c}$ is a counter copy map defined by: 
\[\beta(x,\delta) =
  \begin{cases}
    (\#_{-1},\delta) &\text{ if $x=\#_{-1}$ and $\delta=-1$},\\
    (\#_{-1},\delta) &\text{ if $x=\#_{0}$ and $\delta<1$},\\
    (1,\delta) &\text{ else.}\\
  \end{cases}
\]

A simple verification on the transition rule gives the following.

\begin{fact}
  $F_M$ is freezing for the order $\preceq$.
\end{fact}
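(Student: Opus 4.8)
The plan is to verify directly, from Definition~\ref{def:freezing}, that $f_M$ never pushes a cell up in the order $\preceq$. Since the neighborhood is $V=\{-1,0,1\}$ and $f_M(x,y,z)$ reads the left, central and right cells respectively, the state being updated is the central argument $y$; hence freezing amounts to checking $f_M(x,y,z)\preceq y$ for every triple $(x,y,z)\in Q^V$. Because $\preceq$ is taken as the reflexive–transitive closure of the generating relations listed just before the fact, it suffices to exhibit, for each of the twelve cases \eqref{lt:halt}--\eqref{lt:walldefault}, either an equality or a single descent along one of those relations; the ``else if'' cascade is irrelevant here, since I check each case against its own output independently.

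I would then group the cases by the relation that justifies them. The two cases producing the wall, \eqref{lt:protect} and \eqref{lt:walldefault}, are immediate because $w\preceq q$ for all $q$. Case \eqref{lt:blank} is reflexive ($b\mapsto b$), and in \eqref{lt:halt} either $y=h$ (reflexive) or $y$ is a counter state and $h\preceq q_c$ applies. The transitions realizing the ``normal sequence'' $b\succeq q_s\succeq q_s'\succeq q_c$ each use the matching relation: \eqref{lt:start} uses $q_s\preceq b$ (from $y=b$ to a control state $q\in Q_M$), \eqref{lt:trans} uses $q_s'\preceq q_s$ (from $y=q\in Q_M$ to $\tau(q,\cdot)\in Q_M\times\{-1,0,1\}^k$), and both \eqref{lt:unique} and \eqref{lt:icounter} use $q_c\preceq q_s'$ (from a control-plus-instruction state $y\in Q_M\times\{-1,0,1\}^k$ to a counter state). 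The initialization countdown is likewise monotone: \eqref{lt:forcei} uses $i_{n+1}\preceq i_n$ and \eqref{lt:istart} uses $(q_0,0^k)\preceq i_K$.

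The only cases requiring genuine computation are the two counter updates \eqref{lt:count} and \eqref{lt:leftwall}, and this is the point I expect to be the crux, although it stays elementary. Here both $y$ and the output lie in $Q_c^k$, so I must show that the componentwise update is nonincreasing for the order on $Q_c$ induced by $\#_1\prec\#_0\prec\#_{-1}\prec1$. For \eqref{lt:leftwall} the map $\#_\alpha\mapsto\#_{\min(1,\alpha+1)}$ sends $\#_{-1}\mapsto\#_0$, $\#_0\mapsto\#_1$, $\#_1\mapsto\#_1$, each a descent or an equality; for \eqref{lt:count} the map $\alpha$ sends the first component $\#_{-1}\mapsto\#_0$, $\#_0\mapsto\#_1$, $\#_1\mapsto\#_1$, and $1\mapsto\#_{-1}$ or $1\mapsto1$, again never increasing. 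The one subtlety is that the product order on $Q_c^k$ compares two tuples only when their operation components $\delta_i$ agree; since both $\alpha$ and the wall map leave every $\delta_i$ untouched, the comparison applies coordinatewise and each coordinate descends independently. Collecting the twelve verifications gives $f_M(x,y,z)\preceq y$ in all cases, which is exactly the freezing condition.
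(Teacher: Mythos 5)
Your case-by-case check is correct and is exactly the ``simple verification on the transition rule'' that the paper leaves implicit, including the only two nontrivial branches (the counter updates via $\alpha$ and the left-wall map, which descend componentwise in $\#_1\prec\#_0\prec\#_{-1}\prec 1$ while fixing the $\delta_i$). Nothing further is needed; if anything, your reading of the tuple relation as a componentwise (product) order is slightly more careful than the paper's literal statement, which is what the construction actually requires since $\alpha$ can leave some coordinates unchanged.
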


As said above, the encoding of the counter values of $M$ in $F_M$ is in the time interval between two particular state changes, more precisely:
for any initial configuration $c$ of $F$, the (possibly infinite) value of counter $i$ (${1\leq i\leq k}$) at position $z\in\Z$ is defined by 
\[V_i(c,z) = \bigl|\{t\in\N : F^t(c)_z = (c_j,\delta_j)_j\in Q_c^k\text{ and }c_i=1 \}\bigr|.\]

For this definition to be useful in $F_M$ we need some hypotheses on the sequence of states at the considered position: we say that the trace at ${z\in\Z}$ is valid if it contains for each counter $i$ the subsequence ${\#_{-1},\#_0,\#_1}$ and does not contain the state $w$.

The $M$-state at position $z$ starting from $c$ is ${Q(c,z)=q}$ where $(q,(\delta_i)_{1\leq i\leq k})$ is the unique occurrence at cell $z$ of a state from $Q_M\times\{-1,0,1\}^k$ (uniqueness comes from the freezing order $\preceq$ and case \ref{lt:unique} of the transition rule), and it is undefined if there is no such occurrence.

The next lemma shows that on well-formed columns, $F_M$ correctly computes a transition of $M$ from one column to the next, via the above encoding. 

\begin{lemma}[Correct simulation]
  \label{lem:correctsim}
  Let $c$ be a configuration of $F_M$ and ${z\in\Z}$ such that ${c_z\in {Q_M\times\{-1,0,1\}^k}}$ and ${c_{z+1}=b}$ and the trace at $z$ is valid. Let ${(q,(\delta_i)_{1\leq i\leq k}) = \tau\bigl(Q(c,z),(\min(1,V_i(c,z)))_{1\leq i\leq k}\bigr)}$. Then the trace at ${z+1}$ is valid, ${Q(c,z+1)}$ is well-defined and equal to $q$ and we have \[V_i(c,z+1) = \max(0,V_i(c,z) + \delta_i)\] for all ${1\leq i\leq k}$.
\end{lemma}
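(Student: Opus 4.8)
The plan is to reconstruct the entire temporal column at cell $z+1$ out of the temporal column at cell $z$, exploiting that every transition governing a forward simulation step---cases \eqref{lt:start}, \eqref{lt:protect}, \eqref{lt:trans}, \eqref{lt:unique} and \eqref{lt:count}---depends only on the center cell and its \emph{left} neighbour. I would first fix the exact shape of the valid trace at $z$. Since $c_z=(Q(c,z),(\delta_i^{\mathrm{old}})_i)$ is a control state, and since by $\preceq$ together with case \eqref{lt:unique} a column contains at most one state of $Q_M\times\{-1,0,1\}^k$, the freezing order forces each counter component $i$ to run monotonically through $1^{V_i(c,z)}\,\#_{-1}\,\#_0\,\#_1\,\#_1\cdots$; concretely the $i$-th symbol of $z$ at time $t\ge 1$ is $1$ for $t\le V_i(c,z)$, then $\#_{-1},\#_0,\#_1$ at the three subsequent instants, and remains $\#_1$ afterwards.

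I would then read off the column at $z+1$ instant by instant. At time $0$ it is $b$, so with a control state to its left, case \eqref{lt:start} turns it into the pure state $Q(c,z)$ at time $1$. At time $1$ its left neighbour carries the first counter symbol, whose $i$-th component is $1$ exactly when $V_i(c,z)\ge 1$; hence $v$ returns precisely $\min(1,V_i(c,z))$ and case \eqref{lt:trans} produces $\tau\bigl(Q(c,z),(\min(1,V_i(c,z)))_i\bigr)=(q,(\delta_i)_i)$ at time $2$. As this is the unique control state of the new column, $Q(c,z+1)=q$ as required. From time $2$ on the cell is a counter state: case \eqref{lt:unique} applies $\beta$ once and case \eqref{lt:count} then iterates $\alpha$, and writing $a_0,a_1,\dots$ and $b_0,b_1,\dots$ for the successive counter symbols of $z$ and $z+1$, the alignment of the two columns yields $b_0=\beta(a_1,\delta_i)$ and $b_{j+1}=\alpha\bigl((b_j,\delta_i),a_{j+2}\bigr)$.

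The technical heart is a case analysis on $\delta_i\in\{-1,0,1\}$ showing this recurrence emits exactly $\max(0,V_i(c,z)+\delta_i)$ symbols $1$ before its own terminator $\#_{-1}\#_0\#_1$. The key point to verify is that, as long as $b_j=1$, rule $\alpha$ re-emits $1$ until the left neighbour displays $\#_{\delta_i}$, the terminator of $z$ shifted according to the operation; since $\#_{\delta_i}$ first occurs at index $V_i(c,z)+1+\delta_i$ of the sequence $a$, the number of $1$'s produced at $z+1$ is offset by exactly $\delta_i$. The small values $V_i(c,z)\in\{0,1\}$, where the first symbol $b_0$ is decided by $\beta$ rather than by the $\alpha$-iteration, are checked separately against the three clauses of $\beta$. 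This yields $V_i(c,z+1)=\max(0,V_i(c,z)+\delta_i)$ and simultaneously shows that $\#_{-1},\#_0,\#_1$ still occur in the new column.

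It remains to confirm validity in full. No state $w$ is ever created: case \eqref{lt:protect} cannot apply, since the left neighbour is a control state only at time $0$ when $z+1$ is still $b$, and the catch-all case \eqref{lt:walldefault} is never reached because at each instant one of \eqref{lt:start}, \eqref{lt:trans}, \eqref{lt:unique}, \eqref{lt:count} matches. The only rule reading the \emph{right} neighbour that could interfere is the halting case \eqref{lt:halt}, and here lies the other delicate point: \eqref{lt:halt} can fire at $z+1$ only once that cell is already in $(\#_1,\cdot)^k$, hence only after every counter value has been fully encoded and each terminator has appeared, so a possible back-propagation of $h$ from $z+2$ merely overwrites trailing $\#_1$'s and alters neither $Q(c,z+1)$ nor any $V_i(c,z+1)$; the conclusion is therefore independent of $c_{z+2}$. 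The sole genuinely degenerate configuration is $Q(c,z)=h$, in which case \eqref{lt:start} already makes $z+1$ the pure state $h$ and $Q(c,z+1)$ is undefined; this post-halting column lies outside the intended scope of the lemma and is covered by the separate analysis of the propagation of $h$.
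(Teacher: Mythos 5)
Your proposal is correct and follows essentially the same route as the paper's own proof: reconstruct the column at ${z+1}$ step by step (the start rule at time $1$, the transition rule at time $2$, then one application of $\beta$ followed by iterated applications of $\alpha$) and count the $1$s against the occurrence of the shifted terminator $\#_{\delta_i}$ in column $z$. You are somewhat more explicit than the paper about preservation of validity, the back-propagation of $h$ from the right, and the degenerate case ${Q(c,z)=h}$ (which the paper's statement and proof tacitly exclude and which never arises in the lemma's applications), but the underlying argument is the same.
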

\begin{proof}
  ${F_M(c)_z\in Q_c}$ and ${F_M(c)_{z+1}=Q(c,z)}$ (by case \ref{lt:start} of $f_M$). Then ${F_M^2(c)_{z+1}=(q,(\delta_i)_i)}$ by case \ref{lt:trans} of $f_M$ and ${F_M^3(c)_{z+1}=(c_i,\delta_i)_{1\leq i\leq k}}$ where for each $i$:
  \begin{itemize}
  \item either $c_i=\#_{-1}$ and thus $V_i(c,z+1)=0$, but the definition of $\beta$ ensures that either $V_i(c,z)=0$ and ${\delta_i<1}$, or ${V_i(c,z)=1}$ and ${\delta_i=-1}$;
  \item or $c_i=1$ and then case \ref{lt:count} of the definition of $f_M$ and the validity of the trace at $z$ ensures that counter $i$ at position ${z+1}$ will turn into state ${(\#_{-1},\delta_i)}$ at times ${t_i+2+\delta_i}$ where $t_i$ is the time of occurrence of $\#_{-1}$ in counter $i$ at position $z$ (by definition of the map $\alpha$).
  \end{itemize}
  In any case we have a valid trace at ${z+1}$ and ${V_i(c,z+1) = \max(0,V_i(c,z) + \delta_i)}$. 
\end{proof}

\begin{lemma}[Correct halting information]
  Let $c$ be a configuration of $F_M$ and ${z\in\Z}$ a position such that ${c_z=(q_0,(0,\ldots,0))}$ and ${c_{z+1}=b}$ and the trace at $z$ is valid. Then ${F_M^t(c)_z=h}$ for some $t\in\N$ implies that the counter machine $M$ halts on input ${(V_i(c,z))_{1\leq i\leq n}}$.
  \label{lem:correcthalt}
\end{lemma}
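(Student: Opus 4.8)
The plan is to prove the contrapositive in a convenient form: assuming that $M$ does \emph{not} halt on the input $(V_i(c,z))_{1\le i\le k}$, I will show that cell $z$ never takes state $h$. Two structural facts drive the argument. First, by iterating Lemma~\ref{lem:correctsim} along the space-time diagram, as long as the cell immediately to the right of each successive control column is blank at the moment the control reaches it, the columns $z,z+1,z+2,\dots$ carry valid traces and column $z+j$ encodes the configuration $M^j$ of the machine run on $(V_i(c,z))_i$; in particular each such column carries a well-defined control state. Second, an inspection of the transition rule shows that $h$ is produced only by case~\ref{lt:halt} — leftward propagation through a cell that is a completed counter in $(\{\#_1\}\times\{-1,0,1\})^k$ — or as a bare control state by case~\ref{lt:start} applied to a control cell already in state $h$; and since $w\preceq h$, the freezing order forbids any transition from the wall state $w$ to $h$, so a cell that has once become $w$ can never afterwards become $h$.

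I would then run the induction as an infinite descent. Suppose towards a contradiction that cell $z$ does take $h$, and let $t_j$ denote the first time at which cell $z+j$ takes $h$ (whenever it does). Cell $z$ is never blank (it starts in a control state, and by the freezing order $q_c\preceq q_s'\preceq q_s\preceq b$ it can never increase back to $b$) and, by validity of its trace, never $w$; hence case~\ref{lt:start} cannot put $h$ at $z$, and at time $t_0$ it must have received $h$ from its right neighbour by case~\ref{lt:halt}, so $t_1\le t_0-1$. The same reasoning applies at every column reached by the induction: since $M$ does not halt, the control state of column $z+j$ differs from $h$, so case~\ref{lt:start} never creates a bare $h$ there, and being a completed counter at the instant it turns to $h$ it must have received it from the right via case~\ref{lt:halt}, whence $t_{j+1}\le t_j-1$. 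Crucially, for this to be consistent I must also know that column $z+j$ was correctly simulated, i.e.\ that its right neighbour was blank (not pre-existing content of $c$) when the control reached it: otherwise case~\ref{lt:protect} would have turned that neighbour into a permanent wall $w$, which can never become $h$, contradicting that it does take $h$ in the descent.

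The delicate point, which I expect to be the main obstacle, is exactly this last implication: ruling out that a spurious $h$, born from arbitrary contents of $c$ far to the right, reaches a column before the control signal has had the chance to wall it off. This is where I would use the space-time slope of the control signal (the relation $0\le y-2x\le 1$): the control reaches cell $z+j+1$ strictly before it reaches any cell further right, so the bare $h$ — which can only be created once the control has passed some column to the right of $z+j+1$ — is emitted later and, propagating leftward one cell per step, can only arrive at $z+j+1$ after the control has already visited it. Consequently, at the instant $h$ would enter $z+j+1$, this cell has already been either correctly initialised to a counter (if its initial content was blank) or frozen to $w$ (if it held garbage); in the latter case the signal is blocked, contradicting the descent. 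Having established the descent, the sequence $t_0>t_1>t_2>\cdots\ge 0$ of first-occurrence times is infinite and strictly decreasing, which is impossible over $\N$. Hence the no-halting assumption is untenable and $M$ halts on $(V_i(c,z))_{1\le i\le k}$, as claimed; tracing the finite chain to its rightmost cell $z+s$ pinpoints a column $z+s-1$ whose control state is $h$, i.e.\ the halting step of $M$.
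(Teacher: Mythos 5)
Your overall strategy is the same as the paper's: iterate Lemma~\ref{lem:correctsim} column by column, observe that the bare state $h$ is produced only by cases~\ref{lt:halt} and~\ref{lt:start}, run an infinite descent on the first times $t_j$ at which $h$ appears at $z+j$, and rely on the wall state $w$ (which the freezing order makes permanent) to block the descent at the first column whose trace is not valid. You have also correctly singled out the one genuinely delicate step --- a spurious $h$ coming from the unconstrained part of $c$ --- which the paper's own proof dispatches in a single sentence.

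However, your resolution of that step does not close the gap. The slope argument only excludes an $h$ that is \emph{created} by the dynamics after the control has passed; it says nothing about an $h$ already present in the initial configuration, which is arbitrary outside $\{z,z+1\}$. Take $c_{z+2}=h$ (and, say, $c_{z-1}=w$, so that the trace at $z$ is valid with $V_i(c,z)=0$ for all $i$). That cell needs no emission and no travel: case~\ref{lt:halt} (the clause $y=h$) keeps it in state $h$ forever and, being the \emph{first} clause of the rule, shields it from case~\ref{lt:protect}; so when the control reaches $z+1$, cell $z+2$ is \emph{not} ``frozen to $w$'' as your argument requires --- the dichotomy ``correctly initialised if blank, walled if garbage'' fails precisely when the garbage is $h$ itself. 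Column $z+1$ then evolves normally by Lemma~\ref{lem:correctsim}, eventually reaches $(\{\#_1\}\times\{-1,0,1\})^k$, absorbs $h$ from the right via the second disjunct of case~\ref{lt:halt}, and one step later so does $z$. Thus the descent terminates at a column that is neither a halting control state nor a wall, and this configuration in fact appears to contradict the lemma as literally stated (it would force $M$ to halt on the all-zero input). To be fair, the paper's proof has exactly the same hole: it asserts $F_M^{t+i}(c)_{z+n}=w$ for the first invalid column without examining the sub-case $F_M^{t}(c)_{z+n}=h$. Repairing this requires either a strengthened hypothesis (e.g.\ no occurrence of $h$ in $c$ to the right of $z$) or an amendment of the rule so that an $h$ not preceded by a legitimate halting computation is turned into $w$; a timing argument alone cannot supply it.
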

\begin{proof}
  Suppose that $M$ doesn't halt on input ${(V_i(c,z))_{1\leq i\leq n}}$. Using Lemma~\ref{lem:correctsim}, we prove by induction on $n\in\N$ that if traces at position ${z,z+1,\ldots,z+n}$ are valid then ${Q(c,z+n)\neq h}$. Indeed it is true for ${n=0}$ and by induction we prove that if ${z+n}$ has a valid trace then ${c_{z+n}=b}$ (otherwise case \ref{lt:protect} would produce state $w$ contradicting validity). It implies that case \ref{lt:start} cannot produce state $h$ at position ${z+n}$, and therefore $h$ appears at position ${z+n}$ only if it appears at position ${z+n+1}$ (case \ref{lt:halt} is the only remaining case of $f_M$ to produce $h$). So if all traces ${z+n}$ for ${n\in\N}$ are valid then there is no $t$ such that ${F^t(c)_z=h}$. Suppose now that some trace is not valid and take ${n\geq 1}$ minimum so that the trace at ${z+n}$ is not valid. Since the trace at ${z+n-1}$ is valid and has a well-defined $Q(c,z+n-1)$ it means that there is some $t$ such that ${F_M^t(c)_{z+n-1}\in Q_s}$. Then there are two cases:
  \begin{itemize}
  \item either ${F_M^t(c)_{z+n}=b}$ but then Lemma~\ref{lem:correctsim} would apply to configuration $F_M^t(c)$ and position ${z+n-1}$ showing that the trace is valid at position ${z+n}$ which contradicts the hypothesis on $n$;
  \item or ${F_M^t(c)_{z+n}\neq b}$ so by case \ref{lt:protect} of $f_M$ we would have ${F_M^{t+i}(c)_{z+n}=w}$ for all ${i\geq 1}$ showing that $h$ does not appear at position ${z+n}$.
  \end{itemize}
  The lemma follows.
\end{proof}

\begin{theorem}
  There exists a 1D freezing CA $F$ such that $\CYREACH{F}$ is undecidable.
  \label{thm:freezecyreach}
\end{theorem}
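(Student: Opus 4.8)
The plan is to reduce the halting problem of Minsky machines (Theorem~\ref{theo:minsky}) to $\CYREACH{F}$, taking for $F$ the freezing CA $F_{M_u}$ built above from a fixed universal $2$-counter machine $M_u$, and carrying the varying instance (the input on which $M_u$ is run) inside the two bounded configurations $u$ and $v$. The configuration $u$ is designed to pin down, in a finite window, the seed of a single protected computation: a wall $w$ on the left, the countdown states $i_0,\dots,i_K$ forcing (via cases~\ref{lt:forcei} and~\ref{lt:istart}) the appearance of the control state $(q_0,0^k)$ at the origin, and the blank symbol $b$ to the right so that the space-time diagram can expand. The target $v$ is taken to be the cylinder requiring state $h$ at the origin (equivalently one of the finitely many cylinders placing $h$ at the starting cell). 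The claim is then that $[v]$ is reachable from $[u]$ if and only if $M_u$ halts on the encoded input.

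First I would establish the easy direction: if $M_u$ halts, then starting from the well-formed $c\in[u]$ produced by the $i$-initialization, Lemma~\ref{lem:correctsim} shows that $F_{M_u}$ faithfully computes, column by column, the successive configurations of $M_u$, the counter values being read off the traces as in the definition of $V_i$. When the control reaches state $h$, case~\ref{lt:halt} makes $h$ propagate leftwards through the completed-counter columns (those in $(\{\#_1\}\times\{-1,0,1\})^k$), so that $h$ eventually appears at the origin and $F_{M_u}^t(c)\in[v]$ for some $t$. The converse direction is exactly where Lemma~\ref{lem:correcthalt} is used: any occurrence of $h$ at the starting cell forces $M_u$ to halt on the corresponding input, because the wall $w$ and the controlled propagation of $h$ prevent spurious halting information from travelling back to the origin.

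The hard part, and the step I expect to be the main obstacle, is the ``only if'' direction for an \emph{arbitrary} $c\in[u]$: the cylinder reachability problem quantifies existentially over all configurations agreeing with $u$ only on a finite window, so the unconstrained cells could be chosen adversarially to plant a fake $h$ (or a fake column of completed counters) to the right of the origin and let it drift left. The whole point of the wall state $w$ --- which can be triggered from any state (cases~\ref{lt:protect} and~\ref{lt:walldefault}) and spreads into any ill-formed zone --- together with the fact that $h$ only moves leftward and only across \emph{valid} $\#_1^k$ columns, is to guarantee that any $h$ reaching the origin certifies a genuine halting computation; making this robustness argument precise (i.e.\ extending Lemma~\ref{lem:correcthalt} to invalid traces and arbitrary surrounding contexts) is the crux of the proof. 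A secondary delicate point is obtaining a \emph{single} fixed $F$ rather than a family $F_M$: this is why one instantiates the construction on the universal $M_u$ and encodes the varying input into $u$ (and into the finitely many admissible shapes of $v$), so that only the cylinders, and not the automaton, depend on the undecidable instance; establishing this reduction then yields Theorem~\ref{thm:freezecyreach}.
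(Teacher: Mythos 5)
Your overall strategy is the paper's: fix the universal machine $M_u$, take $F=F_{M_u}$, put the varying instance in the source cylinder, take the target cylinder to be $[h]$, and use Lemma~\ref{lem:correctsim} for the ``halts $\Rightarrow$ reachable'' direction and Lemma~\ref{lem:correcthalt} (whose proof already handles invalid traces and arbitrary right contexts via the wall $w$) for the converse. But there is a concrete gap in the seed you describe. You build $u$ from the countdown states $i_0,\dots,i_K$; by cases~\ref{lt:istart} and~\ref{lt:icounter} this mechanism launches a computation of $M_u$ on the \emph{empty} input (the counter column created to the left of $(q_0,0^k)$ is $(\#_{-1},0)^k$, i.e.\ all counters zero). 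For the fixed machine $M_u$ that yields a single yes/no question, hence a decidable one, and the reduction collapses. You state that the varying input is ``encoded into $u$,'' but you never say how, and this is the only genuinely nontrivial encoding issue: counter values of $F_{M_u}$ live in the \emph{temporal} traces (the definition of $V_i$), whereas a cylinder only constrains a finite \emph{spatial} window.

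The paper resolves this by dispensing with the $i$-countdown entirely (that device is used only for the $k$-change undecidability result, where the machine varies and the input is empty) and instead laying out the counters spatially: with $l=\max_i\chi_i$, the cylinder places a wall $w$ at $-l-2$, at each position $-l-1\le j<0$ a $k$-tuple of counter states equal to $(1,-1)$ or $(\#_{-1},0)$ according to whether $-j-1<\chi_i$, the control state at the origin, and blanks to the right. Cases~\ref{lt:leftwall} and~\ref{lt:count} then force, for \emph{every} $c\in[u]$, valid traces on the window with $V_i(c,0)=\chi_i$, after which your two directions go through exactly as you describe. So the architecture of your argument is right, but as written the reduction does not vary over an undecidable family; you need to supply this spatial pre-loading of the counters (or an equivalent device) to close the proof.
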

\begin{proof}
  Let $M_u$ be the machine from Theorem~\ref{theo:minsky}. We claim that ${\CYREACH{F_{M_u}}}$ is undecidable because the halting problem of $M_u$ on a given input reduces to it. Indeed, given an input ${(\chi_i)_{1\leq i\leq k}}$, let ${l=\max_i\chi}$ and define ${v\in Q^{\ball{l+2}}}$ by: 
    \[v_j =
      \begin{cases}
        w &\text{ if $j=-l-2$},\\
        (\alpha_{-j,i})_{1\leq i\leq k} & \text{ if $-l-1\leq j<0$},\\
        (q_0,(-1,\ldots,-1)) & \text{ if $j=0$},\\
        b &\text{ if $0<j\leq l+2$,}
      \end{cases}
    \]
    where ${\alpha_{j,i}\in Q_c}$ is ${(1,-1)}$ if ${j-1<\chi_i}$ and ${(\#_{-1},0)}$ else. One can check that for any ${c\in[v]}$ and any position $z$ with ${-l-1\leq z\leq l+2}$ the trace at $z$ in $c$ is valid: it follows from the definition of $f_M$ for ${-l-1\leq z\leq 0}$ (case \ref{lt:leftwall} at ${z=-l-1}$ and \ref{lt:count} elsewhere) and for ${z>0}$ it follows from Lemma~\ref{lem:correctsim}. Moreover, by choice of $v$ and for any $i$ (${1\leq i\leq k}$), we have ${V_i(c,-\chi_i-1)=0}$ and ${V_i(c,-\chi_i-1+p)=p}$ for all ${1\leq p\leq \chi_i}$ by a straightforward induction. Therefore we have ${V_i(c,-1)=\chi_i}$ and ${V_i(c,0)=\chi_i}$ by choice of $v$. We deduce from Lemma~\ref{lem:correcthalt} that cylinder ${[h]}$ can be reached from cylinder ${[v]}$ only if $M_u$ halts on input ${(\chi_i)_i}$. Finally, by choosing ${c\in[v]}$ such that ${c_z=b}$ for ${z>l+2}$, it follows from Lemma~\ref{lem:correctsim} that if $M_u$ halts on ${(\chi_i)_i}$ then ${F_M^t(c)_0=h}$ for some $t$. This proves the reduction of the halting problem of $M_u$ to ${\CYREACH{F_{M_u}}}$ and the theorem follows.
\end{proof}

The same construction allows to prove the following result which might seem surprising at first: even if it is easy to check that a CA is freezing (Fact~\ref{fact:freezingdecidable}), and if an immediate bound on the number of changes in a freezing CA is given by the size of the alphabet, it is still undecidable to determine what is the actual maximal number of changes.

\begin{theorem}
  There exists $k\in\N$ such that it is undecidable to determine whether a given freezing CA $F$ is $k$-change.
\end{theorem}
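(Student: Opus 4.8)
The plan is to reuse the freezing CA $F_M$ constructed above and reduce from the halting problem of $2$-counter machines on the empty input, which is undecidable by Theorem~\ref{theo:minsky}. Fix the number of counters to $k=2$, so that $K=3k+3=9$ and the countdown set $I=\{i_0,\dots,i_K\}$ both have fixed size. I will show that the theorem holds for the change bound $N:=K+4=13$: for any $2$-counter machine $M$, the freezing CA $F_M$ is $N$-change if and only if $M$ does \emph{not} halt on the empty input. As $F_M$ is computable from $M$, undecidability of halting on the empty input then transfers to the $N$-change property.

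The heart of the proof is a uniform bound, valid for every cell in every orbit, obtained from the freezing order $\preceq$. I would split cells according to whether their trace ever meets the countdown set $I$. A cell whose trace avoids $I$ can only follow the chain $b\succ q_s\succ q_s'\succ q_c\succ\cdots\succ h$: at most two control-level changes ($b\to q_s$ by case~\ref{lt:start} and $q_s\to q_s'$ by case~\ref{lt:trans}, with the control state unique by case~\ref{lt:unique} and $\preceq$), at most $3k$ changes inside $Q_c^k$ (each of the $k$ components descends $1\succ\#_{-1}\succ\#_0\succ\#_1$ at most three times, by the map $\alpha$ of case~\ref{lt:count}), plus one entering change $q_s'\to q_c$ and at most one final change into $h$ (case~\ref{lt:halt}); altogether at most $3k+4$ changes. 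A cell whose trace meets $I$ is a \emph{starter}: it descends at most $i_0\succ\cdots\succ i_K$ ($K$ forced changes by case~\ref{lt:forcei}), turns into $(q_0,0^k)$ (case~\ref{lt:istart}), and thereafter carries the computation of $M$ on the \emph{empty} input, its counter being forced to $(\#_{-1},0)^k$ (case~\ref{lt:icounter}) and then running through the two changes $(\#_{-1},0)^k\to(\#_0,0)^k\to(\#_1,0)^k$; so a starter makes at most $K+4=N$ changes as long as it does not receive $h$, and exactly $N$ on the canonical configuration triggering the countdown when the empty-input run does not halt.

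Because $K>3k$ we have $3k+4<N$, so the only way for a cell to make more than $N$ changes is for a starter to additionally turn into $h$, which by case~\ref{lt:halt} requires the empty-input computation of $M$ to halt and its halting signal to propagate back to the starter cell. That this back-propagation happens precisely when $M$ halts is exactly the correctness of the simulation (Lemmas~\ref{lem:correctsim} and~\ref{lem:correcthalt}, and the run of Figure~\ref{fig:deffm} in which the starter column finally becomes $h$). Hence, if $M$ halts on the empty input the canonical orbit contains a starter making $N+1$ changes, so $F_M$ is not $N$-change; and if $M$ does not halt on the empty input then starters make at most $N$ changes (never acquiring $h$) and all other cells at most $3k+4<N$, so $F_M$ is $N$-change. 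This is the desired equivalence.

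The main obstacle is the uniform upper bound of the second paragraph, which must hold over \emph{all} configurations, including ill-formed ones, not merely over space-time diagrams of genuine simulations. The freezing order does most of the work, capping changes by the length of a realizable descending chain, but two points need care. First, the countdown chain can be entered only through the forced transitions of case~\ref{lt:forcei}, so a starter is irrevocably bound to the \emph{empty} input and cannot also accumulate counter changes beyond the forced two; this is what keeps a starter's non-halting count at exactly $N$. Second, acquiring $h$ costs exactly one change and is genuinely gated by halting: a non-starter simulating a computation that halts on some nonempty input does receive $h$, but this only brings it to $3k+4<N$ changes and so never crosses the threshold. It is precisely the choice $K=3k+3$ that makes the non-halting starter count $N=K+4$ dominate the non-starter worst case $3k+4$, isolating ``$M$ halts on the empty input'' as the unique cause of a cell exceeding $N$ changes.
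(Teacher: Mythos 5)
Your proposal is correct and follows essentially the same route as the paper's own proof: the same reduction from halting of a counter machine on the empty input, the same CA $F_M$, the same threshold $K+4$, and the same use of the freezing order to show that only the full countdown sequence followed by $h$ can exceed it, with Lemmas~\ref{lem:correctsim} and~\ref{lem:correcthalt} tying the appearance of $h$ to halting. The only differences are presentational (you split cells by whether their trace meets $I$ rather than exhibiting the unique maximal $\preceq$-sequence, and you make the constant explicit by fixing $k=2$), plus a harmless undercount of the non-starter chain, which can end with a drop to $w$ and so reaches $3k+5$ rather than $3k+4$ --- still strictly below $K+4$.
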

\begin{proof}
  Let $M$ be any counter machine and $F_M$ the associated freezing CA. It follows from the freezing order $\preceq$ that the unique sequence of state changes giving ${K+5}$ changes is (repetition of a same state removed): 
  \[i_0\rightarrow\cdots i_K\rightarrow (q_0,0^k)\rightarrow (\#_{-1},0)^k\rightarrow (\#_{0},0)^k\rightarrow (\#_{1},0)^k\rightarrow h,\]
and any other sequence compatible with $\preceq$ has strictly less state changes. Indeed, it is the longest among the sequences containing $i_K$ and the choice of $K=3k+3$ ensures that all $\preceq$-admissible sequences without occurrence of $i_K$ have strictly fewer than $K$ changes.

Consider a configuration $c$ of $F_M$ such that the maximal sequence of state changes above occurs at cell $0$. By case \ref{lt:forcei}, $c$ must be such that ${c_z=i_0}$ for any ${-K+1\leq z\leq 0}$ otherwise $w$ would appear (by case \ref{lt:walldefault}) at position $0$. Moreover, we must have $c_1=b$ and otherwise we would have ${F^{K+1}(c)_0=w}$. Therefore we have ${F^{K+1}(c)_0=(q_0,0^k)}$ and ${F^{K+1}(c)_{-1}=w}$ (by cases \ref{lt:istart} and \ref{lt:walldefault}). Moreover cases \ref{lt:icounter} and \ref{lt:blank} force ${F^{K+1}(c)_1=b}$. Finally, Lemma~\ref{lem:correcthalt} can be applied at position $0$ of configuration ${F^{K+1}(c)}$ showing that $h$ appears at position $0$ only if $M$ halts on empty input (because the state change sequence at position $0$ is such that $V_i(c,0)=0$ for any ${i}$). Conversely, if $M$ halts on the empty input and if we consider configuration $c$ such that ${c_z=i_0}$ for ${z\leq 0}$ and ${c_z=b}$ for ${z>0}$, it is easy to check (using Lemma~\ref{lem:correctsim}) that the sequence of state changes at position $0$ in $c$ is exactly the above sequence. We conclude that $M$ halts on the empty input if and only if ${F_M}$ is not ${K+4}$-change. The theorem follows.
\end{proof}

\section{Limit Fixed-Point Computability}
\label{sec:limits}

In the previous section we studied canonical problems of prediction and reachability defined for any cellular automaton. Here we consider problems related to the convergence property. We can first consider the limit value of a cell, and the time its takes to reach this limit given an initial configuration.

\begin{definition}
  Let $F$ be any convergent CA. The limit value problem ${\LIMIT{F}}$ is defined as follows:
  \begin{itemize}
  \item input: a computable configuration $c$ (given as a Turing machine);
  \item output: ${F^\omega(c)_0}$.
  \end{itemize}
The freezing time problem ${\TIME{F}}$ is defined as follows:
  \begin{itemize}
  \item input: a computable configuration $c$ (given as a Turing machine);
  \item output: the freezing time $\fzt{c,0}$ of cell $0$ starting from $c$.
  \end{itemize}
\end{definition}

\begin{remark}
  Problems ${\LIMIT{F}}$ and ${\TIME{F}}$ are Turing-equivalent for any freezing CA $F$ since we have ${F^\omega(c)_0 = F^{\fzt{c,0}}(c)_0}$ and ${\fzt{c,0}=\min\{t:F^t(c)_0=F^\omega(c)_0\}}$.
\end{remark}

\begin{theorem}
  \label{thm:uniflimitfreezing}
  There is a freezing CA $F$ such that both $\TIME{F}$ and $\LIMIT{F}$ are uncomputable.
\end{theorem}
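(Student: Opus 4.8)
The plan is to reuse the Minsky-machine simulation $F_{M_u}$ from Section~\ref{sec:minsky} and to reduce the halting problem of $M_u$ on arbitrary inputs — undecidable by Theorem~\ref{theo:minsky} — to the limit value problem. I would take $F=F_{M_u}$, which is freezing and hence convergent, so that $F^\omega(c)$ is well defined for every configuration $c$ and both $\LIMIT{F}$ and $\TIME{F}$ make sense. The idea is that a single bit of the limit state of cell $0$, namely whether it equals $h$, already encodes halting.

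Given an input $(\chi_i)_{1\leq i\leq k}$ for $M_u$, I would build exactly the configuration $c$ used in the proof of Theorem~\ref{thm:freezecyreach}: take $c\in[v]$ with $v$ the finite pattern of radius $l+2$ encoding $(\chi_i)_i$, set $c_z=b$ for $z>l+2$, and fill everything strictly to the left of the protecting wall $w$ (at position $-l-2$) with $b$ as well. This makes $c$ a finite configuration, hence trivially computable, and the map $(\chi_i)_i\mapsto$ (a Turing machine computing $c$) is itself computable, as required for a reduction to $\LIMIT{F}$.

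The heart of the argument is then to read off $F^\omega(c)_0$. By the validity statement established in the proof of Theorem~\ref{thm:freezecyreach}, the trace at cell $0$ stays valid, so the state $w$ never appears there; by Lemma~\ref{lem:correcthalt} together with Lemma~\ref{lem:correctsim}, cell $0$ eventually takes state $h$ if and only if $M_u$ halts on $(\chi_i)_i$. When $M_u$ does not halt, the valid trace at cell $0$ first passes through the control state $(q_0,\cdot)$, turns into a counter state (case~\ref{lt:unique}), evolves under $\alpha$, and converges to the stable counter state $(\#_1,\delta_i)_i\in Q_c^k$ (a fixed point of $\alpha$), never reaching $h$. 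Hence $F^\omega(c)_0=h$ if and only if $M_u$ halts on $(\chi_i)_i$. If $\LIMIT{F}$ were computable we could decide this halting predicate, a contradiction, so $\LIMIT{F}$ is uncomputable; and since the remark preceding the statement shows $\TIME{F}$ and $\LIMIT{F}$ are Turing-equivalent for freezing CA, $\TIME{F}$ is uncomputable as well.

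The main obstacle I expect is the bookkeeping needed to guarantee that cell $0$ genuinely converges as claimed: I must rule out that the blank garbage to the left of the wall creates a spurious $w$ or a spurious $h$ at cell $0$, and I must ensure the right-extending simulation keeps the trace at $0$ valid forever. Both points are already secured by the wall state $w$ at $-l-2$ (which blocks back-propagation of halting information) and by the validity claim of Theorem~\ref{thm:freezecyreach} for cell $0$, so the only genuinely new observation is that these facts — used there merely to control reachability of $[h]$ — in fact pin down the actual limit of the single cell $0$, which is all the reduction requires.
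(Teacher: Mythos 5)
Your proposal is correct and follows essentially the same route as the paper: take $F=F_{M_u}$, observe that the (ultimately constant, hence computable) configuration built in the proof of Theorem~\ref{thm:freezecyreach} satisfies ${F^\omega(c)_0=h}$ if and only if $M_u$ halts on the encoded input (since $h$ is invariable once reached), and conclude for $\TIME{F}$ via the Turing-equivalence remark. The extra detail you supply about the non-halting case converging to ${(\#_1,\delta_i)_i}$ is not needed (convergence is automatic since $F$ is freezing) but is accurate.
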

\begin{proof}
  From the remark above it is sufficient to prove uncomputability of $\LIMIT{F}$. The example $F_M$ of Theorem~\ref{thm:freezecyreach} has the desired property and the proof of the theorem actually shows it because:
  \begin{itemize}
  \item it shows undecidabilty of reachability of cylinder $[h]$ where $h$ is an invariable state of $F_M$, therefore it shows that given some bounded configuration $u$ it is undecidable whether there is ${c\in[u]}$ such that ${F^\omega(c)_0=h}$;
  \item there is in fact a canonical computable (actually ultimately constant) configuration $c_u$ such that $[h]$ is reachable from $[u]$ if and only if ${F^\omega(c_u)_0=h}$.
  \end{itemize}
\end{proof}

In the following subsection we are going to consider a non-uniform problem on the limit fixed-point for which a difference between bounded-change and convergent CA will appear. Let us first give some computability upper bounds for such limit fixed-points.

Given a configuration $c\in\Z^d$ and a state $q$, denote by ${\chi_q(c)}$ the set of positions in $\Z^d$ which are in state $q$ in $c$:
\[\chi_q(c) = \bigl\{z\in\Z^d : c_z=q\bigr\}.\]
The following proposition gives immediate upper-bounds on such characteristic sets for limit fixed points in terms of the arithmetical hierarchy \cite{Rogers}.

\begin{proposition}
  \label{prop:maxlimitcomplexity}
  Let $F$ be a convergent CA and ${c\in\Z^d}$ be any computable configuration. For any state $q$ the set ${\chi_q(F^\omega(c))}$ is a $\Delta^0_2$ arithmetical set. Moreover, if $F$ is freezing with order $\preceq$ then:
  \begin{itemize}
  \item ${\chi_q(F^\omega(c))}$ is recursively enumerable if $q$ is $\preceq$-minimal;
  \item ${\chi_q(F^\omega(c))}$ is co-recursively enumerable if $q$ is $\preceq$-maximal.
  \end{itemize}
\end{proposition}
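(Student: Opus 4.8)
The plan is to reduce everything to a single computability fact: given the computable configuration $c$ (as a Turing machine computing $c_z$ from $z$) and the local rule $f$ of $F$, the map ${(z,t)\mapsto F^t(c)_z}$ is computable. This holds because $F^t(c)_z$ depends only on the finitely many values of $c$ inside the light cone ${z+\ball{rt}}$, where $r$ is the radius of $F$; each of these values is computable since $c$ is, and $f$ is then applied $t$ times. In particular the predicate ``${F^t(c)_z=q}$'' is decidable uniformly in ${(z,t)}$.

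For the general $\Delta^0_2$ bound I would invoke the Shoenfield limit lemma. Define the computable function ${g(z,t)=1}$ if ${F^t(c)_z=q}$ and ${g(z,t)=0}$ otherwise. By convergence of $F$, for every $z$ the value $F^t(c)_z$ is eventually equal to $F^\omega(c)_z$, so ${\lim_{t\to\infty}g(z,t)}$ exists and equals the indicator of $\chi_q(F^\omega(c))$ at $z$. Since a set whose indicator is the pointwise limit of a computable two-variable $\{0,1\}$-valued function is exactly a $\Delta^0_2$ set, the first claim follows.

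For the freezing refinements I would exploit the monotonicity of each cell's orbit. Since $F$ is freezing, ${c_z\succeq F(c)_z\succeq\cdots}$ and this decreasing sequence stabilizes at $F^\omega(c)_z$, so ${F^t(c)_z\succeq F^\omega(c)_z}$ for every $t$. If $q$ is $\preceq$-minimal, then once a cell enters state $q$ it can never leave (no state lies strictly below $q$), hence ${F^\omega(c)_z=q}$ if and only if ${F^t(c)_z=q}$ for some $t$. This is a $\Sigma^0_1$ condition on a decidable predicate, so $\chi_q(F^\omega(c))$ is recursively enumerable. Dually, if $q$ is $\preceq$-maximal, then from ${F^t(c)_z\succeq F^\omega(c)_z=q}$ and maximality of $q$ (nothing lies strictly above it) we get ${F^t(c)_z=q}$ for all $t$; conversely, if every orbit value equals $q$ then so does the limit. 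Hence ${F^\omega(c)_z=q}$ if and only if ${F^t(c)_z=q}$ for all $t$, a $\Pi^0_1$ condition, so $\chi_q(F^\omega(c))$ is co-recursively enumerable.

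I do not expect a deep obstacle here: the argument is a direct application of the limit lemma together with the order-monotonicity of orbits in a freezing CA. The only points requiring care are (i) justifying the uniform computability of ${(z,t)\mapsto F^t(c)_z}$, which rests on the finiteness of light cones and the computability of $c$, and (ii) invoking the convergence hypothesis precisely at the step where the limit lemma requires the pointwise limits to exist.
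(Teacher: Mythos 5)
Your proof is correct and follows essentially the same route as the paper: the freezing cases are handled by exactly the same one-quantifier characterizations (${\exists t}$ for minimal $q$, ${\forall t}$ for maximal $q$), and your appeal to the Shoenfield limit lemma for the $\Delta^0_2$ bound is just a repackaging of the paper's explicit ${\Sigma^0_2}$ and ${\Pi^0_2}$ formulas for ${\chi_q(F^\omega(c))}$. Your explicit justification that ${(z,t)\mapsto F^t(c)_z}$ is uniformly computable via the finite light cone is a detail the paper leaves implicit, but it does not change the argument.
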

\begin{proof}
  First for a convergent CA $F$ we have 
  \[\chi_q(F^\omega(c)) = \{z : \exists t_0, \forall t, t\geq t_0\Rightarrow F^t(c)_z=q\}\]
  and also 
  \[\chi_q(F^\omega(c)) = \{z : \forall t, \exists t_0, F^{t+t_0}(c)_z=q\}\]
  which shows that ${\chi_q(F^\omega(c))\in\Delta_2^ 0}$. When $F$ is freezing with order $\preceq$ and $q$ is $\preceq$-minimal we have 
  \[\chi_q(F^\omega(c)) = \{z : \exists t, F^t(c)_z=q\}\] and if $q$ is $\preceq$-maximal then 
  \[\chi_q(F^\omega(c)) = \{z : \forall t, F^t(c)_z=q\}.\]
\end{proof}

\subsection{1D bounded-change CA}

In dimension 1, bounded-change CA are too much restricted to produce uncomputable limits from computable initial configurations. The argument in the following theorem is due to G. Richard.

\begin{theorem}
  \label{thm:computablelimits}
  For any 1D bounded-change $F$ and any computable configuration $c$, ${F^\omega(c)}$ is computable.
\end{theorem}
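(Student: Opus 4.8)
The plan is to reduce the computation of $F^\omega(c)_0$ to the detection of a \emph{stabilization certificate} for cell $0$, and then to argue that such a certificate always exists in a form that can be found by an exhaustive search, with the bounded-change hypothesis ($F$ is $k$-change for some fixed $k$) used crucially to guarantee termination. Two preliminary observations frame the approach. First, since $F$ has some radius $r$ and $c$ is computable, every finite-time value $F^t(c)_z$ is computable (it depends only on $c$ restricted to $[z-rt,z+rt]$), so the whole space-time diagram is a computable object; the difficulty is purely that we do not know a priori \emph{when} cell $0$ has reached its final state. Second, by continuity of $F$ the limit $F^\omega(c)$ is a fixed point of $F$, and by Proposition~\ref{prop:maxlimitcomplexity} the value $F^\omega(c)_0$ is already known to lie in $\Delta^0_2$; the whole point of the theorem is to upgrade this to genuine computability in the one-dimensional bounded-change setting.

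The next step is to reduce the problem to finding \emph{frozen walls}. Suppose we can exhibit a time $t_0$ together with two blocks of $r$ consecutive cells, one entirely to the left of $0$ and one entirely to the right, which are frozen forever after $t_0$ (never change again). Then finite speed of propagation seals cell $0$: the finite strip between the two blocks, run with these blocks as fixed boundary conditions, reproduces the true evolution in its interior. This finite strip is itself bounded-change, hence undergoes at most $k$ changes per cell and reaches a fixed point after a number of steps bounded by $k$ times its width; simulating it to that fixed point yields $F^\omega(c)_0$ \emph{computably}. Thus the entire problem reduces to the effective certification of two such frozen walls, and since $F$ is bounded-change every cell eventually freezes, so suitable walls do exist for every sufficiently large choice of positions and of $t_0$ --- the only issue is to \emph{recognize} them.

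The hard part, and the place where both one-dimensionality and the uniform change bound are essential, is exactly this recognition: observing that a wall has not changed during a long but finite window does not certify that it will never change again, since a cell may still have changes left in its budget of $k$. The plan to overcome this is a causal counting argument. Any change of a cell at time $t\ge 2$ must be triggered by a change of some neighbour at time $t-1$, so each change of cell $0$ is the endpoint of a chain of change events, one per time step, each within distance $r$ of the previous one. Because every cell changes at most $k$ times, a chain spanning a long time interval must visit many \emph{distinct} cells, hence must reach far from $0$; conversely, cell $0$ changes at most $k$ times in total, so only finitely many such chains can ever feed into it. I would use this to bound, effectively and from the computable data already at hand, how far out one must inspect before being sure that no further change-carrying chain can still reach cell $0$, thereby turning the search for frozen walls into a terminating procedure. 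Making this bound explicit --- ruling out arbitrarily delayed influence arriving from arbitrarily far away, which is precisely what fails for merely convergent CA (the shrinking-zone construction of Example~\ref{ex:szone}) and in higher dimension --- is where I expect the real work to lie.
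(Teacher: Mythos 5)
Your reduction to certified frozen walls is exactly the first half of the paper's argument (its ``claim'': knowing that two cells flanking $0$ have made their last change, the strip between them stabilizes after at most $k$ times its width further steps), and you correctly identify the recognition of such walls as the crux. But the way you propose to resolve it --- ``bound, effectively and from the computable data already at hand, how far out one must inspect before being sure that no further change-carrying chain can still reach cell $0$'' --- cannot work, and the obstruction is already in the paper: Theorem~\ref{thm:uniflimitfreezing} exhibits a freezing (hence bounded-change) 1D CA for which $\LIMIT{F}$ and $\TIME{F}$ are uncomputable as uniform problems. Any procedure that, from the transition table, the constants $k,r$ and a program for $c$, effectively certifies a time after which cell $0$ is frozen would compute $\fzt{c,0}$ uniformly and contradict that theorem. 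Your causal-chain counting does show that a change of cell $0$ at a late time $t$ forces a chain of change events reaching distance $\Omega(t/k)$, but that is a statement about where influence \emph{must come from}, not a finite certificate that no such influence \emph{is coming}: a single bounded-change ``particle'' launched from an arbitrarily distant, computably specified position reaches cell $0$ at an arbitrarily late time while the space-time diagram near $0$ looks frozen up to any finite horizon you inspect.

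The missing idea is non-uniformity. The statement only asserts that the map $z\mapsto F^\omega(c)_z$ is computable for each fixed $c$, so the algorithm may depend on $c$ non-recursively through finitely much advice. The paper hardwires four constants: $L$ (resp.\ $R$), the largest number of changes occurring at infinitely many cells to the left (resp.\ right), and $z_L$ (resp.\ $z_R$), a position beyond which every cell changes at most $L$ (resp.\ $R$) times. With this advice the search terminates: simulate outward until you find $z_1\leq z_L$ at which exactly $L$ changes have been observed and $z_2\geq z_R$ at which exactly $R$ have been observed; since no cell past those thresholds can exceed $L$ (resp.\ $R$) changes, these cells are provably frozen, and your wall argument finishes the computation. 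So your proof is salvageable, but only by replacing the ``effective bound from the computable data'' with finitely many non-computable constants --- without that replacement the plan fails at exactly the step you flagged as ``where the real work lies''.
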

\begin{proof}
  First we can suppose without loss of generality that $F$ has radius $1$ (if $F$ has radius $r$ and alphabet $Q$ one can consider $F'$ of radius $1$ and alphabet $Q^r$ obtained from $F$ by grouping cells by blocks of size $r$, computability of configuration is not affected by this grouping operation). Now let $c$ be a fixed computable configuration and denote by ${\lambda(z)}$ the number of changes occurring at position $z$ in the orbit of $c$: 
  \[\lambda(z) = \bigl|\{t : F^{t+1}(c)_z\neq F^t(c)_z\}\bigr|.\]

  We claim that there is an algorithm that, given ${z\leq z'}$ and ${\lambda(z)}$ and ${\lambda(z')}$, correctly computes ${F^\omega(c)_{[z,z']}}$. Indeed, knowing ${\lambda(z)}$ and ${\lambda(z')}$, we can compute ${F^t(c)_{[z,z']}}$ with $t$ the first time such that ${F^t(c)_z = F^\omega(c)_z}$ and ${F^t(c)_{z'} = F^\omega(c)_{z'}}$ (it is sufficient to compute more and more time steps until having observed ${\lambda(z)}$ changes at $z$ and ${\lambda(z')}$ changes at $z'$). Then, since the states of cells $z$ and $z'$ do not change after time $t$, and since $F$ has radius $1$, ${F^\omega(c)_{[z,z']}}$ only depends on ${F^t(c)_{[z,z']}}$ and more precisely ${F^\omega(c)_{[z,z']} = F^{t+k(z'-z)}(c)_{[z,z']}}$ if $F$ is $k$-change.

  Let us define the following constants, depending\footnote{They can depend on $c$ in a non-recursive way. In fact the remainder of the proof shows that they cannot depend recursively on $c$ otherwise it would contradict Theorem~\ref{thm:uniflimitfreezing}.} on $c$, which are the maximal number of changes occurring infinitely often to the left/right respectively, and the positions beyond which these maxima are bounds on the number of changes:
  \begin{align*}
    L &= \max \{i : \forall z<0,\exists z'<z, \lambda(z')=i\}\\
    z_L &= \max \{z : \forall z'<z, \lambda(z')\leq L\}\\
    R &= \max \{i : \forall z>0,\exists z'>z, \lambda(z')=i\}\\
    z_R &= \min \{z : \forall z'>z, \lambda(z')\leq R\}\\
  \end{align*}
  
  The algorithm to compute ${F^\omega(c)_z}$ given $z$ is the following: compute larger and larger portions of the space-time diagram around position $z$ until finding ${z_1\leq z \leq z_2}$ and $t$ such that:
  \begin{enumerate}
  \item ${z_1\leq z_L}$ and ${z_2\geq z_R}$,
  \item the state of $z_1$ has changed $L$ times before time $t$ and the state of $z_2$ has changed $R$ times before time $t$.
  \end{enumerate}
  By definition of ${L,z_L,R,z_R}$, such values ${z_1,z_2}$ and $t$ can always be found for any $z$ and we have ${\lambda(z_1)=L}$ and ${\lambda(z_2)=R}$. Then it is sufficient to apply the algorithm of the above claim to compute ${F^\omega(c)_{[z_1,z_2]}}$ and therefore obtain ${F^\omega(c)_z}$.
\end{proof}

\subsection{1D convergent CA}

Contrary to bounded-change CA, convergent CA can have arbitrarily many changes at a given cell depending on the context. This is enough to obtain uncomputable limit fixed point from computable initial configurations as shown by the following CA $F$. Intuitively, $F$ is a 1D CA that simulates progressively all Turing machines for more and more time steps and, each time a machine $i$ halts, the head travels to a predetermined computable position $p(i)$ to write a mark, and then goes back to the simulation zone and goes on simulating other Turing machines. Any position $p(i)$ would ultimately contain the information of whether machine $i$ halts or not. The Turing simulation is done in a finite zone that grows as needed but also that moves regularly to the right, therefore any given cell is out of this active computation zone after a finite time. The position map $p(i)$ is increasing so that less than $i$ positions lie to the left of $p(i)$, which means that $p(i)$ will be crossed at most $i$ times by the marking mechanism. This intuitive description is probably enough to be convinced that there is a CA $F$ and a computable configuration $c$ such that ${\bigl(F^t(c)\bigr)_t}$ converges to an uncomputable configuration. However, we want $F$ to be a convergent CA, \textit{i.e.} we want the convergence of ${\bigl(F^t(c)\bigr)_t}$ for any inital configuration $c$, which requires a much more careful design of $F$. In what follows it is important to keep in mind that the construction is not sensitive to details in the implementation of the Turing machines: in fact the property of $F$ being convergent is independent of the actual Turing computation. That's why we focus on the marking process.

The general idea is to implement an addressing mechanism working in unary: positions to be marked are special zones of a certain length, and the marking process consists in a 'snake' of some length moving to the left until it reaches a special zone of matching length. The convergence is guaranteed by a counter on each special zone that allow only a bounded number of successive 'snakes' to cross the zone. A key aspect of the construction to help the analysis and the proof of convergence is the presence of a unique global head that can either work on the Turing computation or be the 'head of the snake' in the marking process. Essentially all state changes occur in the neighborhood of the head, so convergence is guaranteed by the fact that this global head never visits infinitely often the same cell.

\newcommand\prep{\texttt{Preparing}}
\newcommand\mrk{\texttt{Marking}}
\newcommand\retur{\texttt{Return}}
\paragraph{Global structure.} $F$ has alphabet ${Q=Q_T\times Q_M\times Q_A\times Q_H\cup\{e\}}$ where $e$ is a special error state that spreads over the entire configuration as soon as it appears, and all normal states have four components:
  \begin{itemize}
  \item the \emph{Turing component} $Q_T = \{L,R\}\cup E_T$ which holds the simulation of Turing machines;
  \item the \emph{marker component} $Q_M=\{0,1\}$ which is used to mark predetermined positions $p(i)$ for halting machines $i$;
  \item the \emph{addressing component} $Q_A=\{0,1\}\cup\{0,1\}^4$ which is used to handle the low level mechanism for the marking head to stop at the correct position;
  \item the \emph{global head component} $Q_H = \{L,R\}\cup S_H$ which ensures unicity of a global head and controls the succession of phases of Turing simulation and marking process.
  \end{itemize}

The global head is not the Turing head which is encoded by states in $E_T$.

\paragraph{Addressing component.} This component is a kind of ``conveyor belt'' with zigzags that globally shift information as follows:
\begin{itemize}
\item A \emph{zigzag} is any maximal connected zone in state
  ${\{0,1\}^4}$. More precisely, states from ${\{0,1\}}$ correspond to
  a simple belt and states from ${\{0,1\}^4}$ correspond to three
  layers of belt forming a zigzag (plus a fourth layer used as a
  passage counter for the zigzag). Each position in this virtual belt
  has a well-defined predecessor and successor as depicted in
  Figure~\ref{fig:zigzagbelt}. Note that in some degenerate cases,
  this virtual belt has several connected components but always finite ($3$ in the case where all cells with possibly finite exceptions are in a state from $\{0,1\}^4$ in their addressing component, $1$ in any other case). 
\item The default dynamics of $F$ on this addressing component is simply that each bit $b$ at any position in the virtual belt is shifted to its successor's position. This default dynamics can be changed only when the global head is near (details below).
\end{itemize}
\newcommand\case[2]{\draw (#1,#2)++(-.5,-.5) rectangle +(1,1);}
\newcommand\caseg[2]{\draw[fill=gray] (#1,#2)++(-.5,-.5) rectangle +(1,1);}

\begin{figure}
  \centering
    \begin{minipage}{.4\linewidth}
      \begin{center}\tiny
        ${\{0,1\}\ \{0,1\}\ \{0,1\}^4\ \{0,1\}^4\ \{0,1\}^4\ \{0,1\}\ \{0,1\}^4\ \{0,1\}\ \{0,1\}}$
      \end{center}
    \end{minipage}
    \begin{minipage}{.4\linewidth}
      \begin{center}
        \begin{tikzpicture}[scale=.5,>=stealth]
          \case{-1}{0}\case{7}{0}
          \case{0}{0}\case{1}{0}\case{2}{0}\case{3}{0}\case{4}{0}\case{5}{0}\case{6}{0}
          \case{1}{1}\case{2}{1}\case{3}{1}
          \case{1}{-1}\case{2}{-1}\case{3}{-1}
          \caseg{1}{-2}\caseg{2}{-2}\caseg{3}{-2}
          \case{5}{1}\case{5}{-1}\caseg{5}{-2}
          \draw[->] (6,0)--(5,1);
          \draw[->] (5,-1)--(4,0);
          \draw[->] (5,1)--(5,0);
          \draw[->] (5,0)--(5,-1);
          \draw[->] (4,0)--(3,1);
          \draw[->] (3,1)--(2,1);
          \draw[->] (2,1)--(1,1);
          \draw[->] (1,1)--(1,0);
          \draw[->] (1,0)--(2,0);
          \draw[->] (2,0)--(3,0);
          \draw[->] (3,0)--(3,-1);
          \draw[->] (3,-1)--(2,-1);
          \draw[->] (2,-1)--(1,-1);
          \draw[->] (1,-1)--(0,0);
          \draw[->] (0,0)--(-1,0);          
          \draw[->] (7,0)--(6,0);          
        \end{tikzpicture}
      \end{center}
    \end{minipage}
  \caption{Conveyor belt and zigzags in the addressing component.}
  \label{fig:zigzagbelt}
\end{figure}
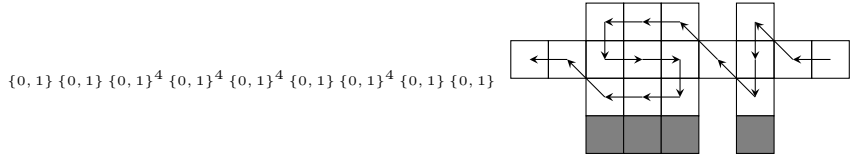

\paragraph{Validity checks and computation cycle.} We call \emph{valid} a configuration $c$ whose orbit contains no occurrence of $e$. By the spreading behavior of $e$ we have that any orbit which is not valid is convergent. $F$ constantly checks and maintains the following conditions and produces the error state $e$ in case of violation:
\begin{itemize}
\item in Turing component the following patterns are forbidden: $LR$, $RL$, $RE_T$, $E_TL$, so that a valid configuration contains at most one connected zone in alphabet $E_T$, called \emph{Turing zone};
\item in head component the following patterns are forbidden: $LR$, $RL$, $RS_H$, $S_HL$, $S_HS_H$, so that a valid configuration contains at most one state in $S_H$;
\item in the addressing component the following patterns are forbidden: ${(a,b,c,1)(d,e,f,0)}$, so that in a valid configuration the fourth layer of any zigzag holds the unary representation of a counter (distance of the first 0 to the rightmost position of the zigzag) between 0 and the length of the zigzag, called \emph{zigzag counter};
\item in the addressing component, when a position in the virtual belt is in state $0$ then its predecessor must also be in state $0$, except if the head is present on the predecessor's position (precise definition below); therefore each connected component of the virtual belt seen as a bi-infinite binary word contains a single connected component of $1$s;
\item in the addressing component, no zigzag can accept a too long connected component of $1$ entering the zigzag, concretely: if there is a 1 in the third layer of the rightmost position of a zigzag, then there must be a 0 on the first layer at the same position, otherwise error state $e$ is generated; this prevents any connected component of $1$s of length ${\geq 2n}$ initially to the right of a zigzag of length $n$ to cross it.
\end{itemize}

The state of the unique global head (if present) is one of the following: \[S_H=\{\leftarrow,\rightarrow,\prep_1,\prep_3,\mrk_1,\mrk_2,\mrk_3,\retur\}.\] Each one corresponds to a a specific phase of the behavior of $F$ with additional validity checks that generate $e$ in case of failure. Intuitively, $\leftarrow$ and $\rightarrow$ correspond to back and forth movements inside the Turing zone producing one step of Turing computation at each back and forth. Then states $\prep_x$, $\mrk_x$ and $\retur$ correspond to an interruption of the Turing computation to do a marking cycle: preparation of marking address, movement and marking at the correct position, return to Turing computation. The precise position of the head on the virtual belt is as follows: if the head is in state $\prep_x$ or $\mrk_x$ then it is on layer $x$ of the belt in the cell, otherwise it is on layer $1$. In a valid configuration the sequence of states taken by the global head is always a subword of the language 
\[\bigl(\rightarrow^+\leftarrow^+((\prep_1^++\prep_3^+)(\mrk_1+\mrk_2+\mrk_3)^+\retur^+)?\bigr)^+\]
\begin{itemize}
\item $\leftarrow$: the global head moves back to the leftmost position of the Turing zone; the head must be inside the Turing zone; when this leftmost position is reached, the global head turns into state $\rightarrow$;  moreover, if during the crossing of the Turing zone a special interruption state of the Turing head is encountered (in $E_T$), then the global head turns into state $\prep_3$ if the position is inside a zigzag and $\prep_1$ else;
\item $\rightarrow$: the global head runs through the entire Turing zone from left to right while shifting it one cell to the right, and applying one step of the Turing computation, and adding one more cell to the zone at the right end; the global head must be inside the Turing zone; when the right end is reached and the new cell added to the Turing zone, the global head turns into state $\leftarrow$;
\item $\prep_x$: the global head starts to write $1$s on the addressing component while moving two cells left at each step and until it meets the left boundary of the Turing zone; more precisely, when in state $\prep_1$ it writes $1$s onto the first layer if inside a zigzag and onto the third when in state $\prep_3$; outside zigzags it always write on the unique layer present\footnote{Note that due to the checks done on the addressing component (see above), this $\prep_x$ stage can generate an $e$ state: for instance if the head in state $\prep_3$ exits a zigzag and later enters into a new one, the $1$s written on the virtual belt will no longer be connected. This kind of degenerate behavior can easily be avoided by a sufficient spacing between zigzags.}; the global head must be inside the Turing zone and when it reaches the left boundary of the Turing zone, it turns into state $\mrk_1$; the purpose of this stage is to write a connected component of $1$s of length $n/2$ onto the virtual belt where $n$ is the distance between the global head and the leftmost position of the Turing zone when the global state changes from $\leftarrow$ to $\prep_x$;
\item $\mrk_x$: the global heads moves along the virtual belt from a position to its successor; when inside a zigzag the $x$ indicates which of the $3$ layers of the belt the global head is currently on; the position on the belt preceding that of the global head must contain a $1$ and the successor must contain a $0$, this forces a connected component of 1s that moves along the belt behind the head which can be seen as a unary value called \emph{the address value}; when inside a zigzag the global head does the following:
  \begin{itemize}
  \item when entering a zigzag it checks that the zigzag counter is not $0$ otherwise the error state $e$ is generated;
  \item when reaching the rightmost position of the second layer (state $\mrk_2$), it checks whether the address value is exactly twice the width of the zigzag, precisely: if the first layer contains a 1 at its position and the cell immediately to its right contains value $0$, then it enters into state $\retur$ and a 1 is written on the marker component;
  \item when in state $\mrk_1$ on the first layer of a zigzag and at the position of the leftmost 1 of the zigzag counter, the zigzag counter is decremented by 1 by changing the state 1 at the current position into a 0;
  \end{itemize}
\item $\retur$: the global head moves along the virtual belt backward (from a position to its predecessor) and transforms all 1s into 0s until it reaches the left boundary of the Turing zone at which point it turns into state $\rightarrow$.
\end{itemize}

Given a valid configuration $c$, we say that a component is convergent if ${\bigl(\pi\circ F^t(c)\bigr)_{t\in\N}}$ is convergent where $\pi$ is the projection of states onto the considered component. $c$ is convergent when all components are convergent.

\begin{lemma}
  \label{lem:nochangeconv}
  If $c$ is a valid configuration such that the global head state changes only a finite number of times in its orbit, then $c$ is convergent.
\end{lemma}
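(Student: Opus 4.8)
The plan is to prove the lemma by showing that, under the hypothesis, \emph{every cell changes its state only finitely many times}; by the definition of convergence of a component (pointwise eventual stabilization), this immediately gives that all four components of $c$ are convergent, hence that $c$ is convergent. The whole difficulty is to locate where state changes can occur, and the guiding principle, already stated informally before the lemma, is that all changes happen either in the immediate neighborhood of the global head or through the default conveyor-belt shift of the addressing component, and that neither mechanism affects a fixed cell infinitely often.

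First I would use the hypothesis to constrain the head. Since $c$ is valid it carries at most one global head at each time, and since the element of $S_H$ it holds changes only finitely often, there is a time $T_0$ after which either no head is present, or the head stays in one fixed state $s\in S_H$ forever. In the latter case the head cannot be idle: each of the states $\rightarrow,\leftarrow,\prep_1,\prep_3,\mrk_1,\mrk_2,\mrk_3,\retur$ forces a definite move at every step (one cell right for $\rightarrow$, two cells left for $\prep_x$, one step in the successor direction along the virtual belt for $\mrk_x$, in the predecessor direction for $\retur$, etc.), and remaining in $s$ forever simply means the phase-change condition is never met, which forces the relevant zone or belt to be infinite so that the head escapes to infinity. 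I would then check case by case that in each fixed state the head visits each column of $\Z$ only a bounded number of times: for the straight-line motions $\rightarrow,\leftarrow,\prep_x$ every column is crossed at most once, and for the belt motions $\mrk_x,\retur$ every column is crossed at most three times per zigzag it belongs to, the belt being a union of finitely many bi-infinite (hence acyclic) paths. The conclusion of this step is that the head lies in the neighborhood of any fixed cell for only finitely many time steps.

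Next I would treat the cells the head stops visiting. By the validity checks on $Q_T$, $Q_M$ and $Q_H$, a cell whose neighborhood contains no head and no head-driven boundary motion is unchanged on the Turing, marker and head components; moreover the marker component can only switch from $0$ to $1$ and never back, contributing at most one change per cell. The only remaining source of change far from the head is the default belt shift on the addressing component, and here validity is essential: in a valid configuration each connected component of the virtual belt is a single bi-infinite path carrying exactly one connected block of $1$s, and the validity checks forbid a $1$ immediately followed (in the successor direction) by a $0$ unless the head sits on that $1$. Hence, far from the head, the block of $1$s extends to the successor end of its component, and the default shift moves its trailing boundary monotonically in the successor direction, so that boundary crosses any fixed cell only finitely often and the addressing state of that cell stabilizes. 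Combining this with the head analysis, every cell of every component changes only finitely often, so the orbit of $c$ converges.

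The main obstacle, and the place where I would spend most of the care, is the interaction between the head and the belt in the two \emph{productive} regimes $\prep_x$ and $\mrk_x$, where the head simultaneously advances and rewrites the addressing component: one must check that although the block of $1$s is continuously created (in $\prep_x$) or transported (in $\mrk_x$), the monotone progression of the head guarantees that the front and the back of the block sweep past any fixed cell only finitely often, so no cell is flipped back and forth forever. The secondary point needing explicit justification is the absence of finite cyclic belt components, on which bits would otherwise circulate indefinitely; this is precisely what the structural description of the belt as finitely many bi-infinite acyclic paths rules out.
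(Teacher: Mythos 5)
Your proposal is correct and follows essentially the same route as the paper's proof: reduce to the situation where the global head state is eventually fixed, observe that in each fixed state the head must escape to infinity and hence visits any given cell only finitely often, and then use validity (in particular the uniqueness of the connected block of $1$s per component of the virtual belt) to conclude that the Turing, marker, head and addressing components all stabilize at every cell. The paper organizes this as a case analysis over the five possible frozen head states rather than your two-stage "head escapes / far-from-head cells stabilize" decomposition, but the underlying argument is the same.
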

\begin{proof}
  We can suppose without loss of generality that the global state ${s\in S_H}$ never change (otherwise consider configuration ${F^t(c)}$ where $t$ is the time when the last state change occurs). Given the hypothesis we have the following possible cases: 
  \begin{itemize}
  \item ${s=\rightarrow}$ and the Turing zone extends infinitely to the right: in this case the head moves to the right forever so the Turing component  and the global head component are clearly convergent. The marker component is also convergent since it can only change when the global head state changes from $\mrk_x$ to $\retur$. Finally, the addressing component contains a single connected component of 1s per connected component of the virtual belt (because $c$ is valid). Each cell either stays forever inside such a connected component of 1s, or stays forever outside after a finite time. In any case, the addressing component is convergent.
  \item ${s=\leftarrow}$ and the Turing zone extends infinitely to the left: this case is symmetric to the previous one.
  \item ${s=\prep_x}$ and the Turing zone extends infinitely to the left: the Turing component and the marker component don't change so they are convergent. Moreover the global head moves to the left forever, so the global head component is also convergent. The global head in this case constantly extends to the left a connected component of 1s while moving to the left. Therefore, for any cell, after a finite time the global head is on the left and will never come back leaving a similar situation as above: the addressing component is again convergent.
  \item ${s=\mrk_x}$: the global head moves to the left (${x=1}$ or $3$) or to the right (${x=2}$) forever followed by a (possibly infinite) snake of 1s. The situation is similar to the above cases and the convergence follows for the same reasons: the global heads moves towards infinity, nothing is changed on the Turing, head and marker components, and the addressing component converges because of the uniqueness of the component of 1s (per connected component of the virtual belt).
  \item ${s=\retur}$: the global head moves backward on the virtual belt, so it visits each cell a finite number of times (at most $3$ in the case of a cell inside a finite zigzag). After this finite number of visits the state of a cell converges because nothing is changed on the Turing, head and marker components, and the addressing component converges as in the previous case.
  \end{itemize}
\end{proof}

\begin{lemma}
  \label{lem:infchangeconv}
  If $c$ is a configuration such that the global head state changes infinitely many times, then $c$ is convergent.
\end{lemma}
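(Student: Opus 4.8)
The plan is to treat this as the complement of Lemma~\ref{lem:nochangeconv}: there the head eventually froze into a single state and escaped monotonically to infinity, whereas here I want to exploit that infinitely many head-state changes force the computation zone itself to drift away to infinity, so that every fixed cell is visited by the head only finitely often. First I would dispose of invalid configurations for free: if the error state $e$ ever appears it spreads at speed one in both directions while the head moves at bounded speed, so after finite time the spreading $e$ reaches the head's cell and turns it into $e$, after which the head-state can no longer change. Hence infinitely many head-state changes imply that $e$ never occurs, $c$ is valid, and I may use all invariants maintained by $F$. Now, in a valid orbit the head-state sequence is a subword of $\bigl(\rightarrow^+\leftarrow^+((\prep_1^++\prep_3^+)(\mrk_1+\mrk_2+\mrk_3)^+\retur^+)?\bigr)^+$, so infinitely many changes force infinitely many completed $\rightarrow$ phases; since each completed $\rightarrow$ phase shifts the whole Turing zone one cell to the right, its leftmost position is non-decreasing and tends to $+\infty$.

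Next I would fix an arbitrary cell $z$ and show the head visits $z$ only finitely often. During the $\rightarrow$ and $\leftarrow$ phases the head stays inside the Turing zone, which is eventually entirely to the right of $z$; hence after a finite time only a marking excursion ($\prep/\mrk/\retur$) can bring the head to $z$. Because the marking snake trails behind the leftward-moving head, such an excursion reaches $z$ exactly when its target zigzag sits at a position $\le z$. Let $Z^*$ be the rightmost zigzag whose position is $\le z$ (if none exists no excursion ever reaches $z$ and we are done). Every $z$-reaching excursion has its target at or to the left of $Z^*$, so the head either marks $Z^*$ or crosses it, and in both cases it performs a $\mrk_1$ pass through $Z^*$, decrementing the zigzag counter of $Z^*$. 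By validity no snake enters a zigzag with null counter, so the number of such passes is bounded by the initial counter value of $Z^*$, which is finite; thus only finitely many excursions reach $z$ and the head visits $z$ finitely often.

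It then remains to conclude convergence of the four components at $z$. The head component and the marker component change only in the immediate neighborhood of the head, so after its last visit to $z$ and its final return to the Turing zone on the right, the relative-position marker and any mark at $z$ are frozen. The addressing component needs slightly more care because its default dynamics shifts bits even away from the head; here I would invoke the validity invariant forcing a single connected block of $1$s per (finitely many) connected component of the virtual belt, so that each such block drifts monotonically toward its successor-infinity and sweeps the at most three belt-positions carried by $z$ only finitely often. The transient snakes produced during $\prep$ are moreover erased during the matching $\retur$ and are attached to the finitely many $z$-reaching excursions, so the addressing bit at $z$ is eventually constant. Since all four components converge at every $z$, the configuration $c$ is convergent.

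The delicate step is the finiteness argument of the second paragraph: one has to verify, against the precise layered geometry of the belt in Figure~\ref{fig:zigzagbelt}, that reaching \emph{or} merely crossing $Z^*$ genuinely forces a $\mrk_1$ counter decrement, and that the check forbidding entry on a null counter is exactly the invariant guaranteed by validity. A secondary subtlety is the convergence of the default belt dynamics on the $1$-blocks already present in the arbitrary initial $c$, which requires that each connected component of the belt be infinite in extent so that the block escapes to infinity rather than cycling forever; this must be read off from the structural invariants imposed on the addressing component. Everything else (the drift of the Turing zone, the stabilization of the head and marker components) is then routine bookkeeping.
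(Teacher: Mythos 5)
There is a genuine gap at the very first step of your argument: the claim that ``infinitely many head-state changes force infinitely many completed $\rightarrow$ phases'' is false. The head-state sequence is only required to be a \emph{subword} of $\bigl(\rightarrow^+\leftarrow^+((\prep_1^++\prep_3^+)(\mrk_1+\mrk_2+\mrk_3)^+\retur^+)?\bigr)^+$, and a single factor $(\mrk_1+\mrk_2+\mrk_3)^+$ can by itself account for infinitely many state changes: if a marking excursion never finds a zigzag matching its address value, the head moves left along the virtual belt forever, cycling through $\mrk_1,\mrk_2,\mrk_3$ each time it traverses a finite zigzag. This is realizable from a valid initial configuration (infinitely many finite zigzags to the left, all with positive counters and none of the matching width), and in that case the Turing zone does \emph{not} drift to $+\infty$; it freezes where the head abandoned it. Your entire second paragraph is built on the Turing zone escaping to the right of every fixed $z$, so this case is simply not covered. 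The paper handles it as a separate case: the head cannot enter an infinite zigzag (it would then stay in a fixed $\mrk_x$ state, contradicting the hypothesis), hence it escapes to $-\infty$, eventually leaving every cell $z$ forever, and convergence of all four components at $z$ follows from the usual single-block-of-$1$s invariant. A related small slip: if no zigzag lies at a position $\le z$, it is not true that ``no excursion ever reaches $z$'' --- the head still sweeps left through simple belt --- though there it reaches $z$ at most once, so the conclusion survives.

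Where your argument does apply (excursions that terminate and return), your counter-decrement bound on the number of visits to the rightmost zigzag $Z^*$ at position $\le z$ is essentially the paper's argument for its third case (infinitely many $\mrk_x\to\retur$ transitions), and your treatment of the first case (head eventually confined to a rightward-drifting Turing zone) matches the paper's as well. So the fix is not a rewrite but an explicit three-way case split on the tail of the head-state sequence, as the paper does, with the ``head stuck in $\mrk$ forever and escaping left'' case argued separately rather than excluded a priori.
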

\begin{proof}
  Consider any position ${z\in\Z}$, we will show that ${\bigl(F^t(c)_z\bigr)_{t\in\N}}$ is convergent. First remark that such a $c$ must be valid. Moreover the Turing component is convergent because:
  \begin{itemize}
  \item either states $\leftarrow$ and $\rightarrow$ do not occur after some finite time in which case the Turing component becomes constant;
  \item or there are infinitely many state changes from ${\leftarrow}$ to ${\rightarrow}$ or from ${\retur}$ to ${\rightarrow}$ which implies infinitely many shifts to the right of the computation zone;
  \item other cases are impossible by the hypothesis on $c$.
  \end{itemize}
  Considering the sequence of global head states, we must be in one of the following situations: 
  \begin{itemize}
  \item finitely many ${s\in\{\mrk_x,\retur,\prep_x\}}$: in this case after some finite time $T$, the global head is restricted to the Turing zone which moves constantly to the right, therefore the marker and global head components are convergent. Moreover, the addressing component evolves independently of the global head after time $T$ and, since $c$ is valid, there is only one connected component of 1s per connected component of the virtual belt. The convergence follows.
  \item the state of the global head eventually stays in ${\mrk_1,\mrk_2,\mrk_3}$: in this case, after some finite time the global head moves along the virtual belt forever. It cannot be inside or enter an infinite zigzag (otherwise it would eventually stay in a fixed state forever), so it escapes to the left, precisely after some time it stays forever to the left of position $z$. This shows that the Turing, marking and global head components are convergent at $z$. By validity of $c$ and the same reasoning as above about connected components of 1s, the addressing component converges at $z$.
  \item infinitely many changes from $\mrk_x$ to $\retur$: the global head can only do such state changes inside finite zigzags, because on one hand this state change can only occur at the right boundary of a zigzag, and, on the other hand, if the head turns into state $\retur$ inside a zigzag which extends infinitely to the left then it could never come back to the leftmost position of the Turing zone, contradicting the infinite state changes hypothesis. The global head can visit only finitely many times each such finite zigzags, therefore after some time it will stay forever to the right of position $z$. With the same reasoning as in the previous case, all components are convergent at position $z$.
  \end{itemize}
\end{proof}

\newcommand\mach{\textsf{Machines}}

\begin{theorem}
  \label{thm:uncomputablelimit}
  There is a convergent CA $F$ and a computable configuration $c$ such that ${F^\omega(c)}$ is uncomputable.
\end{theorem}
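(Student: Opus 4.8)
The plan is to let $F$ be exactly the convergent CA built in this subsection and to exhibit a computable initial configuration $c$ whose limit encodes the halting set. First I would record that $F$ is convergent: if $e$ ever occurs in an orbit it spreads and the orbit converges to the uniform $e$-configuration, while for a valid configuration the global head either changes state finitely often, where Lemma~\ref{lem:nochangeconv} applies, or infinitely often, where Lemma~\ref{lem:infchangeconv} applies. Hence $F^\omega(c)$ is well defined for every $c$.

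Next I would describe $c$, which is to be computable. On its right half it carries a finite Turing zone with a single global head, initialised so that the head performs the dovetailed simulation of all Turing machines, detecting the halt of each machine exactly once. On its left half it carries an infinite family of zigzags: the $i$-th one sits at a predetermined computable position $p(i)$, has width $w_i$, and starts with its zigzag counter set to the maximal value its length permits. The widths are chosen increasing and so that the address snake produced when machine $i$ halts has length exactly $2w_i$; the map $p$ is increasing and sufficiently sparse (so that the successive connected components of $1$s remain well formed). With this layout a snake of length $2w_i$ moving left crosses every zigzag $p(k)$ with $k>i$, since each has width $w_k>w_i$, and stops at $p(i)$, where the check at state $\mrk_2$ (address value equal to twice the width) triggers a $1$ on the marker component and a switch to $\retur$. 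Because fewer than $i$ zigzags lie to the left of $p(i)$, the zigzag at $p(i)$ is crossed at most $i$ times, so taking $w_i\geq i$ ensures its counter never reaches $0$ on a legitimate crossing; thus no legitimate run from $c$ produces $e$ and $c$ is valid.

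I would then verify the marking correspondence. The marker component is write-once, a cell passing from $0$ to $1$ only at the $\mrk_2$ check and never being reset, so any mark persists to the limit. Using the correctness of the Turing simulation together with the addressing mechanism, every machine $i$ that halts is eventually simulated to completion, at which point the global head interrupts the computation, builds the length-$2w_i$ snake during the $\prep_x$ phase, routes it leftwards during the $\mrk_x$ phase, and writes the mark at $p(i)$; conversely no snake is ever produced for a non-halting machine and the width-matching condition forbids marking a zigzag of the wrong width. Consequently, in $F^\omega(c)$ the marker bit at the designated cell of $p(i)$ equals $1$ if and only if machine $i$ halts.

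Finally I would conclude: if $F^\omega(c)$ were computable, then reading the marker bit at $p(i)$ would decide the halting of machine $i$, contradicting the undecidability of the halting problem, so $F^\omega(c)$ is uncomputable. The step I expect to be the main obstacle is the joint choice of $p(i)$, $w_i$ and the initial counters: one must simultaneously keep $c$ valid (no legitimate run generating $e$, which forces enough spacing between zigzags and large enough counters) and guarantee that every halting machine's snake reaches its target without exhausting a counter, which is precisely what the bound ``$p(i)$ crossed at most $i$ times'' together with $w_i\geq i$ secures.
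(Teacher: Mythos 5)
Your proposal is correct and follows essentially the same route as the paper: convergence via Lemmas~\ref{lem:nochangeconv} and~\ref{lem:infchangeconv}, a computable initial configuration laying out zigzags of increasing width at computable positions with counters initialized to their widths, and the observation that the zigzag at $p(i)$ is crossed at most $i$ times so no counter is exhausted, yielding a write-once marker component that encodes the halting set. The ``main obstacle'' you flag (the joint choice of $p(i)$, widths and counters, plus enough delay before launching the $i$-th marking) is exactly what the paper resolves by the explicit dovetailing program that waits $2n$ steps per round so the Turing zone has drifted past position $i(i-1)$ before the address-$2i$ snake is emitted.
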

\begin{proof}
  Lemma~\ref{lem:nochangeconv} and~\ref{lem:infchangeconv} show that the construction technique above always produces convergent CA whatever the behavior of the Turing component. Let's define $F$ as the above construction applied to a Turing machine that does the following when started on the empty tape:
  \begin{center}
    \small\tt
    \begin{itemize}
    \item ${\mach\leftarrow\emptyset}$
    \item ${n\leftarrow0}$
    \item loop forever:
      \begin{itemize}
      \item ${n\leftarrow n+1}$
      \item wait $2n$ steps
      \item ${\mach\leftarrow\mach\cup\{n\}}$
      \item for each ${i\in\mach}$ do:
        \begin{itemize}
        \item simulate $n$ steps of machine $i$
        \item if $i$ has halted during the simulation then
          \begin{enumerate}
          \item ${\mach\leftarrow\mach\setminus\{i\}}$
          \item place the Turing head at distance $4i$ from the left boundary of the computation zone
          \item turn into special interruption state to launch a marking process
          \end{enumerate}
        \end{itemize}
      \end{itemize}
    \end{itemize}
  \end{center}
  This Turing machine has the following properties:
  \begin{itemize}
  \item it launches a marking process with address value\footnote{The Turing head placement is $4i$ cells away from the left boundary of the computation zone when the interruption launches the marking process; then the preparing stage produce a sequence of 1s of length $2i$.} $2i$ if and only if machine $i$ halts on empty input, and it launches it at most once;
  \item the simulation of machine $i$ starts after strictly more than ${i(i-1)}$ steps, therefore the marking process with address value $i$ cannot be launched before time step ${i(i-1)+1}$.
  \end{itemize}
  Now consider the following computable configuration $c$:
  \begin{itemize}
  \item the Turing component corresponds to an empty tape with machine head in initial state at position $0$;
  \item the marker component is uniformly $0$;
  \item the global head component is in state $\rightarrow$ with head at position $0$;
  \item the addressing component is the concatenation for all $i$ of a zigzag of length $i$ with counter at value $i$ followed by $i$ cells of simple belt, concretely:
    \[{}^\omega 0\ (0,0,0,1)\ 0\ (0,0,0,1)^2\ 0^2\ \cdots\ (0,0,0,1)^i\ 0^i\ \cdots\]
    where the first zigzag (of length 1) is at position $0$.
  \end{itemize}
  We claim that the marker component of ${F^\omega(c)}$ contains a $1$ at position ${p(i)=i+\sum_{0<j<i}2j}$ (the rightmost position of zigzag of length $i$) if and only if machine $i$ halts on empty input. This shows that ${F^\omega(c)}$ is uncomputable and concludes the theorem. To prove the claim, it is sufficient to check that each marking process launched by the Turing computation described above works properly and maintains the validity of the configuration. 
  Starting from $c$, the definition of $F$ ensures that the marking process with address value $i$ will be launched (if ever launched) after more than ${i(i-1)}$ Turing steps and therefore in a configuration where the left boundary of the Turing zone will be at a position ${p>i(i-1)}$ (because the Turing zone moves one cell to the right at each Turing step). This guarantees that the $\prep_x$ stage of the marking process will work properly since at such a position zigzags are of length greater than $i$, and the global head will therefore enter or leave a zigzag at most once before turning into state $\mrk_x$. Then, during the $\mrk_x$ stage, the zigzag counter will never be $0$ when the global head enters because in $c$ the zigzag of length $k$ has only ${k-1}$ zigzags to its left and will therefore be crossed (or entered in) at most ${k}$ times in total. Finally the $\retur$ stages poses no problem.
\end{proof}

\subsection{2D+ freezing CA}
\label{sec:freezatam}

Let us first remark that in dimension 2 and more, it is very easy to produce uncomputable limit fixed point from computable initial configurations.

\begin{proposition}
  Let $F$ be the freezing CA defined over alphabet $Q=\{0,1\}$ by: 
  \[F(c)_z = \min(c_z,c_{z+(0,1)}).\]
  There is a computable configuration $c$ such that ${F^\omega(c)}$ is uncomputable.
\end{proposition}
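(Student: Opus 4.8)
The plan is to first compute the limit map $F^\omega$ in closed form, and then to encode the complement of the halting problem into a single computable initial configuration whose limit reads it off.

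First I would record that, since $F$ is freezing for the order $0<1$ (indeed $F(c)_z=\min(c_z,c_{z+(0,1)})\le c_z$), it is convergent, so $F^\omega(c)$ is well defined. A straightforward induction on $t$ then gives the closed form
\[
F^t(c)_{(x,y)} = \min_{0\le i\le t} c_{(x,y+i)},
\]
and passing to the limit,
\[
F^\omega(c)_{(x,y)} = \min_{i\ge 0} c_{(x,y+i)}.
\]
In particular $F^\omega(c)_{(x,y)}=1$ if and only if $c_{(x,y')}=1$ for every $y'\ge y$, while $F^\omega(c)_{(x,y)}=0$ as soon as a single $0$ occurs in column $x$ at some height $\ge y$. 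This is the structural point I would stress: the limit value of one cell depends on infinitely many cells of $c$, which is exactly what lets a computable $c$ have an uncomputable limit.

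Next I would encode the halting problem. Fixing an effective enumeration $(M_x)_{x\in\N}$ of Turing machines, I define
\[
c_{(x,y)} =
\begin{cases}
0 & \text{if } x\ge 0,\ y\ge 0 \text{ and } M_x \text{ halts on empty input at exactly step } y,\\
1 & \text{otherwise.}
\end{cases}
\]
I would then check that $c$ is computable: to evaluate $c_{(x,y)}$ it suffices, for $x,y\ge 0$, to simulate $M_x$ on the empty input for $y$ steps and test whether it halts precisely at step $y$, every other cell being set to $1$. By determinism each column $x$ carries at most one $0$ at height $\ge 0$, located at the halting step of $M_x$ when it exists; hence column $x$ contains a $0$ at some height $\ge 0$ if and only if $M_x$ halts.

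Finally I would read off the bottom row. By the limit formula, $F^\omega(c)_{(x,0)}=1$ if and only if $c_{(x,y')}=1$ for all $y'\ge 0$, i.e.\ if and only if $M_x$ does not halt on empty input. Thus $\{x\in\N : F^\omega(c)_{(x,0)}=1\}$ is the complement of the halting set and is not computable, so $F^\omega(c)$ is uncomputable. There is no genuine obstacle once the limit has been identified; the only delicate point, and the one I would highlight, is keeping $c$ computable while its limit depends on an entire infinite column. This is precisely what the ``halts at exactly step $y$'' encoding secures: that predicate is decidable pointwise, whereas the predicate ``halts at some step $\ge y$'' that the limit effectively evaluates is not.
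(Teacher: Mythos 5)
Your proposal is correct and follows essentially the same route as the paper: identify the limit as the column-wise infimum and plant a computable witness of halting somewhere in each column above the target cell so that $F^\omega(c)_{(x,0)}$ decides the halting set. The only (immaterial) difference is that the paper places a $0$ at every height $j$ with ``$M_x$ halts in fewer than $j$ steps'' whereas you place a single $0$ at the exact halting step; both configurations are computable and yield the same limit.
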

\begin{proof}
  Consider the configuration $c$ defined by: 
  \[c(i,j) =
  \begin{cases}
    0 &\text{ if $j>0$ and machine $i$ halts on the empty input in less than $j$ steps}\\
    1 &\text{ else.}
  \end{cases}
  \]
  It is straighforward to check that ${F^\omega(c)_{(i,0)} = 0}$ if and only if machine $i$ halts on the empty input.
\end{proof}

We can actually obtain much stronger and meaningful results on the uncomputability of limit fixed-points for freezing CA by requiring a finite initial configuration.
Let us recall that an aTAM system is \emph{directed} if there is a unique terminal assembly. Taking the notations of example~\ref{ex:atam}, it means in particular that this unique terminal assembly is ${F_R^\omega(c_0)}$. Then the following result on the computational power of directed aTAM systems is directly related to our concern (we use again notations of example~\ref{ex:atam} to state the theorem).

\begin{theorem}[Main construction of \cite{LathropLPS11}]
  \label{thm:hardatam}
  There exists a computable function ${f:\N\rightarrow\N}$ such that for any recursively enumerable set $A$ there exists a directed aTAM whose unique terminal assembly $c$ is such that for any ${n\in\N}$: 
  \[c(f(n),0) =
    \begin{cases}
      t&\text{ if }n\in A\\
      \epsilon&\text{ if }n\not\in A\\
    \end{cases}
  \]
  where $t$ is some fixed tile of the system.
\end{theorem}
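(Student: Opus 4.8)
The plan is to derive the statement from the Turing-universality of the aTAM together with a carefully engineered, locally deterministic master construction that dovetails infinitely many computations while leaving the distinguished output positions untouched unless a halting signal arrives. The crucial constraint to respect throughout is that $f$ must be fixed once and for all, \emph{before} the set $A$ is chosen: all the dependence on $A$ has to be pushed into the seed and into a fixed universal machine, never into the geometry that determines where the verdicts get written. Since $A$ is recursively enumerable, I would fix a machine semi-deciding $A$, so that ``$n\in A$'' becomes ``the computation on input $n$ halts''.

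First I would recall the standard simulation of a Turing machine by a tile assembly system (as in Winfree's thesis): the space-time diagram of a machine can be laid out row by row in a growing region, each new row attaching above the previous one via glues that encode the tape contents and head state. The essential point is that this simulation can be made \emph{locally deterministic} --- at every fillable position at most one tile type can ever attach and each tile's output glues are determined by its input glues --- so the terminal assembly of a simulation region is unique and is exactly the (possibly infinite) computation history. Next I would build a dovetailing backbone: starting from the seed I grow, along a fixed spine, a structure that successively allocates for each $n$ a dedicated region running a fixed universal machine $U$ on input $\langle e_A,n\rangle$, where the index $e_A$ of the semi-decider for $A$ is the only datum read from the seed. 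The position of the $n$-th region, its growth direction, and the single cell on the line $y=0$ reserved for its verdict are determined purely combinatorially, independently of $A$; I would then \emph{define} $f(n)$ to be precisely that reserved axis cell, which makes $f$ computable and uniform. When the $n$-th simulation reaches a halting state a signal is emitted that travels to cell $f(n)$ and causes the fixed tile $t$ to attach there; if the simulation never halts no signal is emitted and $f(n)$ keeps the empty state $\epsilon$ forever. Because the aTAM only ever adds tiles, in the limit $F_R^\omega(c_0)$ the cell $f(n)$ carries $t$ if and only if the $n$-th signal is ever produced, i.e. if and only if $n\in A$.

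The main obstacle, and the real heart of the construction, is \textbf{directedness}: I must guarantee a \emph{unique} terminal assembly despite hosting infinitely many computations, infinitely many of which may run forever. The robust way to enforce this is to make the \emph{entire} tile set locally deterministic, since local determinism implies that the terminal assembly is independent of the order of attachment. I would therefore verify, rule by rule, that the spine, every simulation region, the halting signals and the marker placement never offer a competing attachment, and in particular that each non-halting region grows into fresh, disjoint territory so that it can neither block the spine nor ever deposit a tile at any $f(m)$. A secondary, intertwined obstacle is \emph{non-interference}: I would choose the allocation geometry with enough spacing that the unbounded growth of the non-halting regions stays confined, so that for every $m$ the cell $f(m)$ receives $t$ exactly when the corresponding signal arrives and otherwise remains $\epsilon$ in the unique terminal assembly. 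Establishing these two invariants --- local determinism of every rule and geometric non-interference of the infinitely many running computations --- rather than the universality itself, is where the genuine work lies.
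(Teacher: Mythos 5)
The paper gives no proof of this statement: it is imported verbatim from \cite{LathropLPS11}, so there is no internal argument to compare against, only the cited construction. Your sketch faithfully reconstructs the architecture of that construction --- a dovetailed simulation of a fixed machine semi-deciding $A$, output cells at positions $f(n)$ fixed in advance and spaced widely enough (quadratically in the original) that halting signals and the unboundedly growing non-halting computations never collide, and directedness enforced by making every attachment locally deterministic --- and you correctly identify the last two invariants, rather than Turing-universality itself, as the genuine content of the proof.
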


From Proposition~\ref{prop:maxlimitcomplexity}, the above result is optimal since, in directed aTAM systems and the corresponding freezing CA, each state is either minimal or maximal with respect to the order, so sets $\chi_q(F_R^\omega(c))$ are at most at level one of the arithmetical hierarchy. However for freezing CA in general we can do more as shown below.

\begin{corollary}
  \label{cor:hardlimit2Dfreezing}
  There exists a 2D freezing CA $F$, a finite configuration $c$, and a state $q$ such that  ${\chi_q(F^\omega(c))}$ is neither recursively enumerable nor co-recursively enumerable.  
\end{corollary}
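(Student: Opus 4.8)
The plan is to realize $\chi_q(F^\omega(c))$ as a disjoint union of a recursively enumerable set and a co-recursively enumerable set living over two disjoint recursive address spaces; such a set is in general neither r.e. nor co-r.e. Concretely, fix an r.e. non-recursive set $A$ (say the halting set) and two computable injections $g,h:\N\rightarrow\Z^2$ with disjoint recursive ranges, and aim for
\[\chi_q(F^\omega(c)) = \{g(n):n\in A\}\cup\{h(n):n\notin A\}.\]
This set cannot be r.e.: intersecting it with the recursive range of $h$ and pulling back by $h^{-1}$ would make $\overline A$ r.e.; symmetrically, if it were co-r.e., intersecting its complement with the range of $g$ would again make $\overline A$ r.e. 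Either way $A$ would be recursive, a contradiction. This already explains why we must go beyond Proposition~\ref{prop:maxlimitcomplexity}: the state $q$ will be neither $\preceq$-minimal nor $\preceq$-maximal, so neither bound of that proposition applies.

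First I would obtain the detection machinery from Theorem~\ref{thm:hardatam}: applied to $A$ it yields a directed aTAM, hence by Example~\ref{ex:atam} a freezing CA growing from a single finite seed whose assembly certifies $n\in A$ by placing a tile at a computable address. I would then augment this system into a genuine freezing CA $F$ with a three-level suborder $q'\prec q\prec\epsilon$ (all aTAM tiles kept below $\epsilon$ as in Example~\ref{ex:atam}, with $q'$ minimal), performing two reactions driven by the same enumeration of $A$. For the r.e. contribution, each cell $g(n)$ starts blank ($\epsilon$) and is turned into $q$, a single decrease $\epsilon\rightarrow q$, exactly when the assembly certifies $n\in A$. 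For the co-r.e. contribution, a growth process lays the default state $q$ over every address $h(n)$, again one decrease $\epsilon\rightarrow q$, and whenever the enumeration certifies $n\in A$ a signal is routed to $h(n)$ turning it from $q$ to $q'$, a second decrease. Thus in the limit $g(n)$ carries $q$ iff $n\in A$ and $h(n)$ carries $q$ iff $n\notin A$, while no other cell ever takes the value $q$; hence $\chi_q(F^\omega(c))$ is exactly the target set. The initial configuration $c$ is finite (a single seed plus the generators of the two growth processes), $F$ is freezing by construction, hence convergent, so $F^\omega(c)$ is well defined.

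The main obstacle is to realize the co-r.e. contribution while keeping $F$ freezing: every certification $n\in A$ must flip the value at the far-away cell $h(n)$, so infinitely many retraction signals are emitted, and I must guarantee that no fixed cell is rewritten infinitely often (each cell may change its state only a bounded number of times). This is where dimension two is essential and where I would reuse the routing ideas underlying \cite{LathropLPS11} together with the drifting computation zone of the proof of Theorem~\ref{thm:uncomputablelimit}: the active assembly and the signal lanes perpetually move away, so any given cell is eventually frozen and each cell $h(n)$ is visited only by its own signal, undergoing just the two changes $\epsilon\rightarrow q\rightarrow q'$. Once this bounded-change routing is in place, all transitions are decreasing for $\preceq$, so $F$ is freezing, and the identification $\chi_q(F^\omega(c))=\{g(n):n\in A\}\cup\{h(n):n\notin A\}$ completes the proof.
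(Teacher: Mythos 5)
Your recursion-theoretic target is exactly right: a set of the form (r.e.\ part) $\cup$ (co-r.e.\ part) over disjoint recursive address sets, equivalently a difference of r.e.\ sets, is neither r.e.\ nor co-r.e., and your argument for that is correct. The gap is in the realization. The entire difficulty of the corollary is concentrated in the step you leave as a sketch: the ``retraction'' mechanism that demotes $h(n)$ from $q$ to $q'$ whenever $n\in A$ is certified. In a freezing CA every cell changes state at most $|Q|-1$ times, so each cell of the plane may lie on at most a constant number of signal routes; since the computation zone drifts arbitrarily far from the fixed targets $h(n)$ and infinitely many certifications may occur at unbounded times, you must design a family of pairwise almost-disjoint routes from an unboundedly distant source to prescribed addresses, compatible with the default growth that first writes $q$ at every $h(n)$, and free of race conditions between growth and retraction. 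Saying that you would ``reuse the routing ideas underlying \cite{LathropLPS11} together with the drifting computation zone of Theorem~\ref{thm:uncomputablelimit}'' does not discharge this: the latter construction is a 1D \emph{convergent} CA that is explicitly not bounded-change, so it cannot be imported into a freezing CA, and the former only ever performs the monotone ``turn on'' half of what you need. As written, the key construction is asserted rather than proved.

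The paper avoids this engineering entirely. It applies Theorem~\ref{thm:hardatam} twice, to the two r.e.\ sets ${A_1=\{n:\pi_1(\phi(n))\in H\}}$ and ${A_2=\{n:\pi_2(\phi(n))\in H\}}$ for a computable pairing $\phi$, takes the \emph{product} CA ${F_1\times F_2}$ with the product order, and chooses the product state ${q=(t_1,\epsilon_2)}$. Then ${\chi_q(F^\omega(c))}$ at the addresses ${(f(n),0)}$ encodes ${H\times\overline{H}}$, a difference of r.e.\ sets, because the first coordinate contributes an r.e.\ condition (a tile appears) and the second a co-r.e.\ one (no tile ever appears); $q$ is neither $\preceq$-minimal nor $\preceq$-maximal, so Proposition~\ref{prop:maxlimitcomplexity} is not contradicted. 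If you want to keep your two-address-space formulation, the same trick applies: you do not need a new state $q'$ or any retraction, only the observation that ``tile present'' and ``tile absent'' in two independent components can be combined into a single product state.
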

\begin{proof}
  Fix some computable bijection ${\phi:\N\rightarrow\N^2}$ and consider the two following recursively enumerable sets:
    \begin{align*}
      A_1 &= \{n : \pi_1(\phi(n))\in H\},\\
      A_2 &= \{n : \pi_2(\phi(n))\in H\}
    \end{align*}
    where $\pi_1$ and $\pi_2$ are the projections on first and second components and $H$ is the halting set. From Theorem\ref{thm:hardatam} we get two directed aTAM to which correspond two freezing CA $F_1$ and $F_2$ whose limit fixed-point starting from the respective seed configurations $c_0^1$ and $c_0^2$ verify: 
    \[
      F_i^\omega(c_0^i)_{(f(n),0)} =
      \begin{cases}
        t_i&\text{ if }n\in A_i\\
        \epsilon_i&\text{ if }n\not\in A_i\\
      \end{cases}
    \]
    for ${i=1}$ or $2$. Consider now the freezing CA ${F=F_1\times F_2}$ (with the product order), the configuration $c$ whose first component is $c_0^1$ and second component is $c_0^2$, and the state ${q=(t_1,\epsilon_2)}$. One can check that ${(f(n),0)\in \chi_q(F^\omega(c))}$ if and only if ${\phi(n)=(i,j)}$ where ${i\in H}$ and ${j\not\in H}$. The corollary follows.
\end{proof}

\section{Recap of results}
\label{sec:recap}

\newcommand\lbd[1]{{\color{red}\textbf{#1}}}
\newcommand\ubd[1]{{\color{blue}\textit{#1}}}

We give in the following table a synthesis of results showing differences between freezing, bounded-change and convergent CA together with dimension sensitiveness. The table mixes different kinds of results, among which: \lbd{lower bounds L} which must be read as ``there exists $F$ such that property L holds'', and \ubd{upper bounds U} to be read as ``for all $F$ property U holds''.

\newcommand\resref[2]{
  \begin{tabular}[t]{c}
    #1\\
    {\small\it (#2)}
  \end{tabular}
}

\newcommand\upperbd[2]{\resref{\ubd{#1}}{#2}}
\newcommand\lowerbd[2]{\resref{\lbd{#1}}{#2}}

\begin{center}
  \begin{tabular}{r|c|c|c}
    &Freezing&Bounded change&Convergent\\
    \hline
    Membership & \resref{$\PTIME$}{Fact~\ref{fact:freezingdecidable}} & \resref{Undecidable}{Theorem~\ref{thm:basicundecidable}} & \resref{Undecidable}{Theorem~\ref{thm:basicundecidable}}\\
    \hline
    $\PRED{F}$ in 1D & \upperbd{$\NL$}{Proposition~\ref{prop:bclogspace}} & \upperbd{$\NL$}{Proposition~\ref{prop:bclogspace}} & \lowerbd{P-complete}{Proposition~\ref{prop:hard1Dconvergent}}\\
    \hline
    $\PRED{F}$ in 2D & \lowerbd{P-complete}{Proposition~\ref{prop:2Dhardfreezing}} & \lowerbd{P-complete}{Proposition~\ref{prop:2Dhardfreezing}} & \lowerbd{P-complete}{Proposition~\ref{prop:2Dhardfreezing}}\\
    \hline
    $\COM{F}$ in 1D & \upperbd{$O(\log(n))$}{Theorem~\ref{thm:cc}} & \upperbd{$O(\log(n))$}{Theorem~\ref{thm:cc}} & \lowerbd{$\Omega(\sqrt(n))$}{Prop~\ref{prop:highccconvergent}}\\
    \hline
    \begin{tabular}[t]{r}
    $F^\omega(c)$ in 1D\\ $c$ computable
    \end{tabular}& \upperbd{Computable}{Theorem~\ref{thm:computablelimits}} & \upperbd{Computable}{Theorem~\ref{thm:computablelimits}} & \lowerbd{Uncomputable}{Theorem~\ref{thm:uncomputablelimit}}\\
    \hline
    \begin{tabular}[t]{r}
    ${\chi_q(F^\omega(c))}$ in 2D\\ $c$ finite
    \end{tabular}& \lowerbd{Neither r.e. nor co-r.e.}{Corollary~\ref{cor:hardlimit2Dfreezing}} & \lowerbd{Neither r.e. nor co-r.e.}{Corollary~\ref{cor:hardlimit2Dfreezing}} &\lowerbd{Neither r.e. nor co-r.e.}{Corollary~\ref{cor:hardlimit2Dfreezing}} 
  \end{tabular}
\end{center}

\bibliography{refs}
\bibliographystyle{splncs03}

\end{document}